\theoremstyle{plain}
\newtheorem{theorem}{Theorem}[section]
\newtheorem{proposition}[theorem]{Proposition}
\newtheorem{lemma}[theorem]{Lemma}
\newtheorem{corollary}[theorem]{Corollary}
\theoremstyle{definition}
\newtheorem{definition}[theorem]{Definition}
\newtheorem{observation}[theorem]{Observation}
\theoremstyle{remark}
\newtheorem{remark}[theorem]{Remark}
\newcommand{\kibitz}[2]{\ifnum\Comments=1{\color{#1}{#2}}\fi}
\newcommand{\diadd}[1]{\kibitz{black}{#1}}
\newcommand{\pdadd}[1]{\kibitz{black}{#1}}
\newcommand{\kibitzA}[2]{\ifnum\CommentsA=1{\color{#1}{#2}}\fi}
\definecolor{english}{rgb}{0.0, 0.5, 0.0}
\DeclareMathOperator*{\argmax}{arg\,max}
\title{Principal-Agent Reinforcement Learning:\\
Orchestrating AI Agents with Contracts}
\author{Dima Ivanov\thanks{Israel Institute of Technology, Haifa, Israel. Email: \texttt{divanov.ml@gmail.com}} \And Paul D{\"{u}}tting\thanks{Google, Zurich, Switzerland. Email: \texttt{duetting@google.com}} \And Inbal Talgam-Cohen\thanks{Israel Institute of Technology, Haifa, Israel. Email: \texttt{italgam@cs.technion.ac.il}} \And Tonghan Wang\thanks{Harvard University, Cambridge, USA. Email: \texttt{{twang1@g.harvard.edu}}} \And David C. Parkes\thanks{Harvard University, Cambridge, USA. Email: \texttt{{parkes@eecs.harvard.edu}}}}
\begin{document}

\maketitle

\begin{abstract} 
The increasing deployment of AI is shaping the future landscape of the internet, which is set to become an integrated ecosystem of AI agents. Orchestrating the interaction among AI agents necessitates decentralized, self-sustaining mechanisms that harmonize the tension between individual interests and social welfare. In this paper we tackle this challenge by synergizing reinforcement learning with principal-agent theory from economics. Taken separately, the former allows unrealistic freedom of intervention, while the latter struggles to scale in sequential settings. Combining them achieves the best of both worlds. We propose a framework where a principal guides an agent in a Markov Decision Process (MDP) using a series of contracts, which specify payments by the principal based on observable outcomes of the agent's actions. We present and analyze a meta-algorithm that iteratively optimizes the policies of the principal and agent, showing its equivalence to a contraction operator on the principal’s Q-function, and its convergence to subgame-perfect equilibrium. We then scale our algorithm with deep Q-learning and analyze its convergence in the presence of approximation error, both theoretically and through experiments with randomly generated binary game-trees. Extending our framework to multiple agents, we apply our methodology to the combinatorial Coin Game. Addressing this multi-agent sequential social dilemma is a promising first step toward scaling our approach to more complex, real-world instances.
\end{abstract}

\doparttoc
\faketableofcontents

\section{Introduction}\label{sec:introduction}
The deployment of AI agents is becoming increasingly prevalent across every facet of society \cite[see, e.g., the survey of][]{WangSurvey24}. Recent advancements have enabled large language models to leverage external tools via function-calling capabilities \cite[e.g.][]{srinivasan2023nexusraven,STRIDE24}. For example, models like GPT-4 can now interface with databases, APIs, and other AI agents, significantly broadening their range of applications. Looking ahead, we anticipate the future internet to connect not only traditional computing devices but also everyday objects powered by AI agents. These developments mark a shift toward more integrated agent ecosystems, and necessitate \emph{scalable} mechanisms for orchestrating efficient interactions among AI agents in a decentralized and heterogeneous environment.

However, learning efficient interactions is inherently complex. AI agents are typically hosted by different entities, and the parties involved in these interactions have individual interests, leading to a natural tension between their incentives and long-term overall welfare. Sequential social dilemmas (SSDs) \citep{leibo2017multi} capture this tension, highlighting how optimization of each agent's own reward in a Markovian environment may result in suboptimal global outcomes.
Multi-agent reinforcement learning (MARL) attempts to address this challenge using methods like \emph{reward shaping}, which adjusts the agents' reward functions to align their incentives.
While effective at optimizing global objectives, these methods typically assume centralized control over each agent's training procedure, providing unrealistic freedom of intervention to their designer, as well as overlooking the costs of said interventions.

This discrepancy prompts us to look to the economics literature, where complex incentive structures between independent self-interested parties are focal. One of the primary directions in this line of work is the celebrated \emph{principal-agent theory} (e.g., \cite{holmstrom1979,GrossmanH83,rogerson85,spear87}). It distinguishes between agents, who directly take costly actions to complete tasks, and the principal, who receives rewards from the execution of those tasks. The main tool of this theory are \emph{contracts}, which define payments from the principal to the agent, based on observable outcomes of the agents' actions. Conceptually, this theory fits well with the requirements of modeling interaction among AI agents: Some applications may explicitly involve a principal -- for example, a navigation app benefiting from drivers exploring new routes~\citep{ben2023principal}. In other applications, including general frameworks like SSDs, a principal can serve as a helpful abstraction that represents social objectives and long-term global welfare.

%
Contract theory was recognized with the Nobel Prize in Economics in 2016, and has recently garnered interest from the computer science community
(e.g.,~\cite{HoSV16,DuttingRT19,defk2023}). However, the learning of contracts has largely been confined to single-round settings or relies on strict assumptions about agent capabilities \citep{ZhuBYWJJ23,WangDITP23,GuruganeshKS24,scheid2024incentivized}. These limitations leave contract theory inadequately equipped to handle the complexities of future AI agent ecosystems. Motivated by this gap, we aim to synergize reinforcement learning and contract theory---we propose learning contracts as a scalable mechanism to align agents' incentives, and employ (multi-agent) reinforcement learning to extend contract learning to general, sequential settings.

\textbf{Our Contribution}. We begin by formulating a general framework for \emph{principal-agent reinforcement learning} with a single agent (Section~\ref{sec:model}). In this framework, the agent learns a policy for an MDP, and the principal learns to guide the agent using a series of contracts. Importantly, we explicitly model the (usually misaligned) preferences of the principal and the agent.

We first 
examine this problem from a purely economic perspective 
(\cref{sec:economic}) with full access to the MDP, thus alleviating the need for learning. We focus on the standard solution concept of \emph{subgame-perfect equilibrium} (SPE, Section~\ref{sub:equil}). 
We formulate 
a meta-algorithm (\cref{alg:both}) that iteratively optimizes the principal's and agent's policies in their respective MDPs with the other player's policy being fixed. 
We show that the meta-algorithm 
finds an SPE in a finite-horizon game in at most $T+1$ iterations, where $T$ is the horizon (\cref{th:abstract_algorithm}). 
We also show that the meta-algorithm can be viewed as iteratively applying a contraction operator to the principal's Q-function (\cref{th:contraction}), ensuring its monotonic improvement with each iteration.

Next, in Section~\ref{sec:learning}, we turn to the standard model-free RL setting where the MDP is unknown, and the policies are learned by sampling stochastic transitions and rewards through interacting with the MDP. We instantiate the meta-algorithm (see Algorithm~\ref{alg:single_agent}) by solving both principal's and agent's MDPs with (deep) Q-learning and apply it in a two-phase setup. 
First, we train the principal assuming centralized access to the agent's optimization problem (of maximizing its Q-function estimate given a contract in a state).
Then, we lift the assumption and validate the learned principal's policy against a black-box oracle agent, mimicking its execution in the real world.
We analyze this learning setup theoretically in the presence of approximation errors, including a tool for addressing potential deviations of the oracle agent through extra payments, which we call \emph{nudging}.
We present a host of experiments in a controlled environment with randomly generated binary game-trees, and verify that our algorithm approximates SPE well despite approximation errors and even without nudging.

Finally, in Section~\ref{sec:multi_agent}, we extend our model to multiple agents and identify \emph{sequential social dilemmas} (SSDs,~\cite{leibo2017multi}) as a natural testbed for principal-agent RL. 
%
%
Complementing the prior work that has approached SSDs through multi-agent RL and reward shaping, we adopt a more economic perspective by framing interventions into reward functions as payments made by an external, benevolent principal. The principal's objective is to maximize social welfare through \emph{minimal} intervention. 
We present extensive experiments with a prominent SSD, the Coin Game \citep{foerster2018learning}, where two agents (blue and a red) collect blue and red coins in a grid world. Collecting a coin of the matching color is five times as rewarding to an agent as collecting one of the opposite color, suggesting cooperation by collecting only coins of agents' respective colors.
This seemingly innocent grid world is highly combinatorial in terms of states (all possible agent and coin positions on a 7x7 grid), actions (all possible sequences across 50 timesteps), and agents (all possible opponent policies). The complexity is further elevated when adding the principal to the mix.
Through our two-phase experimental setup -- centralized training followed by black-box validation -- we verify the approximate convergence of our method to SPE in the Coin Game.
We also compare to a simpler, hand-coded payment scheme inspired by \citet{christoffersen2023get}, and observe that with the same amount of subsidy, this less-targeted intervention achieves a significantly lower welfare level.

\subsection{Related Work}\label{sec:related_work}
Our work fits into the wider literature on automated mechanism design \citep{conitzer2002complexity}, in particular approaches based on deep learning \citep{dutting2019optimal,WangDITP23} also known as differentiable economics \citep{dutting2024optimal}. Most closely related from this line of work, \citet{WangDITP23} consider stateless one-shot contracting problems and provide a network architecture for capturing the discontinuities in the principal's utility function. We differ from this work in our focus on sequential contracting problems
and the entailing unique challenges.

There is a number of algorithmic works that focus on repeated principal-agent interactions on MDPs, 
including
work on \emph{environment design} and \emph{policy teaching} \citep{zhang2008value,zhang2009policy,yuho2022,ben2023principal}. Our approach differs from these earlier works in several ways, including that we actively search for the best (unconstrained) equilibrium in the game between the principal and the agent through reinforcement learning. A closely related line of work, including \citet{gerstgrasser2023oracles}, is concerned with learning Stackelberg equilibria in general leader-follower games, including games on MDPs. Our work differs in its focus on SPE, which is the more standard equilibrium concept in dynamic contract design problems. Several works have studied repeated contract design problems from a no-regret online-learning perspective \citep{HoSV16,ZhuBYWJJ23,GuruganeshKS24,scheid2024incentivized}. However, these works are typically limited to stateless and/or non-sequential interactions.
A prominent exception is a contemporaneous study by \citet{wu2024contractual} that introduces a model of principal-agent MDPs nearly identical to ours, 
barring an important notational distinction of encoding outcomes as next states.\footnote{Our model is technically more general by allowing stochasticity in both the outcome and transition functions.} 
However, their and our studies pursue
orthogonal algorithmic developments: whereas they treat contract policies as arms of a bandit and minimize regret, we rely on deep RL to scale to large MDPs and multiple agents.

Starting with the work of \citet{leibo2017multi}, there is a huge literature on SSDs. Most closely related in this direction is the work by \citet{christoffersen2023get} on multi-agent RL and applications to SSDs. This work pursues an approach in which one of the agents (rather than the principal) proposes a contract (an outcome-contingent, zero-sum reward redistribution scheme), and the other agents can either accept or veto. They consider several SSDs and show how hand-crafted contract spaces strike a balance between generality and tractability, and can be an effective tool in mitigating social dilemmas. An important distinction 
of our work is that
we distinguish between principal and agent(s), and insist on the standard limited liability requirement from economics.
Furthermore, in our approach the principal \emph{learns} the conditions for payments, allowing it to utilize contracts in their full generality.
Also, our method has an interpretation as learning $k$-implementation \citep{monderer2003k}.
We provide additional details on the related literature in Appendix~\ref{app:related_work}.

\section{Problem Setup}\label{sec:model}

In this section, we first introduce the classic \pdadd{(limited liability)} contract design model of~\citet{holmstrom1979} and \citet{GrossmanH83} (see Section~\ref{sub:contract_design}), and then propose its extension to MDPs (Section~\ref{sub:PA-MDP}). We defer further extension to multi-agent MDPs to Section~\ref{sec:multi_agent_model}.

\subsection{The Static Hidden-Action Principal-Agent Problem}\label{sub:contract_design}

In a classic principal-agent problem, the principal wants the agent to invest effort in a task. The agent has a choice between several \emph{actions} $a \in A$ with different \emph{costs} $c(a)$, interpreted as effort levels. Each action stochastically maps to \emph{outcomes} $o \in O$ according to a distribution $\mathcal{O}(a)$, with higher effort levels more likely to result in rewarding outcomes, as measured by the principal's \emph{reward} function $r^p(o)$. By default, a rational agent would choose the cost-minimizing action. To incentivize the agent to invest effort, the principal offers a \emph{contract} $b$ prior to the action choice. Crucially, the principal is unable to directly view the agent's hidden action, so the contractual payments $b(o)$ are defined per outcome $o \in O$. Payments are always non-negative -- this is referred to as \emph{limited liability}. The principal seeks an \emph{optimal} contract: a payment scheme $b$ that maximizes the principal's $\mathbb{E}$[\emph{utility}], $\mathbb{E}_{o \sim \mathcal{O}(a)} [r^p(o) - b(o)]$, given that the agent best-responds with $a=\arg\max_{a'} \mathbb{E}_{o \sim \mathcal{O}(a')} [b(o)] - c(a')$.

If costs and outcome distributions are known, the optimal contract can be precisely found using Linear Programming (LP): for each action $a \in A$, find the contract that implements it (makes the agent prefer it to other actions) through a minimal expected payment, and then choose the best action to implement. Otherwise or if the LPs are infeasible, an approximation can be obtained by training a neural network on a sample of past interactions with the agent \citep{WangDITP23}.

\subsection{Hidden-Action Principal-Agent MDPs}\label{sub:PA-MDP}

In order to extend contract design to MDPs, we assume a principal-agent problem in each state of the MDP, and let the outcomes additionally define its (stochastic) transitioning to the next state. 
Formally, a \emph{hidden-action principal-agent MDP} is a tuple $\mathcal{M} = (S, s_0, A, B, O, \mathcal{O}, \mathcal{R}, \mathcal{R}^p, \mathcal{T}, \gamma)$. As usual in MDPs, $S$ is a set of states, $s_0 \in S$ is the initial state, and $A$ is a set of $n$ agent's actions $a \in A$. Additionally, $O$ is a set of $m$ outcomes $o \in O$, $B \subset \mathbb{R}_{\geq 0}^m$ is a set of principal's actions (contracts) $b \in B$, and $b(o)$ is the $o$-th coordinate of a contract (payment). In each state, the outcome is sampled based on the agent's action from a distribution $\mathcal{O}(s, a)$, where $\mathcal{O}: S \times A \rightarrow \Delta(O)$ is the outcome function. Then, the agent's reward is determined by the reward function $\mathcal{R}(s, a, b, o) = r(s, a) + b(o)$ for some $r(s, a)$. Likewise, the principal's reward is determined by $\mathcal{R}^p(s, b, o) = r^p(s, o) - b(o)$ for some $r^p(s, o)$; note that the agent's action is hidden and thus $\mathcal{R}^p$ does not explicitly depend on it. Based on the outcome, the MDP transitions to the next state $s' \sim \mathcal{T}(s, o)$, where $\mathcal{T}: S \times O \pdadd{\rightarrow} \Delta(S)$ is the transition function.
Finally, $\gamma \in [0, 1]$ is the discount factor.
For an example of a principal-agent MDP, see Figure~\ref{fig:example}.
In the main text, we focus on MDPs with a finite time horizon $T$. 
We assume w.l.o.g.~that each state is uniquely associated with a time step (see \cref{app:proof_abstract}).

We analyze the principal-agent MDP as a \emph{stochastic game} $\mathcal{G}$ (can be seen as \emph{extensive-form} when the horizon is finite), where two players maximize their long-term payoffs.
The game progresses as follows. At each timestep $t$, the principal observes the state $s_t$ of the MDP and constructs a contract $b_t \in B$ according to its \emph{policy} $\rho: S \rightarrow \Delta(B)$. Then, the agent observes the pair $(s_t, b_t)$ and chooses an action $a_t \in A$ according to its policy $\pi: S \times B \rightarrow \Delta(A)$. After this, the MDP transitions, and the interaction repeats. Both players maximize their \emph{value} functions:
the principal's value function, $V^\rho(\pi)$, of a policy $\rho$ given the agent's policy $\pi$ is defined in a state $s$ by $V^\rho(s \mid \pi) = \mathbb{E} [\sum_t \gamma^t \mathcal{R}^p(s_t, b_t, o_t) \mid s_0=s]$; likewise, the agent's value function $V^\pi(\rho)$ is defined in a state $s$ and given a contract $b$ by $V^\pi(s, b \mid \rho) = \mathbb{E} [\sum_t \gamma^t \mathcal{R}(s_t, a_t, b_t, o_t) \mid s_0=s, b_0=b]$.
Players' \emph{utilities} in the game are their values in the initial state.
Additionally, define the players' \emph{Q-value} functions $Q^\rho(\pi)$ and $Q^\pi(\rho)$ by $Q^\rho(s, b \mid \pi) = V^\rho(s \mid b_0 = b, \pi)$ and $Q^\pi((s, b), a \mid \rho) = V^\pi(s, b \mid a_0 = a, \rho)$. 

A special case of the principal-agent MDP trivializes hidden actions by making the outcome function deterministic and bijective -- essentially the outcome reveals the action to the principal; 
the resulting \emph{observed-action} model is similar to \citet{ben2023principal}. For an explicit comparison of the two models with each other and with standard MDPs, see \cref{app:proof_comparison_mdps}. 

\begin{figure}[t]
    \centering
    \includegraphics[width=0.47\linewidth]{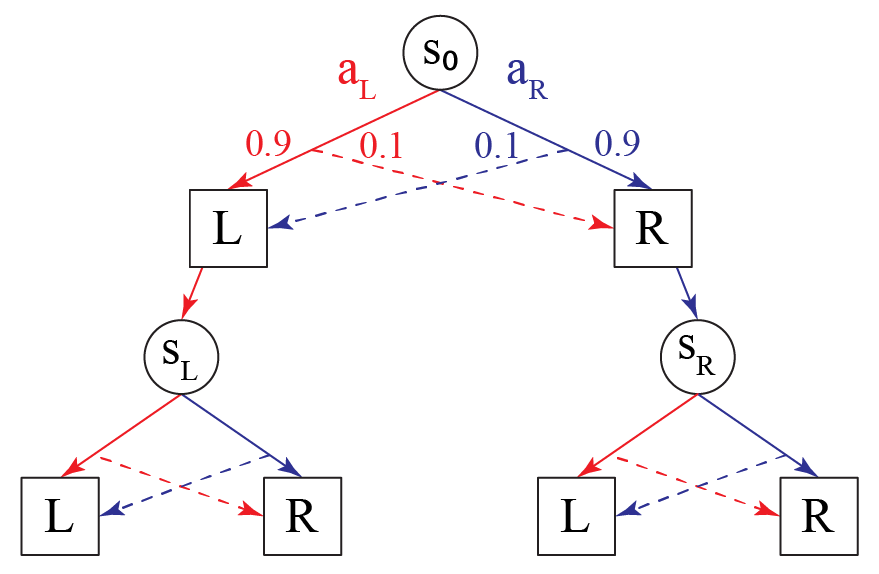}
    \caption{
    Example of a principal-agent MDP with three states $S = \{s_0,s_L,s_R\}.$ In each state, the agent can take one of two actions: noisy-left $a_L$, which is costly and leads to outcomes $L$ and $R$ with probabilities $0.9$ and $0.1$, and noisy-right $a_R$, which is free and has the likelihood of $L$ and $R$ reversed. The principal's rewards in any state $s \in S$ for outcomes $L,R$ are $r^p(s,L)= \frac{14}{9},r^p(s,R)=0$, while those of the agent for the actions are $r(s,a_L)=-\frac{4}{5},r(s,a_R)=0$. For analysis, see \cref{app:proof_example_revisited}.}
    \label{fig:example}
\end{figure}

\section{Purely Economic Setting}
\label{sec:economic}
In this section, we define our solution concept for principal-agent stochastic games (Section~\ref{sub:equil}) and introduce a meta-algorithm that finds this solution (Section~\ref{sub:algo1}). We assume full access to the MDP model (including transition and reward functions) and defer the learning setting to the next section. We focus on finite-horizon MDPs; we analyze infinite horizon in \cref{app:proof}.

\subsection{Subgame-Perfect Equilibrium (SPE)}
\label{sub:equil}

In what follows let $\mathcal{G}$ be a principal-agent stochastic game. Let a \emph{subgame} of $\mathcal{G}$ in state $s \in S$ be a game $\mathcal{G}'$ defined by replacing $s_0$ with $s$, and $S$ with a subset of states that can be reached from $s$.

\begin{observation}\label{obs:mdps}
    \diadd{Fixing one player's policy in $\mathcal{G}$ defines a (standard) MDP for another player. In particular, a principal's policy defines the \emph{agent's MDP} by modifying the reward function through contracts; likewise, an agent's policy defines the \emph{principal's MDP} by modifying the transition and reward functions through the agent's responses to contracts. For exact formulations of these MDPs, see \cref{app:proof_mdps}. The optimal policies in both MDPs can be assumed w.l.o.g.~to be deterministic (in single-agent setting), and can be found with any suitable dynamic programming or RL algorithm~\citep{sutton2018reinforcement}}.
\end{observation}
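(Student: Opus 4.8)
The plan is to make the informal claim precise by exhibiting, for each fixed policy, the tuple of a standard MDP and checking that it reproduces the value functions of $\mathcal{G}$ with the other player's policy held fixed. The only substantive thing to verify is the Markov property of the induced single-agent process; the determinism and solvability claims then follow from textbook MDP theory. The exact tuples can be recorded in \cref{app:proof_mdps}.

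First I would fix the principal's policy $\rho$ and define the \emph{agent's MDP} on the augmented state space $S \times B$ with action set $A$ and discount $\gamma$, reward $\tilde r((s,b),a) = r(s,a) + \mathbb{E}_{o \sim \mathcal{O}(s,a)}[b(o)]$, and transition kernel obtained by composing the three sampling steps: draw $o \sim \mathcal{O}(s,a)$, then $s' \sim \mathcal{T}(s,o)$, then $b' \sim \rho(s')$, and marginalize over $o$ to get a distribution over the next pair $(s',b')$. The key point is that each of these kernels depends only on its immediate argument and, for $\rho$, only on the current state, so the composite depends on the current pair $(s,b)$ and the chosen $a$ alone, not on the history. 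Hence $\tilde r$ and the composite kernel define a legitimate time-homogeneous MDP whose expected discounted return from $(s,b)$ equals $V^\pi(s, b \mid \rho)$ by construction, so the two objectives coincide; this is precisely the ``modification of the reward function through contracts'' of the statement.

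Symmetrically, fixing the agent's policy $\pi$ I would define the \emph{principal's MDP} on state space $S$ with action set $B$, reward $\tilde r^p(s,b) = \mathbb{E}_{a \sim \pi(s,b),\, o \sim \mathcal{O}(s,a)}[r^p(s,o) - b(o)]$, and transition kernel $s' \sim \sum_a \pi(a \mid s,b) \sum_o \mathcal{O}(o \mid s,a)\, \mathcal{T}(s' \mid s,o)$. Here the agent's response is folded into both the reward and the transition, matching the ``modification of the transition and reward functions through the agent's responses'' in the statement, and again stationarity of $\pi$ makes both functions of $(s,b)$ alone, so the Markov property holds and the expected return matches $V^\rho(s \mid \pi)$.

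Finally, with each induced object now a bona fide MDP, I would invoke the standard result that every finite-horizon (or discounted, with suitably regular data) MDP admits a deterministic optimal policy, obtained by a per-state $\argmax$ in the Bellman optimality equation; this justifies restricting to deterministic policies w.l.o.g.\ in the single-agent setting and lets us solve each MDP by any dynamic-programming or RL method. The step requiring the most care is the Markov-property verification for the two composite kernels above, namely checking that chaining the outcome draw, the environment transition, and the other player's stationary response introduces no history dependence; once that is in hand, the determinism and solvability conclusions are immediate from existing MDP theory.
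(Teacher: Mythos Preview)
Your proposal is correct and follows essentially the same route as the paper's \cref{app:proof_mdps}: both construct the agent's MDP on $S\times B$ with actions $A$ and the principal's MDP on $S$ with actions $B$, obtaining transitions by composing $\mathcal{O}$, $\mathcal{T}$, and the fixed opponent policy. The only cosmetic difference is that you encode rewards as their expectations over outcomes, whereas the paper keeps them stochastic and conditioned on the realized next state via $P(o\mid s,a,s')$; since value functions depend only on expected one-step rewards, the two formulations are equivalent.
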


{\bf Agent's perspective.} 
In the agent's MDP defined by $\rho$, refer to the optimal policy as \emph{best-responding} to $\rho$ (where ties are broken in favor of the principal, as is standard in contract design).
Define a function $\pi^*$ that maps a principal's policy $\rho$ to the best-responding policy $\pi^*(\rho)$. 
Denote the action prescribed by $\pi^*(\rho)$ in state~$s$ given contract $b$ by $\pi^*(s, b \mid \rho) \equiv \pi^*(\rho)(s, b)$. \diadd{Note that the best-responding policy is defined in $s$ for any $b \in B$ and is not limited to the principal's action $\rho(s)$.}

{\bf Principal's perspective.}
\diadd{Similarly, in the principal's MDP defined by $\pi$, refer to the optimal policy as \emph{subgame-perfect} against $\pi$.} Define a function $\rho^*$ that maps an agent's policy $\pi$ to the subgame-perfect policy $\rho^*(\pi)$.
Denote the contract prescribed by $\rho^*(\pi)$ in state~$s$ by $\rho^*(s \mid \pi) \equiv \rho^*(\pi)(s)$. 
In all states, this policy satisfies: $\rho^*(s \mid \pi) \in \argmax_b Q^* (s, b \mid \pi)$, where $Q^* (s, b \mid \pi) \equiv Q^{\rho^*(\pi)} (s, b \mid \pi)$ -- that is, in each subgame, the principal takes the optimal action.

\begin{definition}\label{def:SPE}
    A \emph{subgame-perfect equilibrium (SPE)} of $\mathcal{G}$ is a pair of policies $(\rho, \pi)$, where the principal's policy $\rho \equiv \rho^*(\pi)$ is subgame-perfect against an agent that best-responds with $\pi \equiv \pi^*(\rho)$.
\end{definition}

SPE is the standard solution concept for extensive-form games \citep{osborne2004introduction}.
It always exists and is essentially unique (see \cref{lemma:SPE} for completeness). 
Compared to non-subgame-perfect solutions like the well--studied \emph{Stackelberg equilibrium}, SPE can lose utility for the principal.
However, it is considered more realistic since it disallows \emph{non-credible threats}, i.e., commitments by the principal to play a suboptimal contract if one of the subgames is reached. 
Moreover, \citet{gerstgrasser2023oracles} show that learning a Stackelberg equilibrium necessitates the principal to go through long episodes of observing the learning dynamic of the agent, with only sparse rewards (see also~\citet{brero2022learning}).
SPE, in contrast, naturally fits RL, as both players' policies solve their respective MDPs.
We demonstrate the difference on our running example in \cref{app:proof_example_revisited}.

\begin{algorithm*}[t]
\caption{Meta-algorithm for finding SPE} 
\label{alg:both}
    \begin{algorithmic}[1]
        \State Initialize the principal's policy $\rho$ arbitrarily \label{state:abstract_init}
        \Comment{e.g., $\forall s \in S: \rho(s) = \mathbf{0}$}
        \While{$\rho$ not converged} \label{state:abstract_while}
        \Comment{Inner-outer optimization loop}
            \State Solve the agent's MDP: find $\pi := \pi^*(\rho)$  \label{state:abstract_inner}
            \Comment{Inner optimization level}
            \State Solve the principal's MDP: find $\rho := \rho^*(\pi)$ \label{state:abstract_outer}
            \Comment{Outer optimization level}
        \EndWhile
        \State \Return $(\rho, \pi)$ \label{state:abstract_return}
    \end{algorithmic}
\end{algorithm*}

\subsection{Meta-Algorithm for Finding SPE}\label{sec:algo_meta}
\label{sub:algo1}

Algorithm \ref{alg:both} presents a general pipeline for finding SPE in a principal-agent stochastic game. It can be seen as an inner-outer (bilevel) optimization loop, with agent and principal optimization respectively constituting the inner and outer levels.
 We refer to this as a meta-algorithm, as we do not yet specify how to perform the optimization (in Lines \ref{state:abstract_inner} and \ref{state:abstract_outer}). 
 Superficially, this approach resembles the use of bilevel optimization for learning optimal reward shaping~\citep{stadie2020learning,hu2020learning,chen2022adaptive,wang2022bi,chakraborty2023parl,lu2023bilevel}. 
 The crucial difference of that setting is that the two levels optimize the same downstream task rather than distinct and possibly conflicting objectives of principal and agent.

\diadd{It is well-known that SPE of an extensive-form game can be found with backward induction (e.g., see Section 5.6 of \citet{osborne2004introduction}). \cref{th:abstract_algorithm} states that the proposed meta-algorithm also finds SPE. The proof, provided in \cref{app:proof_abstract}, essentially shows that it performs backward induction implicitly:
each iteration of the meta-algorithm, the players' policies reach SPE in an expanding set of subgames, starting from terminal states and ending with the initial state.
The proof does not rely on the specifics of our model and applies to any game where players move sequentially, and the agent observes and can respond to \emph{any} principal's action in a state.}

%
\begin{theorem}
\label{th:abstract_algorithm}
    Given a principal-agent stochastic game $\mathcal{G}$ with a finite horizon $T$, the meta-algorithm finds SPE in at most $T+1$ iterations.
\end{theorem}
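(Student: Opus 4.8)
The plan is to show that the meta-algorithm performs backward induction implicitly, and that after $k$ iterations the pair $(\rho,\pi)$ constitutes an SPE in every subgame rooted at a state within the last $k$ time steps of the horizon. Since we assume w.l.o.g.\ that each state is uniquely associated with a time step, I would partition $S$ into levels $S_0, S_1, \dots, S_T$, where $S_t$ contains the states at depth $t$ and $S_T$ contains the terminal states (those from which no further transition occurs). The induction is on the number of completed iterations, with the claim: after iteration $k$, both players' policies are subgame-perfect/best-responding in all subgames rooted at states in $S_{T-k+1} \cup \dots \cup S_T$ (equivalently, the last $k$ levels).

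First I would establish the base case. Consider a terminal state $s \in S_T$: here the subgame is essentially the static hidden-action problem of \cref{sub:contract_design}, since no future value accrues. In Line~\ref{state:abstract_inner}, solving the agent's MDP yields $\pi = \pi^*(\rho)$, which in particular best-responds in every terminal state regardless of the (arbitrary) initialization of $\rho$, because the best-response in a terminal state depends only on the current contract and the immediate rewards, not on downstream policy. Then in Line~\ref{state:abstract_outer}, solving the principal's MDP produces $\rho = \rho^*(\pi)$, which plays the optimal contract $\rho(s) \in \argmax_b Q^*(s,b \mid \pi)$ in each terminal state. Crucially, I must argue that $\pi$ remains a best response to the \emph{updated} $\rho$ at terminal states: this holds because $\pi^*$ is defined to respond to \emph{any} contract $b \in B$ in a state (as emphasized after the agent's-perspective definition), so re-optimizing $\rho$ does not invalidate the agent's best-response property at that state. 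Hence after one iteration the last level is locked into SPE.

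For the inductive step, suppose after $k$ iterations the last $k$ levels form an SPE in their subgames, and that these policies will no longer change in subsequent iterations (a stability property I would verify). In iteration $k+1$, when the agent solves its MDP (Line~\ref{state:abstract_inner}), the optimal policy at a state $s \in S_{T-k}$ is computed via the Bellman optimality equation, whose continuation values are the agent's values at the already-stabilized successor states in $S_{T-k+1} \cup \dots$; these continuation values are now correct and fixed, so $\pi$ becomes a genuine best response at $S_{T-k}$. Then solving the principal's MDP (Line~\ref{state:abstract_outer}) makes $\rho$ subgame-perfect at $S_{T-k}$ using the correct, stabilized continuation values from deeper levels, extending the SPE to the last $k+1$ levels. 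The key point is that the deeper levels are \emph{invariant} under further updates, so the backward-induction frontier advances exactly one level per iteration. After $T+1$ iterations the frontier reaches $S_0 = \{s_0\}$, at which point $(\rho,\pi)$ is an SPE of the whole game and $\rho$ has converged, terminating the loop.

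The main obstacle I anticipate is the stability/invariance argument: proving that once the last $k$ levels are in SPE, subsequent re-optimizations at Lines~\ref{state:abstract_inner} and~\ref{state:abstract_outer} leave those levels unchanged. This requires that the backward-induction solution is unique at each level (up to the tie-breaking conventions already fixed---ties broken in favor of the principal for the agent, and the existence/essential uniqueness of SPE noted after \cref{def:SPE}), and that the optimization at shallower levels cannot feed back to alter behavior at deeper, already-solved levels. Because states are tied to time steps and transitions only move forward in time, there is no such backward feedback, which is precisely why the assumption that each state carries a unique time step is invoked. I would therefore lean on this acyclic temporal structure to guarantee invariance, and handle the essential uniqueness via the fixed tie-breaking rules to ensure the converged policies genuinely satisfy \cref{def:SPE}.
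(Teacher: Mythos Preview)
Your proposal is correct and follows essentially the same approach as the paper: both partition states by time step, argue by induction that each iteration extends the SPE frontier one level toward the root, and hinge on the observation that the agent's best response is defined for \emph{any} offered contract so that the principal's subsequent update does not invalidate it. The paper packages the inductive step into a separate lemma (one iteration turns an SPE on all proper subgames into an SPE on the whole game), but the content and the stability/invariance reasoning you anticipate are the same.
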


The meta-algorithm has several unique advantages. First, as we discuss in Section \ref{sec:learning}, both inner and outer optimization tasks can be instantiated with Q-learning. This removes the need to know the model of the MDP and allows handling of large-scale MDPs by utilizing deep learning. Second, 
it can also be seen as iteratively applying a \emph{contraction} operator, which we formulate as a theorem:

\begin{theorem}\label{th:contraction} 
    Given a principal-agent finite-horizon stochastic game $\mathcal{G}$, each iteration of the meta-algorithm applies to the principal's Q-function an operator that is a contraction in the sup-norm.
\end{theorem}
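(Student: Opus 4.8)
The plan is to formalize a single meta-iteration as an operator $\Phi$ acting on the principal's Q-function and to bound $\|\Phi(Q_1)-\Phi(Q_2)\|_\infty$ by $\gamma\|Q_1-Q_2\|_\infty$. Concretely, given an input $Q$, let $\rho_Q$ be the greedy principal policy with $\rho_Q(s)\in\argmax_b Q(s,b)$ (ties broken in the principal's favor), let $\pi_Q:=\pi^*(\rho_Q)$ be the agent's best response supplied by \cref{obs:mdps}, and set $\Phi(Q)=Q^*(\cdot\mid\pi_Q)$. This is precisely the composition carried out in Lines~\ref{state:abstract_inner}--\ref{state:abstract_outer}: the current $Q$ induces the principal's move, the agent best-responds, and the principal re-optimizes. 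For $\gamma<1$ a $\gamma$-contraction would then have a unique fixed point, namely the SPE Q-function, while in any case the layered argument below recovers the finite-time convergence of \cref{th:abstract_algorithm} and the claimed monotonic improvement.

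First I would exploit the finite-horizon time-layering: using the w.l.o.g.\ assumption that each state carries a unique time step, partition $S$ into layers $S_0,\dots,S_T$ and induct backward on $t=T,\dots,0$. For terminal $s\in S_T$ there is no continuation, so both the agent's induced action and the principal's one-step backup are independent of the input, giving $\Phi(Q_1)(s,\cdot)=\Phi(Q_2)(s,\cdot)$ and a zero base gap. For the inductive step I would write the principal's value at a time-$t$ state through the Bellman optimality equation of the principal's MDP against $\pi_{Q_i}$ (\cref{obs:mdps}): with $a=\pi_{Q_i}(s,b)$, $\Phi(Q_i)(s,b)=\mathbb{E}_{o\sim\mathcal{O}(s,a)}\big[r^p(s,o)-b(o)+\gamma\max_{b'}\Phi(Q_i)(s',b')\big]$ where $s'\sim\mathcal{T}(s,o)$. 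Its only dependence on $Q_i$ is through the $\gamma$-discounted continuation at layer $t+1$ and through the induced action $a$. The standard fact that the principal's Bellman optimality operator for a \emph{fixed} agent policy contracts with modulus $\gamma$ then handles the continuation channel, propagating the layer-$(t+1)$ gap to layer $t$ with the factor $\gamma$.

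I expect the induced-action channel to be the main obstacle. Perturbing $Q$ at later layers changes the greedy future contracts $\rho_{Q}$, hence the agent's continuation values, hence possibly the best response $a=\pi_{Q_i}(s,b)$ at $s$; a switch in $a$ alters the principal's \emph{undiscounted} immediate reward $\mathbb{E}_{o\sim\mathcal{O}(s,a)}[r^p(s,o)-b(o)]$, which a naive term-by-term estimate cannot absorb into $\gamma\|Q_1-Q_2\|_\infty$. To control this I would rewrite the principal's optimal value in the sequential analogue of the static LP characterization of \cref{sub:contract_design}: at each state the principal effectively selects which action $a$ to implement and pays the minimal expected amount $P_i(s,a)$ making $a$ incentive-compatible given the agent's continuation values, so the value is a maximum over the finite action set of $\mathbb{E}_{o\sim\mathcal{O}(s,a)}[r^p(s,o)]+\gamma(\text{principal continuation})-P_i(s,a)$. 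The key estimate is then that $P_i(s,a)$ together with the principal's continuation co-vary with the next-layer values in a $\gamma$-Lipschitz manner, so that the maximand, and hence the max, differs between $Q_1$ and $Q_2$ by at most $\gamma$ times the layer-$(t+1)$ gap even when the maximizing $a$ itself changes. Making this co-variation estimate rigorous, while keeping the implementation payment and the value selection well-defined via principal-favoring tie-breaking and the essential uniqueness of SPE, is the step I anticipate will require the most care.
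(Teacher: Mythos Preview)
Your approach diverges fundamentally from the paper's. The paper does not attempt a direct sup-norm estimate at all. Instead it argues indirectly: since the meta-algorithm converges from any starting point to an SPE (\cref{th:abstract_algorithm}) and since the principal's Q-function in SPE is unique (a separate lemma on essential uniqueness of SPE payoffs), the one-iteration map has a unique globally attracting fixed point; a converse to Banach's fixed-point theorem is then invoked to obtain the contraction conclusion non-constructively. No modulus is computed, and no layer-by-layer Lipschitz bound is ever established.

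Your direct route has a genuine gap, precisely at the step you anticipate ``will require the most care.'' The minimal implementation payment $P_i(s,a)$ depends, through the LP constraints, on the agent's truncated Q-values $\overline{Q}^*(s,\cdot\mid\rho_{Q_i})$, which in turn depend on the contracts $\rho_{Q_i}(s')$ offered at all later states. But $\rho_{Q_i}(s')=\argmax_b Q_i(s',b)$ is a discontinuous function of $Q_i$: an arbitrarily small perturbation at a tie can swap the maximizing contract for one that pays the agent a very different amount, shifting the agent's continuation value---and hence $P_i(s,a)$---by an $O(1)$ quantity regardless of $\|Q_1-Q_2\|_\infty$. The same mechanism already breaks your earlier per-$b$ decomposition: the induced action $\pi_{Q_i}(s,b)$ is determined by the agent's continuation values under $\rho_{Q_i}$, so a tiny change in $Q_i$ can flip $a$ and move the undiscounted term $\mathbb{E}[r^p(s,o)-b(o)]$ by an amount not controlled by $\gamma\|Q_1-Q_2\|_\infty$. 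There is no cancellation between $P_i$ and the principal's continuation that rescues this, because the two depend on \emph{different} players' future values and the agent's side is the one hit by the argmax discontinuity. In short, $\Phi$ as you define it is not even continuous in the sup-norm at inputs with argmax ties, so a uniform $\gamma$-Lipschitz bound of the form you propose cannot hold; the paper's non-constructive route sidesteps exactly this obstruction rather than resolving it.
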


The proof is provided in \cref{app:proof_contraction}. \diadd{This property implies that each iteration of the meta-algorithm monotonically improves the principal's policy in terms of its Q-function converging.}
This has a practical advantage: if meta-algorithm is terminated early, the policies still partially converge to SPE.

As an independent observation, Theorem \ref{th:abstract_algorithm} complements the theoretical results of \citet{gerstgrasser2023oracles}. Specifically, their Theorem~2 presents an example where RL fails to converge to a \emph{Stackelberg} equilibrium if the agent `immediately' best responds. 
This procedure is our \cref{alg:both}, with agent optimization solved by an oracle, and principal optimization performed with RL. Our Theorem~\ref{th:abstract_algorithm} complements their negative result by showing that such a procedure converges to \emph{SPE}.

Finally, while we focus on finite-horizon hidden-action MDPs in the main text, we also analyze the other scenarios in the Appendix. Our findings are summarized in \cref{table:meta_convergence}.

\begin{table*}[t]
\caption{Correctness of the meta-algorithm in different scenarios}
\label{table:meta_convergence}
\centering
\begin{tabular}{lll}
\toprule
 & finite horizon $T$ & infinite horizon \\ \midrule
hidden action & finds SPE in $T+1$ iterations (\cref{th:abstract_algorithm}) & may diverge (\cref{app:proof_infinite}) \\
observed action & \multicolumn{2}{c}{finds SPE that is also Stackelberg in 1 iteration (\cref{app:proof_one_iter})} \\ 
\bottomrule
\end{tabular}
\end{table*}

\section{Learning Setting}
\label{sec:learning}

In this section, we develop an RL approach to principal-agent MDPs by solving both inner and outer optimization tasks of the meta-algorithm with Q-learning. These tasks correspond to finding optimal policies in the respective agent's and principal's MDPs defined in \cref{obs:mdps}. We operate in a standard model-free RL setting, where learning is performed through interactions with a black-box MDP. For both principal and agent, we introduce modified Q-functions, which we formally derive as fixed points of contraction operators in \cref{app:operators}. We detail deep implementations in \cref{app:implement}.


\textbf{Agent's learning.}
Consider the agent's MDP defined by principal's policy $\rho$. The best-responding agent's Q-function in a state $s$, $Q^*((s, b), a \mid \rho)$, depends on the observed contract $b$ -- particularly on the 
expected payment.
This effect can be isolated by applying the Bellman optimality operator:

\begin{equation}\label{eq:truncated}
    Q^*((s, b), a \mid \rho) = \mathbb{E}_{o \sim \mathcal{O}(s, a)} [b(o)] + \overline{Q}^*(s, a \mid \rho),
\end{equation}

\noindent where $\overline{Q}^*(s, a \mid \rho) = [ r(s, a) + \gamma \mathbb{E} \max_{a'} Q^*((s', \rho(s')), a' \mid \rho)]$ is the \emph{truncated} optimal Q-function, which represents the agent's expected long-term utility barring the immediate payment. Our approach to training the agent (solving inner optimization) is to learn the truncated Q-function and compute the Q-function through Equation~(\ref{eq:truncated}).
This way, the Q-function is defined for any $b \in B$ in $s$, under an assumption that the principal plays according to $\rho$ in future (e.g., $\rho(s')$ in the next state). From the agent's perspective, this is justified in SPE, where $\rho$ is optimal for the principal in all future subgames. Note that computing the expected payment requires the outcome distribution -- if unknown, it can be approximated as a probabilistic classifier (more on this in \cref{app:implement_single}).

\textbf{Principal's learning.}
Consider the principal's MDP defined by a best-responding agent $\pi^*(\rho)$ for an arbitrary $\rho$.
The basic idea is to divide the principal's learning problem into two parts: 1) learn the agent's policy that the principal wants to implement (\emph{recommends} to the agent), and 2) compute the optimal contracts that implement it (the \emph{minimal implementation}) using Linear Programming (LP). Essentially, this extends the classic LP approach from static contract design described in \cref{sub:contract_design}.

To approach the first subproblem, we need an analogue of the principal's Q-function that is a function of an agent's action. To this end, we define the \emph{contractual} Q-function $q^*(\pi^*(\rho)): S \times A \rightarrow \mathbb{R}$ by
\begin{equation}\label{eq:contractual_q_function}
    q^*(s, a^p \mid \pi^*(\rho)) = \max_{\{b \mid \pi^*(s, b \mid \rho) = a^p\}} Q^*(s, b \mid \pi^*(\rho)),
\end{equation}
\noindent which can be interpreted in $s$ as the maximal principal's Q-value that can be achieved by implementing $a^p \in A$. 
To compute the optimal contract $\argmax_b Q^*(s, b \mid \rho)$ using $q^*(\pi^*(\rho))$, we can select the optimal action to implement as $\argmax_{a^p} q^*(s, a^p \mid \pi^*(\rho))$, and then find the corresponding contract as a solution to the conditional maximization in Equation~(\ref{eq:contractual_q_function}). This conditional maximization is the second subproblem defined above. We solve it as an LP (for details, see \cref{app:proof_contract}):
\begin{equation}\label{eq:LP}
\begin{split}
    &\hspace{100pt} \max_{b \in B} ~~~\mathbb{E}_{o \sim \mathcal{O}(s, a^p)} [- b(o)] \hspace{10pt} \text{s.t.}\\
    &\mathbb{E}_{o \sim \mathcal{O}(s, a^p)} [b(o)] + \overline{Q}^*(s, a^p \mid \rho) \geq \mathbb{E}_{o \sim \mathcal{O}(s, a)} [b(o)] + \overline{Q}^*(s, a \mid \rho)~~~\forall a \in A.
\end{split}
\end{equation} 
Solving this LP requires access to the agent's truncated Q-function $\overline{Q}^*$. Although this requirement is in line with the training phase of our setup, it can be alleviated, e.g., by approximating optimal contracts with deep learning \citep{WangDITP23}. We do not explore this direction, preferring to focus here on the distinct learning problems originating from our MDP formulation. 

\textbf{Approximation error.}
In the idealized setting of \cref{sec:economic}, the inner and outer optimization tasks are solved exactly. However, learning is usually performed with approximation errors (due to early termination, function approximation, etc.). An issue central to the principal-agent theory is that, due to the principal's utility being discontinuous, even a slight misestimation of the optimal contract may have a devastating effect by changing the agent's response (see, e.g., \citet{WangDITP23}). We study the effect of approximation error in our model through the following two-phase setup:

\begin{enumerate}
    \item \textbf{Training:} The principal's policy is found with the meta-algorithm, instantiated with Q-learning (as described above) in the presence of approximation error. 
    \item \textbf{Validation:} The learned principal's policy is validated against a black-box oracle agent that exactly best-responds (without approximation error).
\end{enumerate}

We analyze this setup theoretically in \cref{app:error} and summarize our findings in the following proposition:

\begin{proposition}
    At the training phase, if at each iteration $t$, the learning of the principal's (resp., agent's) Q-function is performed with an error of up to $\delta_t$ (resp., $\epsilon_t$), then the total error in $s_0$ is bounded by $D_0 = \sum_{t \in [0, T]} \gamma^{t} \delta_t$ (resp., $E_0 = \sum_{t \in [0, T]} \gamma^{t} \epsilon_t$). 
    At the validation phase, the principal can be guaranteed a utility within $- 2 D_0 - 2 \sum_t \gamma^t E_t / d_{\min}$ from the optimal, where $d_{\min}$ is a measure of how much the outcome distributions for different actions intersect (defined in \eqref{eq:d_min}).
    This utility can be achieved through small additional payments that counteract the agent's deviations, which we call \emph{nudging}.
\end{proposition}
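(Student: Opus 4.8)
The plan is to bound the two phases separately and then combine them, tracking how errors propagate through the backward-induction structure revealed by \cref{th:abstract_algorithm}.

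\textbf{Training-phase error accumulation.} First I would establish the additive error formulas $D_0 = \sum_{t} \gamma^t \delta_t$ and $E_0 = \sum_{t} \gamma^t \epsilon_t$ by induction on the horizon, exploiting the contraction structure of \cref{th:contraction}. The key observation is that the meta-algorithm performs implicit backward induction, so at each time step $t$ the principal's (resp.\ agent's) Q-function inherits the error $\delta_t$ (resp.\ $\epsilon_t$) introduced \emph{at that step} plus the discounted propagated error from downstream states. Because a contraction in the sup-norm with modulus $\gamma$ propagates a downstream error $E$ into an upstream error of at most $\gamma E$, unrolling the recursion $D_t \le \delta_t + \gamma D_{t+1}$ from the terminal states (where $D_{T+1}=0$) back to $s_0$ yields exactly the claimed geometric sum. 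The same recursion applies to the agent with $\epsilon_t$ in place of $\delta_t$.

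\textbf{Validation-phase discontinuity and nudging.} The harder part is the validation phase, where the central difficulty is the discontinuity of the principal's utility: the contract $b$ computed from a \emph{misestimated} truncated Q-function $\overline{Q}^*$ may fail the incentive constraints of the LP in \eqref{eq:LP} against the true best-responding oracle, so the agent deviates to a different action and the principal's realized utility collapses. I would quantify this as follows. An agent-side estimation error of magnitude $E_t$ means the incentive constraints in \eqref{eq:LP} may be violated by up to $2E_t$ (the error can shift both the recommended action's and a competing action's truncated value). To restore the correct best response one \emph{nudges}: add a small payment to the contract so that the constraint for the intended action $a^p$ strictly dominates every alternative by the necessary slack. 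The amount of extra payment needed to move the agent's expected-payment differential by $2E_t$ is controlled by how distinguishable the outcome distributions are, captured by $d_{\min}$ as defined in \eqref{eq:d_min}; concretely, an expected-value gap of $2E_t$ can be closed by a nudge whose cost to the principal is at most $2E_t/d_{\min}$ per step. Summing the discounted nudging costs over the horizon gives the $2\sum_t \gamma^t E_t / d_{\min}$ term.

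\textbf{Combining the two phases.} Finally I would assemble the total guarantee. The factor of $2$ on $D_0$ arises because the principal's own misestimation costs utility twice: once because the learned policy is suboptimal relative to the true optimum (a $D_0$ gap), and once more because the principal \emph{evaluates} contracts using an erroneous Q-function, so its chosen action may be off by another $D_0$ in true value. Adding the principal's own accumulated error $2D_0$ to the nudging overhead $2\sum_t \gamma^t E_t/d_{\min}$ needed to neutralize the agent's deviations yields the stated bound of $-2D_0 - 2\sum_t \gamma^t E_t/d_{\min}$ from optimal. The main obstacle I anticipate is the validation-phase analysis: carefully arguing that the nudging payments simultaneously (i) restore the intended best response at \emph{every} state along the trajectory and (ii) cost no more than $2E_t/d_{\min}$ per step, which requires a clean definition of $d_{\min}$ as a lower bound on the separation between outcome distributions $\mathcal{O}(s,a^p)$ and $\mathcal{O}(s,a)$ and a worst-case LP-sensitivity argument relating constraint-violation magnitude to required payment.
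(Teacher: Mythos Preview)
Your proposal is essentially correct and follows the same route as the paper: the training-phase bound via the recursion $D_t \le \delta_t + \gamma D_{t+1}$ unrolled from terminal states, the $2E_t$ slack needed in the LP constraints, the $2E_t/d_{\min}$ per-step nudging cost, and the $2D_0$ factor from a triangle-inequality split of the principal's utility gap all match the paper's argument.

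One point to sharpen: you describe nudging as a post-hoc correction that ``restores'' the intended best response after training. The paper is explicit that the nudges must be incorporated \emph{during} training, because adding slack $\xi_t$ to the LP changes the contracts, which changes the agent's truncated Q-values in upstream states, which in turn changes the LPs there. The paper therefore introduces a modified target, $\xi$-SPE, and decomposes the guarantee as $u \ge u^\xi - 2D_0$ (approximation error relative to $\xi$-SPE, with compliance now guaranteed by the built-in slack) together with $u^\xi \ge u^* - \sum_t \gamma^t \sigma_t$ where $\sigma_t \le 2E_t/d_{\min}$ (the utility cost of nudging relative to the original SPE). If you attempt the analysis with post-hoc nudges, the Q-function estimates used in the LPs are inconsistent with the actual contracts offered, and the per-state compliance argument does not compose cleanly along the trajectory. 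This is exactly the obstacle you flag at the end; resolving it by baking nudges into training is the paper's move.
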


Additionally, we analyze the effect of approximation errors empirically. In \cref{app:implement_single}, we report the results of applying our algorithm to principal-agent MDPs represented by randomly sampled binary game-trees. 
We find that even without nudging, the learned principal's policy achieves a utility at validation within 1-2$\%$ from the optimal. Interestingly, this is achieved by choosing the optimal action $a^p$ to recommend in only about $90\%$ of all states, suggesting that errors likely occur in rarely visited or less impactful states. In the next section, we will continue our empirical investigation in a much more complex environment.



\section{Extension to Multi-Agent RL}\label{sec:multi_agent}
In this section, we explore an extension to multiple agents. We state the formal model in \cref{sec:multi_agent_model}. We introduce \emph{sequential social dilemmas} (SSDs) and the Coin Game in \cref{sec:multi_agent_ssd}. We present experimental results in \cref{sec:multi_agent_experiments}.

\subsection{Problem Setup}\label{sec:multi_agent_model}


Both our principal-agent model and our theory for the meta-algorithm can be extended to multi-agent MDPs. First, we formulate an analogous principal-multi-agent MDP, where a principal offers a contract to each agent, and payments are determined by the joint action of all agents. We treat joint actions as outcomes and omit hidden actions. Then, the theory extends by viewing all agents as a centralized super-agent that selects an equilibrium joint policy (in the multi-agent MDP defined by the principal). Finally, we address the issue of multiple equilibria by imposing an additional constraint on the incentive-compatibility of contracts, making our approach more robust to deviations. See also \cref{app:pd}, where we illustrate the multi-agent model on Prisoner's Dilemma.

A \emph{principal-multi-agent MDP} is a tuple $\mathcal{M}_N = (S, s_0, N, (A_i)_{i \in N}, B, \mathcal{T}, (\mathcal{R}_i)_{i \in N}, \mathcal{R}^p, \gamma)$. The notation is as before, with the introduction of a set of $k$ agents, $N$, and the corresponding changes: $A_i$ is the action set of agent $i \in N$ with $n_i$ elements; $\mathbf{A}_N$ is the joint action set with $m = \prod_i n_i$ elements, defined as a Cartesian product of sets $A_i$; $B \subset \mathbb{R}_{\geq 0}^m$ is a set of contracts the principal may offer to an agent; $b_i$ denotes a contract offered to agent $i$, and $b_i(\mathbf{a})$ denotes a payment to $i$ determined by joint action $\mathbf{a} \in \mathbf{A}_N$; $\mathcal{T}: S \times \mathbf{A}_N \rightarrow \Delta(S)$ is the transition function; $\mathcal{R}_i: S \times \mathbf{A}_N \times B \rightarrow \mathbb{R}$ is the reward function of agent $i$ defined by $\mathcal{R}_i(s, \mathbf{a}, b_i) = r_i(s, \mathbf{a}) + b_i(\mathbf{a})$ for some $r_i$; $\mathcal{R}^p: S \times \mathbf{A}_N \times B^k \rightarrow \mathbb{R}$ is the principal's reward function defined by $\mathcal{R}^p(s, \mathbf{a}, \mathbf{b}) = r^p(s, \mathbf{a}) - \sum_i b_i(\mathbf{a})$. In our application, the principal's objective is to maximize agents' social welfare through minimal payments, so we define its reward by $r^p(s, \mathbf{a}) = \frac{1}{\alpha} \sum_i r_i(s, \mathbf{a})$, where $0 < \alpha < 1$ is a hyperparameter that ensures that payment minimization is a secondary criterion and does not hurt social welfare (we use $\alpha = 0.1$). Additionally, $\rho: S \rightarrow B^k$ is the principal's policy, and $\pi_i: S \times B \rightarrow \Delta(A_i)$ is an agent's policy. Because $B$ grows exponentially with the number of agents $k$, in our implementation, the principal gives an action recommendation to each agent, and the payments are determined after agents act.

Analogously to \cref{obs:mdps}, a fixed principal's policy $\rho$ defines a multi-agent MDP by changing the agents' reward functions. Importantly, this MDP can itself be analyzed as a Markov game between the agents \citep{littman1994markov}. In this game, we use a basic solution concept called Markov Perfect Equilibrium (MPE, \cite{maskin2001markov}), defined as a tuple of agents' policies $\boldsymbol{\pi}^*(\rho)$ such that the following holds:
\begin{equation}\label{eq:multi_eq}
    \forall s, b_i, i, \pi_i: V_i^{\pi_i^*(\rho)}(s, b_i \mid \boldsymbol{\pi}_{-i}^*(\rho), \rho) \geq V_i^{\pi_i}(s, b_i \mid \boldsymbol{\pi}_{-i}^*(\rho), \rho).
\end{equation}
Here, $\boldsymbol{\pi}_{-i}^*(\rho)$ denotes equilibrium policies of agents other than $i$, and $V_i^{\pi_i} (\cdot \mid \boldsymbol{\pi}_{-i}, \rho)$ is the value function of agent $i$ playing $\pi_i$ given that other agents play $\boldsymbol{\pi}_{-i}$ and principal plays $\rho$. In MPE, no agent has a beneficial, unilateral deviation in any state.

Call MPE $\boldsymbol{\pi}^*(\rho)$ a \emph{best-responding joint policy};
in case there are multiple, assume that agents break ties in favor of the principal.
This assumption allows agents to freely coordinate the joint action, similarly to the equilibrium oracle of~\citet{gerstgrasser2023oracles}.
The principal's \emph{subgame-perfect policy} is defined by $\rho^*(s \mid \boldsymbol{\pi}) \in \argmax_b Q^*(s, b \mid \boldsymbol{\pi})$, and an SPE is defined as a pair of policies $(\rho, \boldsymbol{\pi}^*(\rho))$ that are respectively subgame-perfect and best-responding against each other.

With this, our theory in Section \ref{sec:economic} can be extended to the multi-agent model. Particularly, convergence proofs of Algorithm~\ref{alg:both} apply with the swap of notation and the new definition of best-responding policy $\boldsymbol{\pi}^*(\rho)$.
However, implementing this theory is problematic because of how strong the tie-breaking assumption is. In practice, there is no reason to assume that decentralized learning agents will converge to any specific equilibrium.
For example, even in simple games such as Iterated Prisoner's Dilemma, RL agents typically fail to converge to cooperative Tit-for-Tat equilibrium~\citep{foerster2018learning,willis2023resolving}.

To provide additional robustness, we specify the principal's policy $\rho$ in SPE by requiring that it implements MPE $\boldsymbol{\pi}^*(\rho)$ \emph{in dominant strategies}. 
Specifically, $\rho$ must additionally satisfy:
\begin{equation}\label{eq:multi_ds}
\forall s, b_i, i, \pi_i, \boldsymbol{\pi}_{-i}: \hspace{5pt} V_i^{\pi_i^*(\rho)}(s, b_i \mid \boldsymbol{\pi}_{-i}, \rho) \geq V_i^{\pi_i}(s, b_i \mid \boldsymbol{\pi}_{-i}, \rho).    
\end{equation}
This way, an agent prefers $\pi_i^*(\rho)$ regardless of other players' policies.
%
We refer to contracts that make a strategy profile dominant as \emph{Incentive-Compatible (IC)}, and to the IC contracts that minimize payments as a \emph{minimal implementation}. In these terms, the principal's objective is to learn a social welfare maximizing strategy profile and its minimal implementation. This solution concept is inspired by the k-implementation of \citet{monderer2003k}.

\subsection{A Sequential Social Dilemma: The Coin Game}\label{sec:multi_agent_ssd}

In this section, we augment a multi-agent MDP known as the 
{\em Coin Game}~\citep{foerster2018learning} with a principal and conduct a series of experiments. 
These experiments complement our theoretical results
 by empirically demonstrating the convergence of our algorithm to SPE in a complex multi-agent setting.
In addition, we find a minimal implementation of a strategy profile that maximizes social welfare in a complex SSD, which is a novel result of independent interest.

{\bf Environment.}
The Coin Game is a standard benchmark in Multi-Agent RL that models an SSD
with two self-interested players that, if trained independently, fail to engage in mutually beneficial cooperation.
 \diadd{This environment is highly combinatorial and complex due to a large state space and the inherent non-stationarity of simultaneously acting and learning agents.} Each player is assigned a color, red or blue, and collects coins that spawn randomly on a grid. Players earn $+1$ for collecting a coin of their color and $+0.2$ for other coins.
 \footnote{We use the \href{https://github.com/ray-project/ray/blob/master/rllib/examples/envs/classes/coin_game_non_vectorized_env.py}{rllib} code but remove the penalty a player incurs if its coin is picked up by the other player.}
 Our experiments are carried out on a $7 \times 7$ grid with each episode lasting for $50$ time steps; results on a smaller grid 
 are provided in Appendix~\ref{app:implement_multi}.

{\bf Experimental procedure.}
We implement the two-phase approach described in \cref{sec:learning}. We parameterize principal's and agent's Q-functions as the respective deep Q-networks, $\theta$ and $\phi$. The principal is trained to maximize social welfare (sum of rewards) of the agents using VDN \citep{sunehag2017value}, and the agents' networks share parameters.
Given the complexity of the Coin Game, validating the resulting policy against oracle agents is infeasible. Instead, at validation phase we train black-box agents from scratch as independent DQNs. 
For details and pseudocode, see \cref{app:implement_multi}.

For baselines, we compare against a heuristic that distributes a constant proportion of social welfare. For a fair comparison, this proportion is set to be exactly equal to the proportion that our method ends up paying 
after the validation phase. This heuristic is at the core of approaches that
 improve social welfare
 through contractual agreements 
 between agents 
 (\citet{christoffersen2023get}, see \cref{app:related_work}). We also include a {\em selfish baseline} with self-interested agents in the absence of contracts, and an {\em optimal baseline} where agents are fully cooperative
 and directly maximize social welfare. These are instances of the constant proportion baseline with the proportion set to $0$ and $1$, respectively.

\subsection{Experimental Results}\label{sec:multi_agent_experiments}

\begin{figure}[t]
\begin{center}
\begin{subfigure}{0.75\textwidth}
    \centering
    \includegraphics[width=\linewidth]{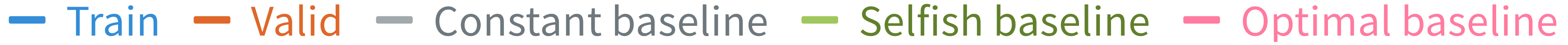}
\end{subfigure}

\begin{subfigure}{0.33\textwidth}
    \centering
    \includegraphics[width=\linewidth]{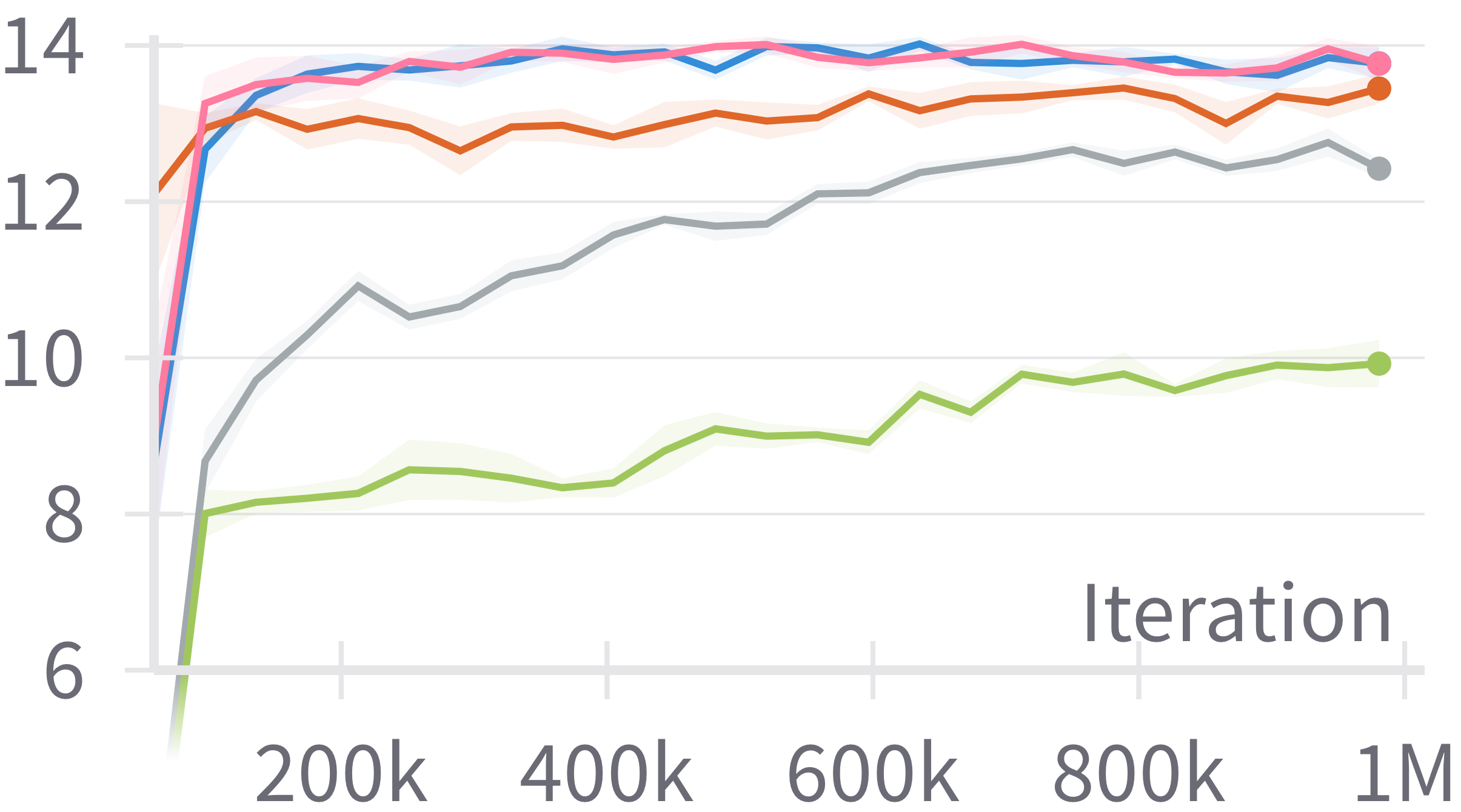}
    \caption{Social welfare}
    \label{fig:multi_agent_SW}
\end{subfigure}%
\hspace{25pt}
\begin{subfigure}{0.33\textwidth}
    \centering
    \includegraphics[width=\linewidth]{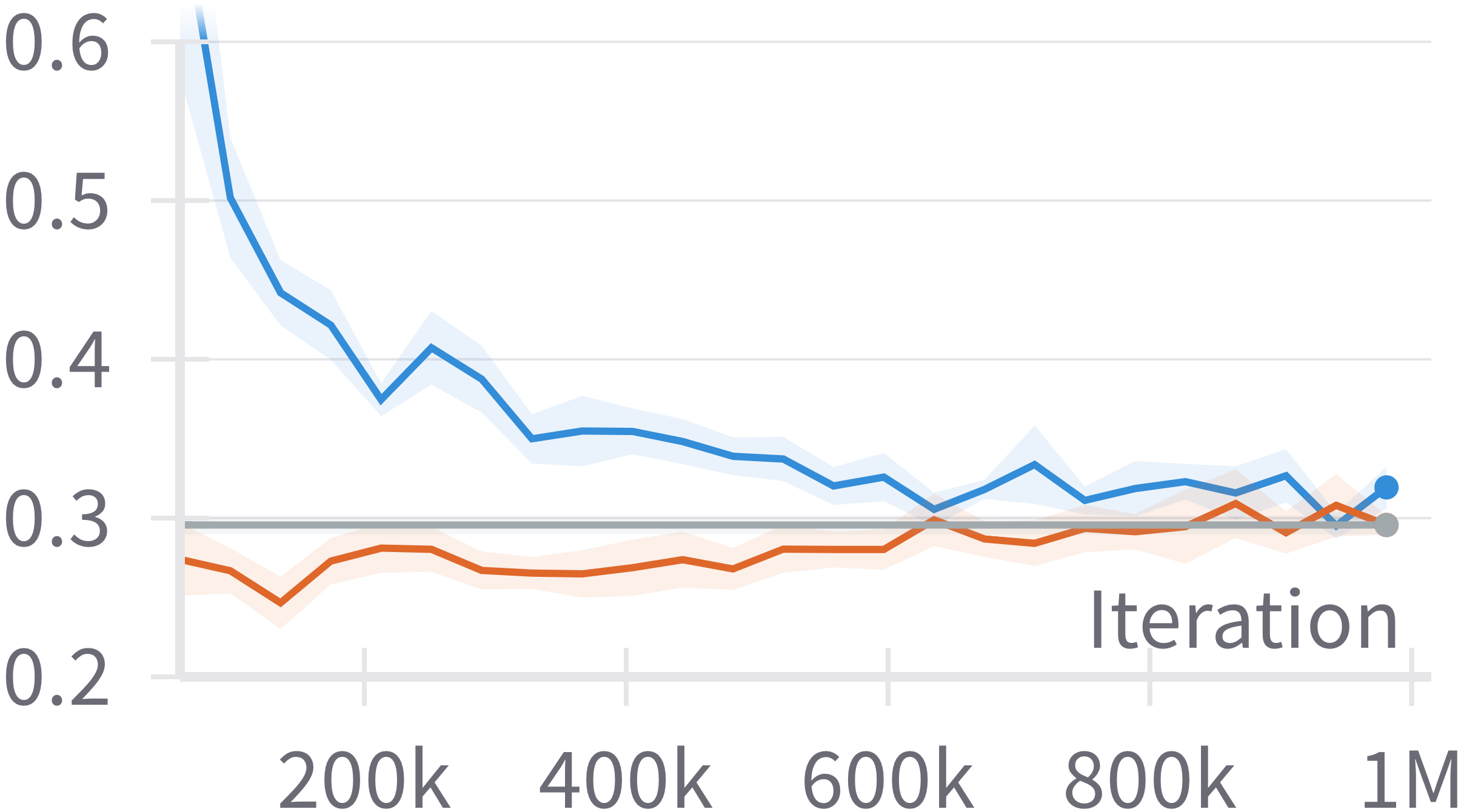}
    \caption{Proportion of social welfare paid}
    \label{fig:multi_agent_SW_prop}
\end{subfigure}

\begin{subfigure}{0.33\textwidth}
    \centering
    \includegraphics[width=\linewidth]{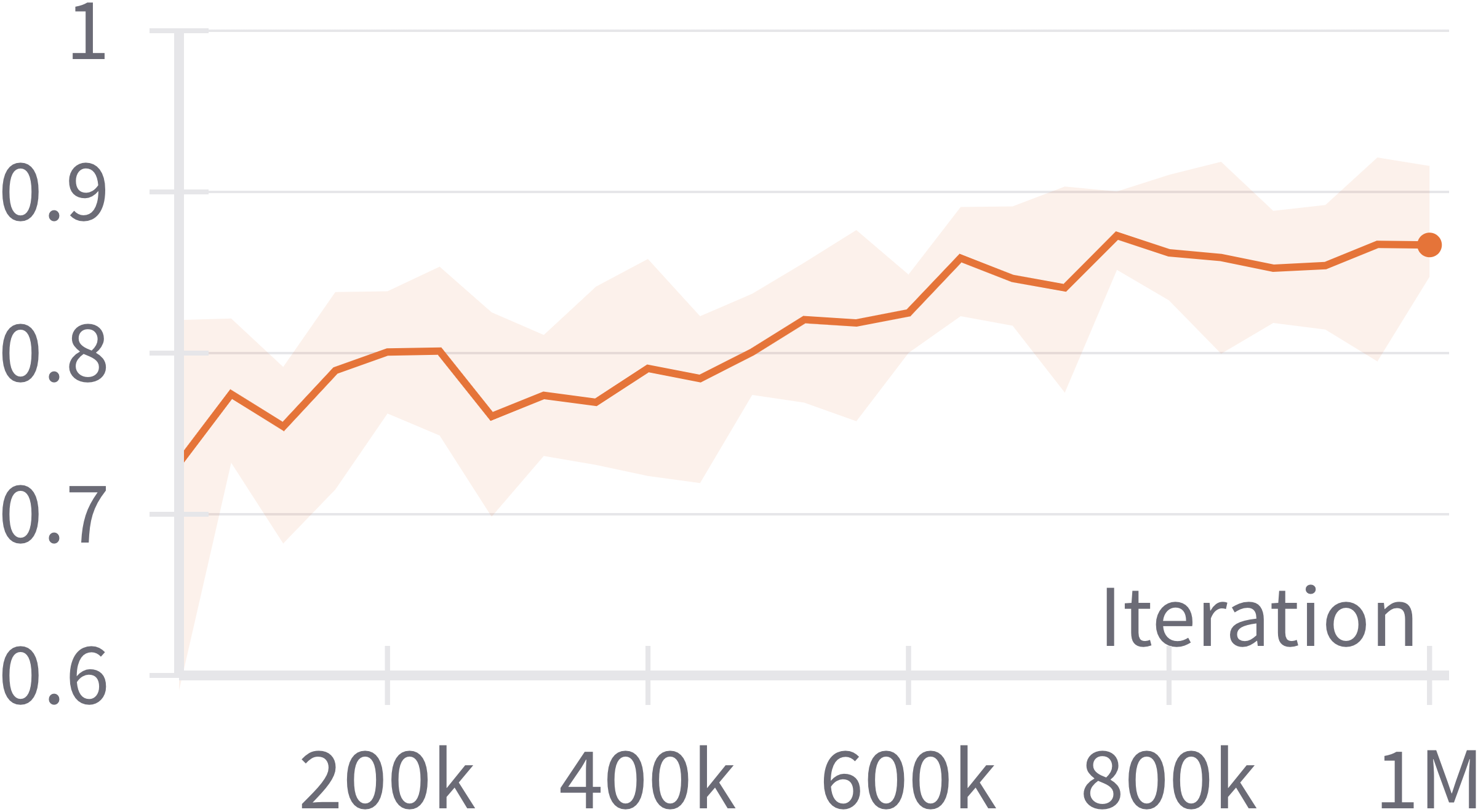}
    \caption{Accuracy}
    \label{fig:multi_agent_accuracy}
\end{subfigure}%
\hfill
\begin{subfigure}{0.33\textwidth}
    \centering
    \includegraphics[width=\linewidth]{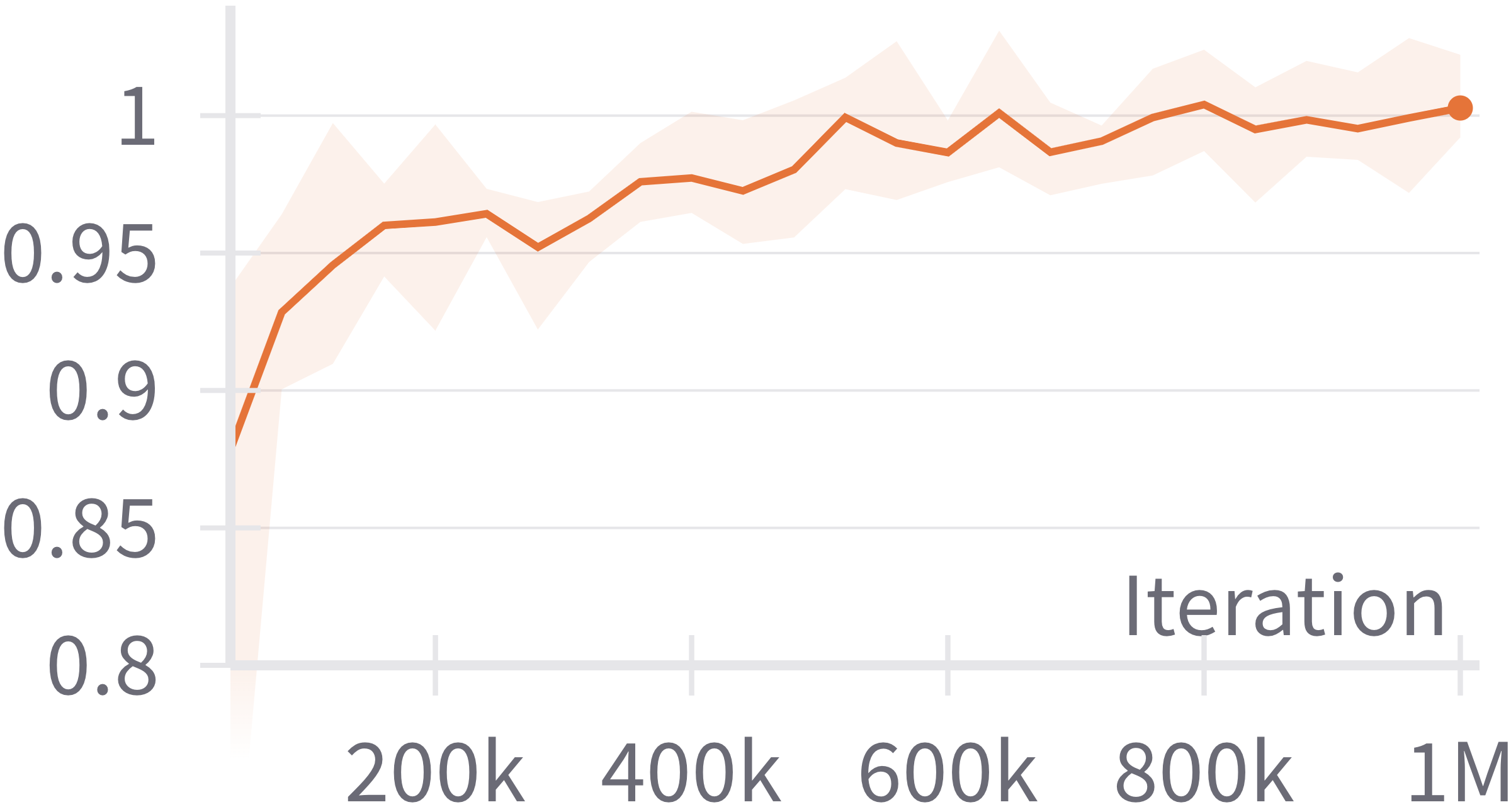}
    \caption{Convergence to SPE?}
    \label{fig:multi_agent_eq_prop}
\end{subfigure}%
\hfill
\begin{subfigure}{0.33\textwidth}
    \centering
    \includegraphics[width=\linewidth]{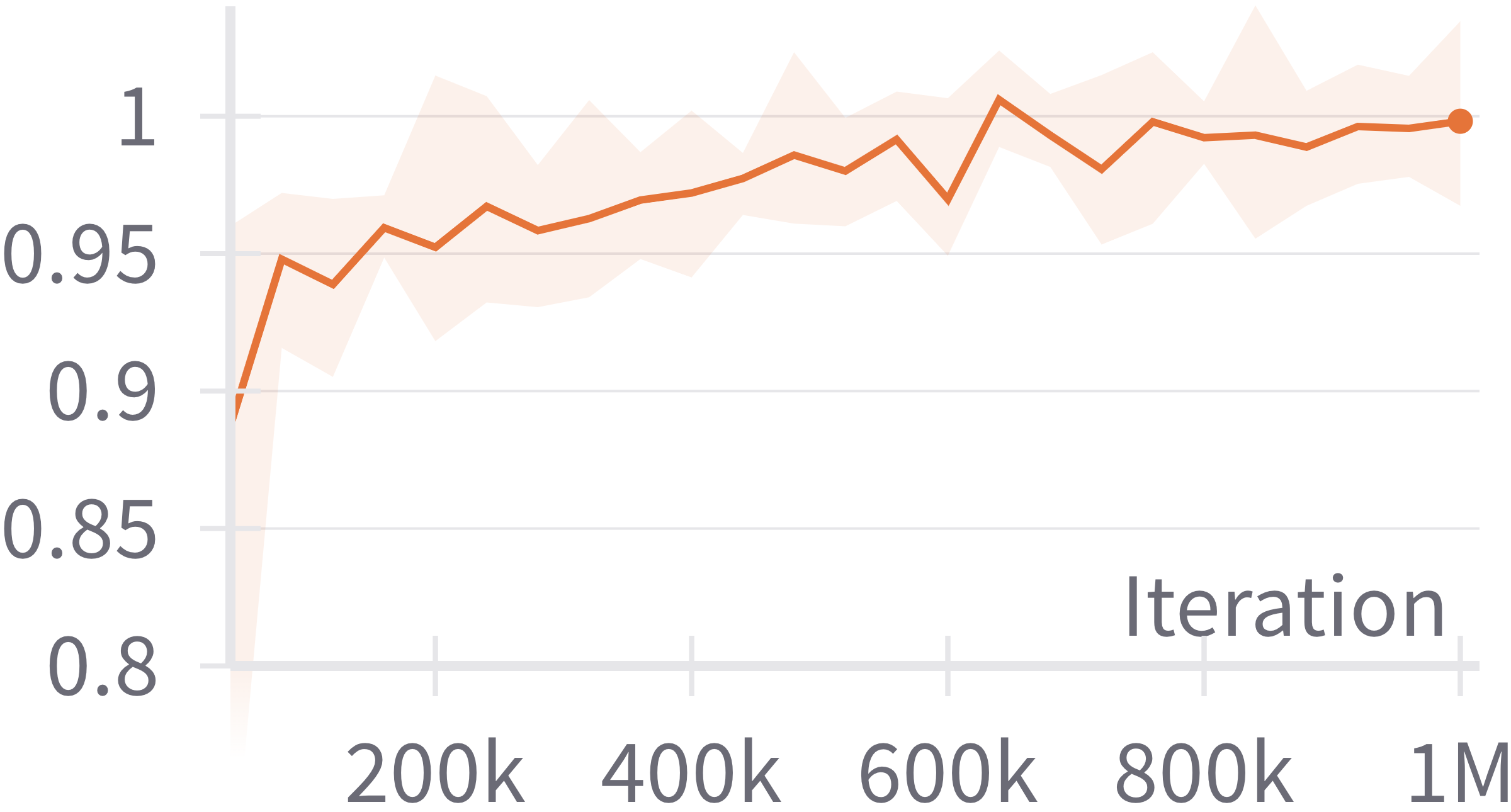}
    \caption{Incentive-Compatible contracts?}
    \label{fig:multi_agent_dominant_prop}
\end{subfigure}
\caption{Learning curves in the Coin Game. 
See \cref{sec:multi_agent_experiments} for plot explanations. 
Shaded regions represent standard errors in the top plots and min-max ranges in the bottom plots.}
\label{fig:multi_agent}
\end{center}
\end{figure}

The results are presented in Figure~\ref{fig:multi_agent}. The social welfare metric (Fig.~\ref{fig:multi_agent_SW}) shows a gap between the performances of selfish and optimal baselines, confirming the presence of a
\diadd{conflict of interests.}
During training, our algorithm
finds a joint policy that matches the optimal performance, and is implemented with
\diadd{an average}
payment of just above $30\%$
\diadd{of social welfare, substantially reducing the intervention into the agents' rewards compared to the optimal baseline}
(Fig.~\ref{fig:multi_agent_SW_prop}).

\diadd{After the validation phase,} the social welfare and the proportion paid to agents
 closely match the corresponding metrics in training.
Furthermore, the agents follow the principal's recommendations in around $80\%$ to $90\%$ of states in an average episode
(Fig.~\ref{fig:multi_agent_accuracy}). 
\diadd{These results suggest that the principal closely approximated the SPE, as agents deviate only rarely and in states where it does not hurt social welfare.}
Given the challenges of convergence of independent RL agents to mutually beneficial equilibria, we find this success quite surprising, and attribute it to the IC property of the principal.
\diadd{From the perspective of an agent, there could be other optimal policies against different opponents, but following the principal's recommendations is robust against \emph{any} opponent.}

We also see that the constant proportion baseline is much
 less effective than our algorithm when given the same amount of 
budget. 
The heuristic
  scheme overpays in some states while underpaying in others---incentivizing agents to selfishly deviate from a welfare-optimizing policy.

These results suggest the algorithm's convergence to SPE and the IC property of contracts.
To 
\diadd{further} 
verify this, we collect additional metrics throughout the validation phase. Consider the perspective of the blue agent (Blue). At a given iteration, 
we fix  the red agent's (Red's) policy, estimate Blue's utilities 
\diadd{(average returns)} 
under its policy and the recommended policy, and compare their ratio. 
%
\diadd{In this scenario,} 
if Red follows a recommended policy, then a utility ratio exceeding $1$ would 
\diadd{mean that there is a better policy for Blue than the recommended one, indicating a violation of the SPE condition \eqref{eq:multi_eq} in $s_0$.}
We report this ratio in Figure~\ref{fig:multi_agent_eq_prop}.
Although agents occasionally discover slightly more profitable policies, the average utility ratio hovers around 1, indicating an approximate SPE. 
%
\diadd{In the same scenario,}
if instead, Red acts according to its 
\diadd{own policy,}
then a utility ratio exceeding $1$ 
\diadd{for Blue would indicate a violation of the IC conditions \eqref{eq:multi_ds} in $s_0$.}
We report this ratio in Figure~\ref{fig:multi_agent_dominant_prop}. It behaves similarly, in that the average ratio again hovers around $1$. We conclude that 
the principal is finding a good approximation to a minimal implementation that maximizes social welfare.

We report additional results for a smaller grid world in \cref{app:implement_multi}, where we find that a form of nudging is necessary for good performance.

\section{Conclusion}

In this work, we develop a scalable framework that combines contract theory and reinforcement learning to address decentralized single- and multi-agent RL problems. Our framework offers formal definitions, a simple algorithmic blueprint, and a proof of convergence to subgame-perfect equilibrium. We implement our approach with deep RL, validate its performance empirically, and introduce a theoretical analysis in the presence of approximation errors, including a nudging mechanism that uses small extra payments to discourage potential deviations by the agent. We demonstrate the framework's effectiveness for solving sequential social dilemmas, showing how it can maximize social welfare with minimal intervention to agents' rewards. Our findings suggest several promising directions for future research, including scaling to even more complex environments, exploring partially-observable settings where different agents have different observations, and incorporating randomized contracts for enhanced agent coordination. We believe our work will inspire further exploration at the intersection of principal-agent theory and RL, potentially yielding novel solutions to complex AI agent interactions.


\bibliography{main}
\bibliographystyle{iclr/iclr2025_conference}

\newpage
\appendix

\addcontentsline{toc}{section}{Appendix} 
\part{Appendix} 
\parttoc 

\section{Related Work}\label{app:related_work}

\subsection{Automated Mechanism Design}\label{app:related_work_amd}

Our study concerns finding an optimal way to influence the behavior of one or multiple agents in an environment through optimization, it can thus be attributed to the automated mechanism design literature. The field was pioneered by \citet{conitzer2002complexity,conitzer2003automated,conitzer2004self}; the early approaches mostly concerned optimal auctions and relied on classic optimization and machine learning algorithms \citep{likhodedov2005approximating,lahaie2011kernel,sandholm2015automated,dutting2015payment,narasimhan2016automated}. The field has received a surge of interest with the introduction of RegretNet \citep{dutting2019optimal} -- a deep learning based approach to approximately incentive-compatible optimal auction design. This inspired multiple other algorithms for auction design \citep{rahme2021permutation,duan2022context,ivanov2022optimal,duan2023a,wang2024gemnet}, as well as applications of deep learning to other economic areas such as multi-facility location \citep{golowich2018deep}, two-sided matching markets \citep{ravindranath2021deep}, E-commerce advertising \citep{liu2021neural}, and data markets \citep{ravindranath2023data}.

Notably, a deep learning approach to contract design has been recently proposed by \citet{WangDITP23} as a viable alternative to linear programming in problems with a high number of actions and outcomes. Since our investigation focuses on scenarios where the primary source of complexity comes from large state spaces rather than action or outcome spaces, we do not use their approximation technique. 
At the heart of their approach is a novel neural network architecture specifically designed to approximate discontinuous functions. Given that the principal's Q-function, $Q^\rho(s, b \mid \pi)$, in our setting is discontinuous with respect to $b$, this architecture holds potential to bring further scalability and practicality to our approach; we leave this direction as future work.

\subsection{Algorithmic Contract Design}\label{app:related_work_acd}
A body of work studies repeated principal-agent interactions in games either stateless or with states unobserved by the principal (such as agent types). Depending on the model, learning an optimal payment scheme can be formalized as a bandit problem with arms representing discretized contracts \citep{conitzer2006learning,ho2014adaptive,zhu2022sample} or as a constrained dynamic programming problem \citep{renner2020discrete}.
\citet{scheid2024incentivized} extend the bandit formulation to a linear contextual setting and propose a learning algorithm that is near-optimal in terms of principal's regret.
\citet{zhang2023steering} formulate a so-called steering problem where a mediator can pay no-regret learners throughout repeated interactions and wishes to incentivize some desirable predetermined equilibrium while satisfying budget constraints.
Similarly, \citet{GuruganeshKS24} study a repeated principal-agent interaction in a canonical contract setting, with the agent applying no-regret learning over the course of interactions.
\citet{LiIL21} study contracts for afforestation with an underlying Markov chain; they do not extend to MDPs and do not apply learning.

There is also work on \emph{policy teaching}, which can be seen as the earliest examples of contract design in MDPs. \citet{zhang2009policy} study a problem of implementing a specific policy through contracts and solve it with linear programming.
\citet{zhang2008value} additionally aim to find the policy itself, which they show to be NP-hard and solve through mixed integer programming. Contemperaneously with the present work,~\citet{ben2023principal} extend these results by offering polynomial approximation algorithms for two special instances of MDPs.
These, as well as our work, can be seen as instances of a more general \emph{environment design} problem \citep{zhang2009general}.
Crucially, these works focus on MDPs of up to a hundred states. By employing deep RL, we extend to much larger MDPs. Our approach also generalizes to hidden-action and multi-agent MDPs.

\citet{monderer2003k} propose \emph{k-implementation}, which can be seen as contract design applied to normal-form games. Specifically, the principal wants to implement (incentivize) some desirable outcome and can pay for the joint actions of the agents. The goal is to find the k-implementation (we call it a minimal implementation), i.e., such payment scheme that the desirable outcome is dominant-strategy incentive compatible for all agents, while the realized payment $k$ for this outcome is minimal. Our multi-agent problem setup can be seen as learning a minimal implementation of a social welfare maximizing strategy profile in a Markov game.

Related to algorithmic contract design is a problem of delegating learning tasks in the context of incentive-aware machine learning \citep{AnanthakrishnanBJH23,SaigTR23}. These studies concern a principal properly incentivizing agent(s) through contracts to collect data or train an ML model in a one-shot interaction.

\subsection{Multi-Agent RL (MARL)}\label{app:related_work_marl}
In our applications, we focus on general-sum Markov games where naively trained agents fail to engage in mutually beneficial cooperation -- colloquially known as ``Sequential Social Dilemmas'' or SSDs \citep{leibo2017multi}. 
The solution concepts can be divided into two broad categories.
The majority of studies take a purely computational perspective, arbitrarily modifying the agents' reward functions \citep{peysakhovich2018consequentialist,peysakhovich2018prosocial,hughes2018inequity,jaques2019social,wang2019evolving,eccles2019learning,jiang2019learning,durugkar2020balancing,yang2020learning,ivanov2021balancing,zimmer2021learning,phan2022emergent} or training procedures \citep{gupta2017cooperative,foerster2018learning,willi2022cola,zhao2022proximal} in order to maximize the aggregate reward.
Alternative solutions view the problem as aligning the players' incentives by modifying the rules of the game to induce better equilibria. Examples include enabling agents to delegate their decision making to a mediator \citep{ivanov2023mediated}, allowing agents to review each others' policies prior to decision making \citep{oesterheld2023similarity}, and adding a cheap-talk communication channel between agents \citep{lin2023information}. 
Our work should be attributed to the second category as we model the principal-agents interaction as a game.
While the principal effectively modifies agents' reward functions, the payments are costly. The question is then how to maximize social welfare through minimal intervention, which is an open research question.

The works on adaptive mechanism design \citep{baumann2020adaptive,guresti2023iq} can be seen as precursors of contract design for SSDs. These consider augmenting the game with a principal-like planning agent that learns to distribute additional rewards and penalties, the magnitude of which is either limited heuristically or handcoded. Importantly, the planning agent is not considered a player, and thus the equilibria are not analyzed.

Most relevant to us, \citet{christoffersen2023get} consider a contracting augmentation of SSDs. Before an episode begins, one of the agents proposes a zero-sum reward redistribution scheme that triggers according to predetermined conditions. Then, the other agents vote on accepting it, depending on which the episode proceeds with the original or modified rewards. Because the conditions are handcoded based on domain knowledge, the contract design problem reduces to finding a one-dimensional parameter from a discretized interval that optimizes the proposal agent's welfare, and by the symmetry of contracts, the social welfare. Besides the technicality that the principal in our setting is external to the environment, a crucial distinction is that we employ contracts in full generality, allowing the conditions for payments to emerge from learning. We empirically verify that this may result in a performance gap. To adapt this approach to the Coin Game and relax the domain knowledge assumption, we 1) duplicate parts of rewards rather than redistribute them, which can also be interpreted as payments by an external principal, and 2) allow each agent to immediately share the duplicated part of its reward with the other agent, rather than a constant value every time a handcoded condition is met. In experiments, we refer to this as a `constant proportion baseline'.

This work only covers fully observable environments, but our method could potentially be extended to partial observability, limiting the information available to the principal and the agents to local observations. In this regard, our method may be considered as having decentralized execution. While the presence of a principal as a third party may be considered a centralized element, even this could be alleviated through the use of cryptography \citep{sun2023cooperative}.


\section{Proofs and Derivations (Sections \ref{sec:economic} and \ref{sec:learning})}\label{app:proof}

\begin{table}[t]
    \centering
    \caption{Comparison of standard and Principal-Agent MDPs}
    \label{tab:mdp}
    \begin{tabular}{llll}
    \toprule
    & MDP & \multicolumn{2}{l}{\hspace{20pt} Principal-Agent MDP} \\
    & & observed action & hidden action \\
    \midrule
    States & $S$ & $S$ & $S$ \\
    Agent's actions ($n$ elements) & $A$ & $A$ & $A$ \\
    Outcomes ($m$ elements) & -- & -- & $O$ \\
    Principal's actions & --- & $B\subset \mathbb{R}_{\geq 0}^n$ & $B\subset \mathbb{R}_{\geq 0}^m$ \\
    MDP transitioning & $s' \sim \mathcal{T}(s, a)$ & $s' \sim \mathcal{T}(s, a)$ & $o \sim \mathcal{O}(s, a), s' \sim \mathcal{T}(s, o)$ \\
    Agent's reward & $r(s, a)$ & $r(s, a) + b(a)$ & $r(s, a) + b(o)$ \\
    Principal's reward & --- & $r^p(s, a) - b(a)$ & $r^p(s, o) - b(o)$ \\
    Agent's policy & $\pi(s)$ & $\pi(s, b)$ & $\pi(s, b)$ \\
    Principal's policy & --- & $\rho(s)$ & $\rho(s)$ \\
    Agent's value & $V^\pi(s)$ & $V^\pi(s, b \mid \rho)$ & $V^\pi(s, b \mid \rho)$ \\
    Principal's value & --- & $V^\rho(s \mid \pi)$ & $V^\rho(s \mid \pi)$ \\
    \bottomrule
    \end{tabular}
\end{table}

In this appendix, we provide formal proofs and derivations for the results in Sections \ref{sec:economic} and \ref{sec:learning}. 
\cref{app:proof_example_revisited} provides additional intuition on the differences between the solution concepts of Stackelberg and Subgame-Perfect Equilibria.
\cref{app:proof_mdps} supplements \cref{obs:mdps} and defines the principal's and agent's MDPs.
\cref{app:operators} defines contraction operators useful for succeeding proofs and connects these operators to the modified Q-functions defined in \cref{sec:learning}.
Appendices \ref{app:proof_abstract} and \ref{app:proof_contraction} present the respective proofs of Theorems \ref{th:abstract_algorithm} and \ref{th:contraction}. 
Whereas the theory in the main text focuses on the finite-horizon hidden-action scenario, we also discuss the infinite-horizon and observed-action scenarios in appendices \ref{app:proof_infinite} and \ref{app:proof_one_iter}, respectively. 
The linear program formulated in  \cref{sec:learning} is derived and described in more detail in \cref{app:proof_contract}. Finally, in \cref{app:error} we analyze convergence of the meta-algorithm in the presence of approximation errors, focusing on error propagation and utility guarantees, and propose the nudging method.

%
%
%

\setcounter{subsection}{-1}

\subsection{Comparison of Standard and Principal-Agent MDPs}\label{app:proof_comparison_mdps}\label{sub:comparison_mdps}

\cref{tab:mdp} summarizes the differences between standard MDPs and Principal-Agent MDPs with and without hidden actions.

\subsection{Stackelberg vs Subgame-Perfect Equilibrium (Example in Figure~\ref{fig:example}, Revisited)}\label{app:proof_example_revisited}

First, consider the SPE notion:
At the left subgame, the principal incentivizes the agent to take noisy-left by choosing a contract that pays $1$ for outcome $L$ and $0$ otherwise. This way, both actions yield the same value for the agent, $0.9 \cdot 1 + 0.1 \cdot 0 - 0.8 = 0.1 \cdot 1 + 0.9 \cdot 0 = 0.1$, and the agent chooses $a_L$ by tie-breaking in favour of the principal. So the principal's value in state $s_L$ is $0.9(r^p(s_L,L)-1)=0.9(\frac{14}{9}-1)=0.5$. 
By the same logic, the principal offers the same contract in $s_R$ and its value equals $0.5$ (and the agent's value equals $0.1$).
Then, in $s_0$, the agent is indifferent between transitioning to $s_L$ and $s_R$, so the principal has to offer the same contract again. The principal's value in $s_0$ (given that the agent chooses $a_L$) is $0.5+0.9\cdot0.5+0.1\cdot0.5=1$, and the agent's value is $0.1 + 0.9\cdot0.1+0.1\cdot0.1 = 0.2$. These are estimated by considering utilities in $s_L$ and $s_R$ without discounting (using $\gamma=1$).
Note that in this analysis, we found SPE using backward induction: we first analyzed the terminal states, then the root state.

Compare this with Stackelberg:
If non-credible threats were allowed,
the principal could threaten to act suboptimally in the right subgame by always paying the agent $0$. Both principal and agent then value $s_R$ at $0$. In $s_L$, the contract is the same as in SPE. Knowing this, the agent values $s_L$ over $s_R$ ($0.1 > 0$), which would drive it to choose the noisy-left action at the root state even if the principal pays only $1 - 0.1 = 0.9$ for outcome $L$.
By paying less, the principal's utility (value in $s_0$) would be higher compared to SPE, $(\frac{14}{9} - 0.9 + 0.5) \cdot 0.9 = 1.04 > 1$.

This illustrates how Stackelberg equilibrium may produce more utility for the principal, but the surplus comes at the cost of the inability to engage in mutually beneficial contractual agreements in certain subgames, even if these subgames are reached by pure chance and despite the agent's best efforts. On the one hand, Stackelberg requires more commitment power from the principal. Whereas in SPE the agent can be certain that the principal sticks to its policy in future states because it is optimal in any subgame, in Stackelberg, the principal has to preemptively commit to inefficient contracts and ignore potential beneficial deviations. On the other hand, Stackelberg is not robust to mistakes: if the agent mistakenly chooses the wrong action, it might be punished by the principal, losing utility for both players. This is especially concerning in the context of learning, where mistakes can happen due to approximation errors, or even due to the agent exploring (in online setups). SPE is hence a more practical solution concept.

\subsection{Principal's and Agent's MDPs (\cref{obs:mdps})}\label{app:proof_mdps}

A (standard) MDP is a tuple $(S, S_0, A, \mathcal{R}, \mathcal{T}, \gamma)$, where $S$ is a set of states, $S_0 \in S$ is a set of possible initial states, $A$ is a set of actions, $\mathcal{T}: S \times A \rightarrow \Delta(S)$ is a stochastic transition function, $\mathcal{R}: S \times A \times S \rightarrow \mathcal{P}(\mathbb{R})$ is a stochastic reward function ($\mathcal{R}(s, a, s')$ is a distribution and may depend on the next state $s'$), $\gamma$ is an (optional) discounting factor. Assume finite horizon (as in \cref{sec:model}).

Consider a hidden-action principal-agent MDP $\mathcal{M} = (S, s_0, A, B, O, \mathcal{O}, \mathcal{R}, \mathcal{R}^p, \mathcal{T}, \gamma)$ as defined in \cref{sec:model}. First, consider the agent's perspective. Let the principal's policy be some $\rho$.
This defines a standard MDP $(S^a, S_0^a, A, \mathcal{R}^a, \mathcal{T}^a, \gamma)$ that we call the agent's MDP, where: The set of states is $S^a = S \times B$ (infinite unless $B$ is discretized). The set of initial states is $S_0 = \{s_0\} \times B$. The set of actions is $A$. 

The transition function $\mathcal{T}^a: S^a \times A \rightarrow \Delta(S^a)$ defines distributions over next state-contract pairs $(s', \rho(s'))$, where $s' \sim \mathcal{T}(s, o \sim \mathcal{O}(s, a))$. Note that the next state $s'$ is sampled from a distribution that is a function of state-action pairs $(s, a)$ marginalized over outcomes, and the next contract $\rho(s')$ is given by the principal's policy. Informally, the agent expects the principal to stick to its policy in any future state. At the same time, since the state space $S^a$ is defined over the set of contracts $B$, the agent may adapt its policy to any immediate contract in the current state, and thus its policy $\pi: S^a \rightarrow \Delta(A)$ is defined by $\pi(s, b)$ for any $b \in B$.

The reward function is a bit cumbersome to formalize because it should not depend on outcomes, while both $\mathcal{O}$ and $\mathcal{T}$ can be stochastic. Specifically, the new reward function has to be stochastic w.r.t. outcomes: $\mathcal{R}^a((s, b), a, s') = [r(s, a) + b(o) \mid o \sim P(o \mid s, a, s')]$, where the conditional distribution of outcomes is given by $P(o \mid s, a, s') = \frac{P(\mathcal{O}(s, a) = o) P(\mathcal{T}(s, o) = s')}{\sum_{o^*} P(\mathcal{O}(s, a) = o^*) P(\mathcal{T}(s, o^*) = s')}$.

Next, consider the principal's perspective. Let the agent's policy be some $\pi$. This defines a standard MDP $(S, \{s_0\}, B, \mathcal{R}^p, \mathcal{T}^p, \gamma)$ that we call the principal's MDP, where: the set of states is $S$; the set of initial states is $\{s_0\}$; the set of actions is $B$ (infinite unless discretized); the transition function is defined by $\mathcal{T}^p(s, b) = \mathcal{T}(s, o \sim \mathcal{O}(s, a \sim \pi(s, b)))$; the reward function is defined similarly to the agent's MDP by $\mathcal{R}^p(s, b, s') = [r(s, o) - b(o) \mid o \sim P(o \mid s, a, s')]$.

Thus, the optimization task of the principal (agent) given a fixed policy of the agent (principal) can be cast as a standard single-agent MDP. In both players' MDPs, the other player's presence is implicit, embedded into transition and reward functions.
Note that our definition of standard MDP allows for stochastic reward functions that depend on the next state, as well as a set of initial states that is not a singleton. The same generalizations can be made in our Principal-Agent MDP definition, and in particular, the above derivations would still hold, but we decided to slightly specify our model for brevity.

\subsection{Contraction Operators and their Fixed Points}\label{app:operators}

Our meta-algorithm iteratively solves a sequence of principal's and agent's MDPs as defined in \cref{app:proof_mdps}. Both tasks can be performed with Q-learning and interpreted as finding fixed points of the Bellman optimality operator in the respective MDPs. Furthermore, our learning approach in \cref{sec:learning} makes use of modified Q-functions, which are also fixed points of contraction operators. For convenience, we define all four operators below. Where necessary, we prove the operators being contractions and define their fixed points.

\subsubsection{Optimality operators in the agent's MDP}

Here we assume some fixed principal's policy $\rho$ that defines an agent's MDP.

\paragraph{Bellman optimality operator.}

Consider the agent's optimization task (line \ref{state:abstract_inner} of the meta-algorithm). Solving it implies finding a fixed point of an operator $\mathcal{S}_\rho$, which is the Bellman optimality operator in the agent's MDP defined by a principal's policy $\rho$.

\begin{definition}
\label{def:agent-op-fp}
    Given an agent's MDP defined by a principal's policy $\rho$, the Bellman optimality operator $\mathcal{S}_\rho$ is defined by 

    \begin{equation}
    \label{eq:agent_bellman}
    (\mathcal{S}_\rho Q)((s, b), a) = \mathbb{E}_{o \sim \mathcal{O}(s, a), s' \sim \mathcal{T}(s, o)} \Bigl[ \bigl(\mathcal{R}(s, a, b, o)  + 
    \gamma \max_{a'} Q((s', \rho(s')), a') \bigr) \Bigr],
    \end{equation}

    where $\mathcal{R}(s, a, b, o) = r(s, a) + b(o)$, $Q$ is an element of a vector space $\mathcal{Q}_{SBA} = \{S \times B \times A \rightarrow \mathbb{R} \}$, and subscript $\rho$ denotes conditioning on the principal's policy $\rho$.
\end{definition}

This operator is a contraction and admits a unique fixed point $Q^*(\rho)$ that satisfies:

\begin{equation}\label{eq:value_SPE_agent}
    V^*(s, b \mid \rho) = \max_a Q^*((s, b), a \mid \rho),
\end{equation}

\begin{equation}\label{eq:q_value_SPE_agent}
    Q^*((s, b), a \mid \rho) = \mathbb{E} \Bigl[ \mathcal{R}(s, a, b, o) + \gamma \max_{a'} Q^*((s', \rho(s')), a' \mid \rho) \Bigr].
\end{equation}

The policy corresponding to the fixed point is called the best-responding policy $\pi^*(\rho)$:
$$\forall s \in S, b \in B: \pi^*(s, b \mid \rho) = \argmax_a Q^*((s, b), a \mid \rho),$$
\noindent where ties are broken in favour of the principal.

\paragraph{Truncated Bellman optimality operator.}
Here we show that the truncated Q-function $\overline{Q}^*(\rho)$ defined in the main text \eqref{eq:truncated} is a fixed point of a contraction operator and thus can be found with Q-learning.


\begin{definition}
\label{def:truncated_operator}
    Given an agent's MDP defined by principal's policy $\rho$, the truncated Bellman optimality operator $\overline{S}_\rho$ is defined by
    \begin{equation}\label{eq:def_truncated_operator}
    (\overline{\mathcal{S}}_\rho \overline{Q})(s, a) = r(s, a) + \gamma \mathbb{E}_{o \sim \mathcal{O}(s, a), s' \sim \mathcal{T}(s, o)} \max_{a'} \bigl[ \mathbb{E}_{o' \sim \mathcal{O}(s', a')}\rho(s')(o') + \overline{Q}(s', a') \bigr],
    \end{equation}
    \noindent where $\overline{Q}$ is an element of a vector space $\mathcal{Q}_{SA} = \{ S \times A \rightarrow \mathbb{R} \}$.
\end{definition}

\begin{lemma}
\label{lemma:contraction_truncated}
    Operator $\overline{\mathcal{S}}_\rho$ is a contraction in the sup-norm.\footnote{In lemmas \ref{lemma:contraction_truncated} and \ref{lemma:contraction}, we show for a case that includes finite- and infinite-horizon MDPs, but requires $\gamma < 1$. For $\gamma = 1$ and finite-horizon MDPs, per-timestep operators can be shown to be contractions, similar to the Bellman operator in chapter 4.3 of \citet{puterman2014markov}.}
\end{lemma}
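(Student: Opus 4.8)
The plan is to establish the contraction property directly from the definition: I will show that for any two truncated Q-functions $\overline{Q}_1, \overline{Q}_2 \in \mathcal{Q}_{SA}$,
$$\|\overline{\mathcal{S}}_\rho \overline{Q}_1 - \overline{\mathcal{S}}_\rho \overline{Q}_2\|_\infty \leq \gamma \|\overline{Q}_1 - \overline{Q}_2\|_\infty,$$
which for $\gamma < 1$ is exactly what is required. First I would fix an arbitrary pair $(s, a)$ and write out the difference $(\overline{\mathcal{S}}_\rho \overline{Q}_1)(s,a) - (\overline{\mathcal{S}}_\rho \overline{Q}_2)(s,a)$ using \eqref{eq:def_truncated_operator}. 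The reward term $r(s,a)$ does not depend on the Q-function and cancels, leaving only the $\gamma$-discounted expectation over $o \sim \mathcal{O}(s,a)$ and $s' \sim \mathcal{T}(s,o)$ of a difference of two maxima over $a'$.

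The key step is to control this difference of inner maxima. I would abbreviate the expected-payment term as $p(s', a') := \mathbb{E}_{o' \sim \mathcal{O}(s', a')} \rho(s')(o')$ and observe that it appears identically inside both maxima, since it does not depend on the argument $\overline{Q}$. Applying the elementary inequality $|\max_{a'} f(a') - \max_{a'} g(a')| \leq \max_{a'} |f(a') - g(a')|$ with $f(a') = p(s',a') + \overline{Q}_1(s',a')$ and $g(a') = p(s',a') + \overline{Q}_2(s',a')$, the common term $p(s',a')$ cancels inside the absolute value, so the difference of maxima is bounded by $\max_{a'} |\overline{Q}_1(s',a') - \overline{Q}_2(s',a')| \leq \|\overline{Q}_1 - \overline{Q}_2\|_\infty$. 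This cancellation is the crux of the argument, and I expect it to be the only place requiring care: although the payment contribution $p(s',a')$ varies with $a'$, it is \emph{shared} between the two operator images, and this is precisely what makes the proof go through despite the slightly nonstandard form of $\overline{\mathcal{S}}_\rho$.

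Finally, I would note that the expectation over $(o, s')$ is an average of quantities each bounded by $\|\overline{Q}_1 - \overline{Q}_2\|_\infty$, hence itself bounded by the same constant; multiplying by $\gamma$ and then taking the supremum over $(s,a)$ yields the stated contraction with modulus $\gamma$. For the finite-horizon case with $\gamma = 1$ flagged in the footnote, the same computation applied per time step—where the operator does not reference values within the same layer—gives a contraction in the relevant restricted sense, following Chapter 4.3 of \citet{puterman2014markov}. Since $\overline{\mathcal{S}}_\rho$ is then a contraction on the complete metric space $(\mathcal{Q}_{SA}, \|\cdot\|_\infty)$, the Banach fixed-point theorem gives a unique fixed point, which one checks is the truncated optimal Q-function $\overline{Q}^*(\rho)$ of \eqref{eq:truncated}.
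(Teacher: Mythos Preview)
Your proposal is correct and follows essentially the same route as the paper: cancel the $r(s,a)$ term, apply the elementary inequality $|\max_{a'} f(a') - \max_{a'} g(a')| \leq \max_{a'} |f(a') - g(a')|$ so that the shared payment term $p(s',a')$ drops out, and then bound the expectation by the sup-norm to obtain modulus $\gamma$. The additional remarks you make about the $\gamma=1$ case and the Banach fixed-point consequence also match what the paper does immediately after the lemma.
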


\begin{proof}
    Let $\overline{Q}_1, \overline{Q}_2 \in \mathcal{Q}_{SA}$, $\gamma \in [0, 1)$.
    The operator $\overline{\mathcal{S}}_\rho$ is a contraction in the sup-norm if it satisfies $\| \overline{\mathcal{S}}_\rho \overline{Q}_1 - \overline{\mathcal{S}}_\rho \overline{Q}_2 \|_{\infty} \leq \gamma \| \overline{Q}_1 - \overline{Q}_2 \|_{\infty}$. This inequality holds because:

    \begin{equation*}
    \begin{split}
    \|\overline{\mathcal{S}}_\rho \overline{Q}_1  - \overline{\mathcal{S}}_\rho \overline{Q}_2 \|_{\infty} = 
    & \max_{s, a} \Bigl| \gamma \mathbb{E}_{o, s'} \bigl[ \max_{a'} (\mathbb{E}_{o'} \rho(s')(o') + \overline{Q}_1(s', a')) - \\
    & \hspace{52pt} \max_{a'} (\mathbb{E}_{o'} \rho(s')(o') + \overline{Q}_2(s', a')) \bigr] + r(s, a) - r(s, a) \Bigr| \leq \\
    & \max_{s, a} \Bigl| \gamma \mathbb{E}_{o, s'} \max_{a'} \mathbb{E}_{o'} \bigl[ \rho(s')(o') + \overline{Q}_1(s', a') -  \rho(s')(o') - \overline{Q}_2(s', a') \bigr] \Bigr| = \\
    & \max_{s, a} \Bigl| \gamma \mathbb{E}_{o, s'} \max_{a'} \bigl[ \overline{Q}_1(s', a') - \overline{Q}_2(s', a') \bigr] \Bigr| \leq \\
    & \max_{s, a} \gamma \max_{s', a'} \left| \overline{Q}_1(s', a') -  \overline{Q}_2(s', a') \right| = \gamma \| \overline{Q}_1 -  \overline{Q}_2 \|_{\infty}.
    \end{split}
    \end{equation*}
\end{proof}

Because $\overline{\mathcal{S}}_\rho$ is a contraction as shown in Lemma~\ref{lemma:contraction_truncated}, by the Banach theorem, it admits a unique fixed point $\overline{Q}_\rho^*$ s.t. $\forall s, a: \overline{Q}_\rho^*(s, a) = (\overline{\mathcal{S}}_\rho \overline{Q}_\rho^*)(s, a)$. We now show that this fixed point is the truncated Q-function. Define $Q_\rho((s, b), a) = \mathbb{E}_{o \sim \mathcal{O}(s, a)} b(o) + \overline{Q}_\rho^*(s, a )$. Notice that the fixed point satisfies:
\begin{equation*}
\begin{split}
    \forall s \in S, a \in A:\hspace{5pt}& \overline{Q}_\rho^*(s, a) = (\overline{\mathcal{S}}_\rho \overline{Q}_\rho^*)(s, a) \overset{(Eq.~\ref{eq:def_truncated_operator})}{=} \\
    &r(s, a) + \gamma \mathbb{E}_{o, s'} \max_{a'} \bigl[ \mathbb{E}_{o'}\rho(s')(o') + \overline{Q}_\rho^*(s', a') \bigr] = \\
    & r(s, a) + \gamma \mathbb{E}_{o, s'} \max_{a'} Q_\rho((s', \rho(s')), a').
\end{split}
\end{equation*}
At the same time, by definition:
\begin{equation*}
    \forall s \in S, a \in A:\hspace{5pt} \overline{Q}_\rho^*(s, a) = Q_\rho((s, b), a) - \mathbb{E}_{o \sim \mathcal{O}(s, a)} b(o).
\end{equation*}
Combining the above two equations and swapping terms:

\begin{equation*}
    \forall s \in S, a \in A:\hspace{5pt} Q_\rho((s, b), a) = r(s, a) + \mathbb{E}_{o, s'} [ b(o) + \gamma \max_{a'} Q_\rho((s', \rho(s')), a') ].
\end{equation*}
Notice that the last equation shows that $Q_\rho$ is the fixed point of the Bellman optimality operator $S_\rho$ \eqref{eq:agent_bellman}, i.e., $Q_\rho \equiv Q^*(\rho)$, as it satisfies the optimality equations \eqref{eq:q_value_SPE_agent}. It follows that $Q^*((s, b), a \mid \rho) = \mathbb{E}_{o} b(o) + \overline{Q}_\rho^*(s, a)$, and thus $\overline{Q}_\rho^*$ satisfies the definition of the truncated Q-function \eqref{eq:truncated}, i.e., $\overline{Q}_\rho^* \equiv \overline{Q}^*(\rho)$. The truncated Q-function is then a fixed point of a contraction operator and can be found with Q-learning. It can also be used to compute the best-responding policy: $\pi^*(s, b \mid \rho) = \argmax_a [ \mathbb{E}_o b(o) + \overline{Q}^*(s, a \mid \rho) ]$.

\subsubsection{Optimality operators in the principal's MDP}

Here we assume some fixed best-responding agent's policy $\pi^*(\rho)$ that defines a principal's MDP. While we could instead assume an arbitrary policy $\pi$, we are only interested in solving the principal's MDP as the outer level of the meta-algorithm, which always follows the inner level that outputs an agent's policy $\pi^*(\rho)$ best-responding to some $\rho$.

\paragraph{Bellman optimality operator.}
Consider the principal's optimization level (line \ref{state:abstract_outer} of the meta-algorithm). Solving it implies finding a fixed point of an operator $\mathcal{B}_\rho$, which is the Bellman optimality operator in the principal's MDP defined by the agent's policy $\pi^*(\rho)$ is best-responding to some $\rho$. 

\begin{definition}
\label{def:principal-op-fp}
    Given a principal's MDP defined by the agent's policy $\pi^*(\rho)$ best-responding to some $\rho$, the Bellman optimality operator $\mathcal{B}_\rho$ is defined by 

    \begin{equation}
    \label{eq:principal_bellman}
    (\mathcal{B}_\rho Q)(s, b) = \mathbb{E}_{o \sim \mathcal{O}(s, a), s' \sim \mathcal{T}(s, o)} \Bigl[ \bigl( \mathcal{R}^p(s, b, o) + 
    \gamma \max_{b'} Q(s', b') \bigr) \mid a = \pi^*(s, b \mid \rho) \Bigr],
    \end{equation}

    where $\mathcal{R}^p(s, b, o) = r^p(s, o) - b(o)$, $Q$ is an element of a vector space $\mathcal{Q}_{SB} = \{S \times B \rightarrow \mathbb{R} \}$, and subscript $\rho$ denotes conditioning on the agent's best-responding policy $\pi^*(\rho)$.
\end{definition}

This operator is a contraction and admits a unique fixed point $Q^*(\pi^*(\rho))$ that satisfies optimality equations:

\begin{equation}\label{eq:value_SPE}
    V^*(s \mid \pi^*(\rho)) = \max_b Q^*(s, b \mid \pi^*(\rho)),
\end{equation}

\begin{equation}\label{eq:q_value_SPE}
    Q^*(s, b \mid \pi^*(\rho)) = \mathbb{E} \Bigl[ \bigl( \mathcal{R}^p(s, b, o) + 
    \gamma \max_{b'} Q^*(s', b' \mid \pi^*(\rho)) \bigr) \mid a = \pi^*(s, b \mid \rho) \Bigr].
\end{equation}

The policy corresponding to the fixed point is called the subgame-perfect policy $\rho^*(\pi^*(\rho))$:
$$\forall s \in S: \rho^*(s \mid \pi^*(\rho)) = \argmax_b Q^*(s, b \mid \pi^*(\rho)).$$

\paragraph{Contractual Bellman optimality operator.}
Here we show that the contractual Q-function $q^*(\pi^*(\rho))$ defined in the main text \eqref{eq:contractual_q_function} is a fixed point of a contraction operator and thus can be found with Q-learning.

\begin{definition}
\label{def:contract_operator}
    Given a principal's MDP defined by the agent's policy $\pi^*(\rho)$ best-responding to some $\rho$, the contractual Bellman optimality operator $\mathcal{H}_\rho$ is defined by
    \begin{equation}\label{eq:def_contract_operator}
    (\mathcal{H}_\rho q)(s, a^p) 
    = \max_{\{b \mid \pi^*(s, b \mid \rho) = a^p\}} \mathbb{E}_{o \sim \mathcal{O}(s, a^p), s' \sim \mathcal{T}(s, o)} \Bigl[ \mathcal{R}^p(s, b, o) + \gamma \max_{a'} q(s', a')\Bigr],
    \end{equation}
    \noindent where $q$ is an element of a vector space $\mathcal{Q}_{SA} = \{ S \times A \rightarrow \mathbb{R} \}$, and $a^p \in A$ denotes the principal's \emph{recommended} action.
\end{definition}



\begin{lemma}
\label{lemma:contraction}
    Operator $\mathcal{H}_\rho$ is a contraction in the sup-norm.
\end{lemma}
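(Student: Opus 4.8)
The plan is to mirror the argument used for the truncated operator in Lemma~\ref{lemma:contraction_truncated}, combining the standard ``max of differences dominates the difference of maxima'' inequality with the fact that the principal's reward term $\mathcal{R}^p(s,b,o)=r^p(s,o)-b(o)$ does not depend on the iterate. Fix any $q_1,q_2\in\mathcal{Q}_{SA}$ and any pair $(s,a^p)$. First I would denote by $f_i(b)=\mathbb{E}_{o\sim\mathcal{O}(s,a^p),\,s'\sim\mathcal{T}(s,o)}\bigl[\mathcal{R}^p(s,b,o)+\gamma\max_{a'}q_i(s',a')\bigr]$ the quantity being maximized over the constrained set $C(s,a^p)=\{b\mid\pi^*(s,b\mid\rho)=a^p\}$, so that $(\mathcal{H}_\rho q_i)(s,a^p)=\max_{b\in C(s,a^p)}f_i(b)$. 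The crucial structural point is that $C(s,a^p)$ depends only on $\rho$, $s$, and $a^p$, and never on the iterate $q_i$; hence both operators maximize over the \emph{same} feasible set, which is exactly what the difference-of-maxima bound requires.

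Next I would apply $|\max_{b}f_1(b)-\max_{b}f_2(b)|\le\max_{b}|f_1(b)-f_2(b)|$ over $b\in C(s,a^p)$. Because the reward term cancels in $f_1(b)-f_2(b)$ (the expectation is taken over the identical distribution determined by $s$ and $a^p$), what remains is $\gamma\,\mathbb{E}_{o,s'}\bigl[\max_{a'}q_1(s',a')-\max_{a'}q_2(s',a')\bigr]$. I would then pass the absolute value inside the expectation via the triangle inequality, use the pointwise bound $|\max_{a'}q_1(s',a')-\max_{a'}q_2(s',a')|\le\max_{a'}|q_1(s',a')-q_2(s',a')|\le\|q_1-q_2\|_\infty$, and observe that this bound is uniform in $o$, $s'$, and $b$. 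Taking the supremum over $(s,a^p)$ then yields $\|\mathcal{H}_\rho q_1-\mathcal{H}_\rho q_2\|_\infty\le\gamma\|q_1-q_2\|_\infty$, establishing the contraction for $\gamma<1$ (the $\gamma=1$ finite-horizon case being handled by the per-timestep argument noted in the footnote to Lemma~\ref{lemma:contraction_truncated}).

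The one genuinely delicate point---and the step I would flag as the main obstacle---is justifying the difference-of-maxima inequality over a constrained, $s$-dependent, and possibly empty feasible set $C(s,a^p)$. The inequality holds precisely because the constraint is identical for $q_1$ and $q_2$; if the constraint were allowed to depend on the iterate, the cancellation of $\mathcal{R}^p$ would break down. I would therefore either assume $C(s,a^p)\neq\emptyset$ for the recommended actions of interest (consistent with how $q^*$ is used in Equation~\eqref{eq:contractual_q_function}), or adopt the convention that the maximum over an empty set is $-\infty$ and note that such $(s,a^p)$ admit no recommendable action and can be excluded, so the sup-norm estimate is unaffected. Everything else is a routine repetition of the contraction estimate already carried out for $\overline{\mathcal{S}}_\rho$.
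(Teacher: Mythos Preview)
Your proposal is correct and follows essentially the same argument as the paper: both exploit that the constraint set $\{b\mid\pi^*(s,b\mid\rho)=a^p\}$ is independent of the iterate and that the $q$-dependent term is independent of $b$, so the reward term drops out and the standard $\gamma$-contraction estimate applies. The paper writes the second step as an equality (since $f_1(b)-f_2(b)$ is actually constant in $b$), whereas you route through the inequality $|\max_b f_1-\max_b f_2|\le\max_b|f_1-f_2|$; your added remarks on the common feasible set and the empty-set convention are more explicit than the paper but do not change the argument.
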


\begin{proof}
    Let $q_1, q_2 \in \mathcal{Q}_{SA}$, $\gamma \in [0, 1)$. The operator $\mathcal{H}_\rho$ is a contraction in the sup-norm if it satisfies $\| \mathcal{H}_\rho q_1 - \mathcal{H}_\rho q_2 \|_{\infty} \leq \gamma \| q_1 - q_2 \|_{\infty}$. This inequality holds because:

    \begin{equation*}
    \begin{split}
    \|\mathcal{H}_\rho q_1  - \mathcal{H}_\rho q_2 \|_{\infty} = 
    & \max_{s, a} \Bigl| \max_{\{b \in B  \mid \pi^*(s, b \mid \rho) = a\}} \mathbb{E} \bigl[\mathcal{R}^p(s, b, o) + \gamma \max_{a'} q_1(s', a')\bigr] - \\
    & \hspace{21pt}  \max_{\{b \in B  \mid \pi^*(s, b \mid \rho) = a\}} \mathbb{E} \bigl[\mathcal{R}^p(s, b, o) + \gamma \max_{a'} q_2(s', a') \bigr] \Bigr| = \\
    & \max_{s, a} \gamma \left| \mathbb{E} [\max_{a'} q_1(s', a') - \max_{a'} q_2(s', a')] \right| \leq \\
    & \max_{s, a} \gamma \mathbb{E} \left| \max_{a'} q_1(s', a') - \max_{a'} q_2(s', a') \right| \leq \\
    & \max_{s, a} \gamma \mathbb{E} \max_{a'} \left| q_1(s', a') -  q_2(s', a') \right| \leq \\
    & \max_{s, a} \gamma \max_{s', a'} \left| q_1(s', a') -  q_2(s', a') \right| = \gamma \| q_1 -  q_2 \|_{\infty}.
    \end{split}
    \end{equation*}
\end{proof}

Because $\mathcal{H}_\rho$ is a contraction as shown in Lemma~\ref{lemma:contraction}, by the Banach theorem, it admits a unique fixed point $q_\rho^*$ s.t. $\forall s, a^p: q_\rho^*(s, a^p) = (\mathcal{H}_\rho q_\rho^*)(s, a^p)$. We now show that this fixed point is the contractual Q-function. Notice that the fixed point satisfies:

\begin{equation}\label{eq:bellman_max}
\begin{split}
    \forall s \in S:\hspace{5pt}& \max_{a^p} q_\rho^*(s, a^p) = \max_{a^p} (\mathcal{H}_\rho q_\rho^*)(s, a^p) \overset{(Eq.~\ref{eq:def_contract_operator})}{=} \\
    &\max_b (\mathcal{H}_\rho q_\rho^*)(s, \pi^*(s, b \mid \rho)) = \max_b (\mathcal{H}_\rho(\mathcal{H}_\rho q_\rho^*))(s, \pi^*(s, b \mid \rho)) = \dots = \\
    &\max_b \mathbb{E} \sum_t \Bigl[\gamma^t \mathcal{R}^p(s_t, b_t, o_t) \mid  s_0 = s, b_0 = b, \pi^*(\rho) \Bigr] = \max_b Q^*(s, b \mid \pi^*(\rho)),
\end{split}
\end{equation}
and:
\begin{equation}\label{eq:bellman_all}
\begin{split}
    \forall s \in S, a^p \in A:\hspace{5pt}& q_\rho^*(s, a^p) = (\mathcal{H}_\rho q_\rho^*) (s, a^p) \overset{(Eq.~\ref{eq:def_contract_operator})}{=} \\
    & \max_{\{b \mid \pi^*(s, b \mid \rho) = a^p\}} \mathbb{E} \Bigl[ \mathcal{R}^p(s, b, o) + \gamma \max_{a'} q_\rho^*(s', a') \Bigr] \overset{(Eq.~\ref{eq:bellman_max})}{=} \\
    & \max_{\{b \mid \pi^*(s, b \mid \rho) = a^p\}} \mathbb{E} \Bigl[ \mathcal{R}^p(s, b, o) + \gamma \max_{b'} Q^*(s', b' \mid  \pi^*(\rho)) \Bigr] \overset{(Eq.~\ref{eq:q_value_SPE})}{=} \\
    & \max_{\{b \mid \pi^*(s, b \mid \rho) = a^p\}} Q^*(s, b \mid \pi^*(\rho)).
\end{split}
\end{equation}
Thus, $q_\rho^*$ satisfies the definition of the contractual Q-function \eqref{eq:contractual_q_function}, i.e., $q_\rho^* \equiv q^*(\pi^*(\rho))$. The contractual Q-function is then a fixed point of a contraction operator and can be found with Q-learning.
The contractual Q-function can also be used to compute the subgame-perfect principal's policy as $\rho^*(s) = \argmax_b (\mathcal{H}_\rho q_\rho^*)(s, \pi^*(s, b \mid \rho)) = \argmax_b Q^*(s, b \mid \pi^*(\rho))$. We address computing the $\argmax$ in \cref{app:proof_contract} using Linear Programming.

\subsection{Meta-Algorithm finds SPE (Proof of \cref{th:abstract_algorithm})}\label{app:proof_abstract}

\begin{observation}
\label{app:obs_subgame}
     A principal-agent stochastic game $\mathcal{G}$ is in SPE if and only if every subgame of $\mathcal{G}$ is in SPE.
\end{observation}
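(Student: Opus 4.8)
The plan is to exploit the fact that, under Definition~\ref{def:SPE}, being in SPE is equivalent to a collection of \emph{per-state} optimality conditions for a fixed profile $(\rho,\pi)$: namely, $(\rho,\pi)$ is an SPE of $\mathcal{G}$ iff (i) $\pi(s,b)=\pi^*(s,b\mid\rho)$ for every $s\in S$, $b\in B$, and (ii) $\rho(s)\in\argmax_b Q^*(s,b\mid\pi)$ for every $s\in S$. A subgame $\mathcal{G}'$ rooted at a state $s$ is, by construction, determined by the subset $S'\subseteq S$ of states reachable from $s$ together with the same transition, outcome, and reward structure restricted to $S'$. Thus the only difference between $\mathcal{G}$ and $\mathcal{G}'$ is the range of states over which conditions (i)--(ii) are quantified, and I read ``every subgame is in SPE'' as ``the restriction of $(\rho,\pi)$ to each subgame is an SPE of that subgame.''

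The key structural fact I would establish first is \emph{locality} of the Q-functions: for any $s'\in S'$, the best-responding and subgame-perfect Q-functions evaluated in $\mathcal{G}'$ coincide with those evaluated in $\mathcal{G}$. This holds because both are the unique fixed points of the same Bellman optimality operators (Definitions~\ref{def:agent-op-fp} and~\ref{def:principal-op-fp}), and---using the assumption that each state is associated with a unique time step, so that reachability is acyclic---the fixed-point value at $s'$ is determined entirely by the values at states downstream of $s'$, all of which lie in $S'$. Consequently, conditions (i)--(ii) at a state $s'\in S'$ carry exactly the same content whether they are read in $\mathcal{G}$ or in $\mathcal{G}'$.

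Given locality, both directions follow quickly. For the forward direction, suppose $(\rho,\pi)$ is an SPE of $\mathcal{G}$, so (i)--(ii) hold at every $s\in S$; restricting to an arbitrary subgame $\mathcal{G}'$ with states $S'\subseteq S$, the conditions still hold at every $s'\in S'$, and by locality these are precisely the SPE conditions of $\mathcal{G}'$, so the restricted profile is an SPE of $\mathcal{G}'$. For the converse, observe that for each $s\in S$ the subgame $\mathcal{G}_s$ rooted at $s$ is in SPE by hypothesis, which via locality yields conditions (i)--(ii) at the root $s$ itself; ranging over all $s\in S$ recovers (i)--(ii) everywhere, i.e.\ $(\rho,\pi)$ is an SPE of $\mathcal{G}$. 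Equivalently, $\mathcal{G}$ is its own subgame rooted at $s_0$, which already gives one half of the converse directly.

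The main obstacle I anticipate is making the locality claim fully rigorous, i.e.\ arguing that shrinking the state space to $S'$ does not change the fixed point of the Bellman operators at states within $S'$. The cleanest route is to use acyclicity (one time step per state) and induct backward on the time step, or equivalently to note that the Bellman recursion for the value at $s'$ only ever queries states reachable from $s'$; once this is in hand, everything else is bookkeeping about quantifier ranges.
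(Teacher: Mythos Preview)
Your proposal is correct. Note, however, that the paper does not actually supply a proof of this observation: it is stated bare and then immediately invoked, the authors evidently regarding it as an immediate consequence of Definition~\ref{def:SPE} (since both ``subgame-perfect against $\pi$'' and ``best-responding to $\rho$'' are already characterized by per-state optimality conditions ranging over \emph{all} $s\in S$). Your argument unpacks exactly this, and your explicit treatment of locality---that restricting the Bellman fixed points to a reachable subset $S'$ does not alter their values on $S'$, via backward induction on the time step---is the right way to make the tautology rigorous. The paper simply leaves this implicit.
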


\cref{app:obs_subgame} will be useful for the proofs in this section.

Let $S_t \subseteq S$ denote the set of all possible states at time step $t$. For example, $S_0 = \{s_0\}$. States in $S_T$ are terminal by definition.
We assume that sets $S_t$ are disjoint. This is without loss of generality, as we can always redefine the state space of a finite-horizon MDP such that the assumption holds; e.g., by concatenating the time step to a state as $s_t^{new} = (s_t, t)$. In other words, any finite-horizon MDP can be represented as a directed acyclic graph.

The next lemma is a precursor to proving the convergence of \cref{alg:both} and concerns its single iteration. Given a pair of policies that form an SPE in all subgames but the original game, it states that performing one additional iteration of the algorithm (update the agent, then the principal) yields an SPE in the game. This is because the agent observes the contract offered in a state and adapts its action, in accordance with our definition of the agent's MDP in \cref{app:proof_mdps}. In particular, the agent's best-responding policy is defined as $\pi^*(s, b \mid \rho)$ for \emph{any} pair $(s, b)$, so in $s_0$, the agent best-responds with $\pi^*(s_0, b \mid \rho)$ for any $b$ regardless of the principal's policy $\rho(s_0)$.

\begin{lemma}\label{lemma:subgame_spe}
    Given a finite-horizon principal-agent stochastic game $\mathcal{G}$ and a principal's policy $\rho$, if $(\rho, \pi^*(\rho))$ is an SPE in all subgames with a possible exception of $\mathcal{G}$ (i.e., all subgames in $s \in  S \setminus \{s_0\}$), then $(\rho^*(\pi^*(\rho)), \pi^*(\rho))$ is an SPE in $\mathcal{G}$.
\end{lemma}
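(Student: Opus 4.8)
The plan is to verify directly the two defining conditions of SPE from \cref{def:SPE} for the pair $(\rho', \pi)$, where I abbreviate $\pi := \pi^*(\rho)$ and $\rho' := \rho^*(\pi^*(\rho)) = \rho^*(\pi)$. The first condition, that $\rho'$ is subgame-perfect against $\pi$, holds immediately by construction: $\rho' = \rho^*(\pi)$ is by definition the optimal, hence subgame-perfect, policy in the principal's MDP induced by $\pi$ (an optimal policy of an MDP is optimal in every subgame). The entire difficulty is therefore concentrated in the second condition, namely that $\pi$ remains best-responding to the \emph{updated} principal policy $\rho'$, i.e. $\pi = \pi^*(\rho')$.

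The key structural fact I would exploit is that, under the standing convention that each state carries its time index (so $\mathcal{G}$ is a DAG and $s_0$ is the unique state at time $0$), the initial state $s_0$ is never a successor of any state. First I would pin down where $\rho$ and $\rho'$ can disagree: by hypothesis $(\rho, \pi)$ is an SPE in every subgame rooted at $s \ne s_0$, so applying \cref{def:SPE} to each such subgame forces $\rho$ to coincide with the canonical subgame-perfect policy there, giving $\rho(s) = \rho^*(s \mid \pi) = \rho'(s)$ for all $s \ne s_0$. Hence $\rho$ and $\rho'$ agree on every state except possibly $s_0$ itself.

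Next I would argue that this lone discrepancy at $s_0$ is invisible to the agent. Recall from the construction of the agent's MDP in \cref{app:proof_mdps} that the principal's policy enters only through the continuation contracts $\rho(s')$ at successor states $s'$, whereas the reward and chosen action at the current state respond to the \emph{actually offered} contract $b$ rather than to $\rho$. Since $s_0$ is never a successor, the value $\rho(s_0)$ appears nowhere in the transition or reward structure of the agent's MDP; combined with $\rho = \rho'$ off $s_0$, this shows that the agent's MDP induced by $\rho'$ is literally identical to the one induced by $\rho$. Their optimal policies therefore coincide, $\pi^*(\rho') = \pi^*(\rho) = \pi$, which is exactly the second SPE condition. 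With both conditions of \cref{def:SPE} verified, $(\rho', \pi) = (\rho^*(\pi^*(\rho)), \pi^*(\rho))$ is an SPE of $\mathcal{G}$.

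The main obstacle I anticipate is the careful bookkeeping in the middle two steps: one must confirm both that the hypothesis pins $\rho$ and $\rho'$ to agree on all non-initial states (which relies on the canonical, consistently tie-broken choice implicit in the function $\rho^*$) and that the agent's induced MDP genuinely never references $\rho(s_0)$. It is precisely the modeling choice that the agent observes and reacts to the realized contract, not to the principal's policy at the current state, that makes the update at $s_0$ harmless; this is the crux on which the whole argument turns. If preferred, the conclusion could instead be packaged through \cref{app:obs_subgame} by checking SPE in each subgame separately, but the direct two-condition check above is more economical.
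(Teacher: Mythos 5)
Your proposal is correct and follows essentially the same route as the paper's proof: both arguments rest on the observation that the updated policy $\rho^*(\pi^*(\rho))$ can differ from $\rho$ only at $s_0$, and that the agent's best response is insensitive to this change because the agent reacts to the contract it actually observes while the principal's policy enters the agent's MDP only through continuation contracts at successor states (of which $s_0$ is never one). Your version merely spells out more explicitly the two SPE conditions and the identity of the induced agent's MDPs, including the same implicit reliance on consistent tie-breaking in $\rho^*$ that the paper also makes.
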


\begin{proof}
    In $s_0$, the agent observes the principal's action. Consequently and because the agent best-responds to $\rho$, it also best-responds to any $\{ \rho' \mid \forall s \in S \setminus \{s_0\}: \rho'(s)=\rho(s) \}$ regardless of the offered contract $\rho'(s_0)$, including the subgame-perfect $\rho^*(\pi^*(\rho))$ (that only differs from $\rho$ in $s_0$, as subgames in other states are in SPE already). Thus, $(\rho^*(\pi^*(\rho)), \pi^*(\rho))$ is an SPE in $\mathcal{G}$, as well as in all subgames of $\mathcal{G}$ (by \cref{app:obs_subgame}).
\end{proof}

\begin{corollary}\label{cor:subgame_spe}
    Given $\mathcal{G}$ with a single subgame ($S = \{s_0\}$), $(\rho^*(\pi^*(\rho)), \pi^*(\rho))$ is an SPE in $\mathcal{G}$ for any $\rho$.
\end{corollary}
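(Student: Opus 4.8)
The plan is to obtain this corollary as an immediate consequence of \cref{lemma:subgame_spe}, observing that the lemma's hypothesis becomes \emph{vacuous} in the single-subgame case. When $S = \{s_0\}$, the only subgame of $\mathcal{G}$ is $\mathcal{G}$ itself, so the set of states $S \setminus \{s_0\}$ over which the lemma requires $(\rho, \pi^*(\rho))$ to already be in SPE is empty. Hence the premise ``$(\rho, \pi^*(\rho))$ is an SPE in all subgames with a possible exception of $\mathcal{G}$'' holds vacuously for \emph{every} principal's policy $\rho$, and the lemma's conclusion immediately yields that $(\rho^*(\pi^*(\rho)), \pi^*(\rho))$ is an SPE in $\mathcal{G}$. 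The entire proof is then a single invocation of the lemma.

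To make the corollary self-contained and to sanity-check the vacuous application, I would also verify the two SPE conditions of \cref{def:SPE} directly. The principal's side is immediate: $\rho^*(\pi^*(\rho))$ is by construction subgame-perfect against the fixed agent policy $\pi^*(\rho)$, since it satisfies $\rho^*(s_0 \mid \pi^*(\rho)) \in \argmax_b Q^*(s_0, b \mid \pi^*(\rho))$ in the unique state $s_0$. For the agent's side, the key is that the best-responding map $\pi^*(s_0, b \mid \rho)$ is defined for \emph{any} contract $b \in B$ and, because $s_0$ is the only (hence terminal) state of the subgame, depends on $\rho$ only through contracts offered in $S \setminus \{s_0\} = \emptyset$. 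Replacing $\rho$ by $\rho^*(\pi^*(\rho))$ changes at most the contract prescribed in $s_0$, which the agent observes and responds to afresh; it leaves the best-response function unchanged, so $\pi^*(\rho) = \pi^*(\rho^*(\pi^*(\rho)))$. Together with \cref{app:obs_subgame}, which equates SPE of $\mathcal{G}$ with SPE of its single subgame, this establishes that the pair is an SPE.

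I do not expect a genuine obstacle here; the only point that merits care is confirming that the mechanism underlying the proof of \cref{lemma:subgame_spe}---the agent observing and best-responding to \emph{any} contract offered in $s_0$---does not secretly rely on the existence of other subgames, so that applying the lemma under a vacuously satisfied hypothesis is legitimate. Verifying this reduces to noting that the agent's best response in $s_0$ is a function of the pair $(s_0, b)$ and is therefore insensitive to the principal's choice in $s_0$ itself, exactly as in the lemma's argument. This confirms the corollary holds for arbitrary initial $\rho$ and, in particular, supplies the base case for the inductive convergence argument of \cref{th:abstract_algorithm}.
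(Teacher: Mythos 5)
Your proposal is correct and matches the paper's intent: the corollary is stated as an immediate consequence of \cref{lemma:subgame_spe}, whose hypothesis (SPE in all subgames over $S \setminus \{s_0\} = \emptyset$) is vacuously satisfied when $S = \{s_0\}$, so the conclusion holds for any $\rho$. Your additional direct verification of the two SPE conditions is a harmless sanity check but not a different argument.
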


This corollary concerns MDPs with a single state, which could also be interpreted as stateless. This covers static contract design problems from \cref{sub:contract_design}, as well as subgames in terminal states of Principal-Agent MDPs.

By using \cref{lemma:subgame_spe}, we can now show that each iteration of the algorithm expands the set of subgames that are in SPE, as long as some subgames satisfy the conditions of the lemma for an arbitrarily initialized $\rho$. This is always the case for finite-horizon MDPs, as the subgames in terminal states have a single subgame (itself), and thus \cref{cor:subgame_spe} applies. This reasoning is used to prove \cref{th:abstract_algorithm}.

\begin{proof}[Proof of \cref{th:abstract_algorithm}]
    Denote the policy initialized at line \ref{state:abstract_init} as $\rho_0$. Denote the best-responding policy to $\rho_0$ as $\pi_1 \equiv \pi^* (\rho_0)$ and the subgame-perfect policy against $\pi_1$ as $\rho_1 \equiv \rho^*(\pi_1)$. Likewise, $\pi_i$ and $\rho_i$ respectively denote the best-responding policy to $\rho_{i-1}$ and the subgame-perfect policy against $\pi_i$, where $i$ is an iteration of the algorithm.

    By \cref{cor:subgame_spe}, $(\rho_1, \pi_1)$ forms an SPE in all subgames in terminal states, including $s \in S_T$. Applying \cref{lemma:subgame_spe}, as well as our w.l.o.g. assumption that sets $S_t$ are disjoint, $(\rho_2, \pi_2)$ is an SPE in all subgames in $S_T \cup S_{T-1}$. By induction, $(\rho_i, \pi_i)$ is an SPE in all subgames in $s \in S_T \cup S_{T-1} \dots \cup S_{T-i+1}$. Thus, $(\rho_{T+1}, \pi_{T+1})$ is an SPE in all subgames in $s \in S_T \cup \dots \cup S_0 = S$, and thus in the game $\mathcal{G}$ (by \cref{app:obs_subgame}).
\end{proof}

\begin{remark}\label{remark:backward_induction}
    The above proof  shows that the meta-algorithm implicitly performs backward induction. In particular, applying backward induction to an extensive-form game (which a finite-horizon Principal-Agent game instantiates) involves solving the game tree one level at a time, starting with leaves and ending with root. Similarly, the meta-algorithm solves one level at a time (subgames in states from $S_t$ at an iteration $t$), but does so implicitly, as a consequence of updating the players' policies (in all states). This distinction has far-reaching implications, and in particular, allows us to employ model-free RL.
\end{remark}

\subsection{Meta-Algorithm applies Contraction (Proof of \cref{th:contraction})}\label{app:proof_contraction}

Consider the principal's optimization task (line \ref{state:abstract_outer} of the meta-algorithm). As we discuss in \cref{app:operators}, solving this task can be interpreted as finding the fixed point of a contraction operator $\mathcal{B}_\rho$ defined in \eqref{eq:principal_bellman}.
By definition of contraction, this fixed point can be found by iteratively applying the operator until convergence, which we denote as $\mathcal{H}^* = \mathcal{B}_\rho(\mathcal{B}_\rho(\mathcal{B}_\rho...Q(s, b)))$.

\begin{observation}\label{obs:linear}
    $\mathcal{H}^*$ is a composition of linear operators and thus is a linear operator.
\end{observation}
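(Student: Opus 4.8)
The plan is to decompose the Bellman optimality operator $\mathcal{B}_\rho$ from \eqref{eq:principal_bellman} into elementary pieces and isolate its single source of nonlinearity, the inner maximization $\max_{b'} Q(s',b')$. Reading off \eqref{eq:principal_bellman}, once we condition on the agent's fixed best response $a = \pi^*(s, b \mid \rho)$, the operator sends $Q$ to $\mathbb{E}_{o,s'}[\mathcal{R}^p(s,b,o)] + \gamma\,\mathbb{E}_{o,s'}[\max_{b'} Q(s',b')]$. The immediate-reward term $\mathbb{E}[\mathcal{R}^p(s,b,o)]$ does not depend on the argument $Q$ (it is a constant offset), and the transition term $Q \mapsto \gamma\,\mathbb{E}_{o,s'}[\,\cdot\,]$ is a composition of a nonnegative-weighted average (the outcome/transition kernel) with scalar multiplication by $\gamma$, both linear in $Q$. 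Hence $\mathcal{B}_\rho$ is linear in $Q$ apart from the $\max_{b'}$.

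The crux is to argue that, along the iteration producing the fixed point, this $\max_{b'}$ may be replaced by a fixed linear coordinate selection. I would invoke the finite-horizon, backward-induction structure made explicit in \cref{app:proof_abstract}: with the disjoint levels $S_t$, the iterated application $\mathcal{B}_\rho(\mathcal{B}_\rho(\cdots))$ resolves one level at a time, terminal states first. At the terminal level the operator returns only the constant immediate reward; at each earlier level the future values are already determined, so $\max_{b'}Q(s',b')$ equals $Q(s',\rho^*(s'\mid\pi^*(\rho)))$ for the subgame-perfect contract $\rho^*(\cdot\mid\pi^*(\rho)) = \argmax_b Q^*(s,b\mid\pi^*(\rho))$ of \eqref{eq:q_value_SPE}. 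The map $Q \mapsto Q(s',\rho^*(s'))$ is evaluation at a fixed point of the domain, i.e.\ a linear coordinate projection. With the max so resolved, each application of $\mathcal{B}_\rho$ along the converging sequence becomes a composition of a linear projection, the linear kernel-average, scalar multiplication by $\gamma$, and the constant reward offset, i.e.\ an operator of the form $Q \mapsto \gamma P Q + c$ with $P$ a (sub)stochastic kernel.

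Finally I would conclude by composition: $\mathcal{H}^*$ is the finite composition (one factor per horizon level, hence at most $T+1$ factors, since value iteration on a horizon-$T$ MDP converges in $T+1$ steps) of these operators, and a composition of linear operators is linear, which proves \cref{obs:linear}. I expect the main obstacle to be exactly the middle step, justifying that the genuinely nonlinear $\max_{b'}$ may be swapped for a fixed linear selection; this is legitimate only because we evaluate the operator at, and converge to, the subgame-perfect policy, where the maximizer is pinned down level by level, whereas away from the fixed point the global map is merely piecewise-linear. I would also flag a mild abuse of terminology: strictly the operator is affine because of the constant offset $c$, but this offset is immaterial to the downstream contraction argument of \cref{th:contraction}, since it cancels in any difference $\|\mathcal{B}_\rho Q_1 - \mathcal{B}_\rho Q_2\| = \|\gamma P(Q_1 - Q_2)\|$, leaving only the linear part $\gamma P$ that governs the contraction factor.
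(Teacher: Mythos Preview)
The paper offers no proof beyond the one-line assertion contained in the observation itself: it simply declares $\mathcal{H}^*$ to be a composition of linear operators and hence linear, without addressing why each $\mathcal{B}_\rho$ (which contains the nonlinear $\max_{b'}$) should count as linear. Your attempt is therefore considerably more substantive than anything the paper provides, and you correctly isolate the only source of nonlinearity.

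That said, there is a genuine gap in your argument, which you partially flag but underweight. The replacement of $\max_{b'} Q(s',b')$ by the coordinate evaluation $Q(s',\rho^*(s'))$ is valid only when $Q$ already coincides with the optimal $Q^*$ on the next level; for a generic input $Q \in \mathcal{Q}_{SB}$ the maximizer $b'$ depends on $Q$, so the map is genuinely nonlinear there. Your backward-induction story pins down the maximizer level by level \emph{at the fixed point}, but $\mathcal{H}^*$ as an operator must act on all of $\mathcal{Q}_{SB}$, and away from the fixed point the max does not reduce to a fixed coordinate projection. What you have actually established is that $\mathcal{H}^*$ is piecewise affine and agrees with one particular affine map on the cell containing $Q^*$; you have not shown it is a linear operator globally, which is what the observation asserts without qualification. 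Your own caveat (``away from the fixed point the global map is merely piecewise-linear'') is not a minor abuse of terminology but precisely the missing step. In short, neither the paper's bare assertion nor your argument establishes the observation as stated; you have proved the more honest piecewise-affine claim, and whether that weaker property suffices for the downstream use in the proof of \cref{th:contraction} would require checking the hypotheses of the cited converse Banach theorem separately.
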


For a linear operator $\mathcal{H}^*$ to be a contraction, its fixed point has to be unique. Since its iterative application (the meta-algorithm, Algorithm \ref{alg:both}) converges to an SPE, we next prove the uniqueness of Q-functions in SPE.

\begin{lemma}\label{lemma:SPE}
    Given a finite-horizon principal-agent stochastic game $\mathcal{G}$, the principal's Q-function is equal in all SPE for any state-action; the same holds for the agent's Q-function.
\end{lemma}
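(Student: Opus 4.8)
The plan is to prove both claims simultaneously by \emph{backward induction} over the time layers $S_T, S_{T-1}, \dots, S_0$, using the finite-horizon DAG structure (the sets $S_t$ are disjoint, as arranged in \cref{app:proof_abstract}) together with \cref{app:obs_subgame}, which reduces the global claim to a per-subgame statement. The induction hypothesis I would carry is deliberately stronger than the bare statement: for every state at a time step later than $t$, all SPE agree on (i) the agent's equilibrium value at that state, (ii) the principal's value $V^*(\cdot\mid\pi)$, and (iii) the agent's best-response map $b\mapsto\pi^*(\cdot,b\mid\rho)$, viewed as a function defined for \emph{all} $b\in B$ (per \cref{sub:equil}). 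I would first note that (i)--(iii) are precisely what is needed to pin down both Q-functions at the current layer, so proving them for every layer yields the lemma.

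\textbf{Base case} (terminal states $s\in S_T$): there is no continuation, so by \eqref{eq:truncated} the agent's Q-function reduces to $Q^*((s,b),a\mid\rho)=\mathbb{E}_{o\sim\mathcal{O}(s,a)}[b(o)]+r(s,a)$, which is independent of any policy and hence identical across all SPE. Consequently the best-response map (with ties broken in favor of the principal) is uniquely determined, and the principal's Q-function $Q^*(s,b\mid\pi)=\mathbb{E}[r^p(s,o)-b(o)\mid a=\pi^*(s,b\mid\rho)]$ is then also identical across all SPE. This establishes (i)--(iii) on $S_T$.

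\textbf{Inductive step} (from layer $t+1$ to layer $t$): assuming the hypothesis on all later layers, for $s\in S_t$ the expansion \eqref{eq:truncated} shows that $Q^*((s,b),a\mid\rho)$ depends on future SPE only through the agent's equilibrium values at the successor states $s'$, which are fixed by (i); hence this Q-function, and with it the best-response map at $s$, is common to all SPE. Substituting this common response map together with the fixed principal continuation values from (ii) into \eqref{eq:q_value_SPE} shows that the principal's Q-function at $s$ is common to all SPE, and taking $\max_b$ gives a common principal value $V^*(s\mid\pi)$. This delivers (ii) and (iii) at layer $t$.

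The remaining --- and hardest --- step is establishing (i) at layer $t$: the agent's \emph{equilibrium} value must coincide across all SPE even though distinct SPE may select different principal-optimal contracts $\rho(s)\in\argmax_b Q^*(s,b\mid\pi)$. My plan is to argue that once the implemented (recommended) action $a^p$ is fixed, the minimal expected payment implementing it is the value of the LP \eqref{eq:LP}, a single scalar, so the agent's expected payment and therefore its value $\mathbb{E}_{o\sim\mathcal{O}(s,a^p)}[\rho(s)(o)]+\overline{Q}^*(s,a^p\mid\rho)$ are pinned down for that action. The delicate case is when two \emph{distinct} actions are simultaneously principal-optimal but induce different agent values; here I would invoke the tie-breaking convention in favor of the principal together with the fact that only minimal-payment contracts survive in SPE to force a common agent value. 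I expect this tie case to be the main obstacle, and I would note that for the downstream application --- uniqueness of the fixed point of the linear contraction $\mathcal{H}^*$ in \cref{app:proof_contraction} --- only the principal-side uniqueness is actually invoked, so the argument degrades gracefully even where the agent-side tie analysis is most delicate.
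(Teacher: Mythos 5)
Your overall strategy---backward induction over the layers $S_T,\dots,S_0$, reducing everything to showing that the agent's equilibrium value at each state is the same under every principal-optimal contract---is essentially the route the paper takes: its proof locates the deepest state at which two SPE could diverge and shows that the divergence there is inconsequential, which is your inductive step in disguise. Your base case, your propagation of (ii)--(iii), and your LP-minimality argument for the case where two optimal contracts implement the \emph{same} action are all sound (the last point implicitly uses that the principal's continuation value depends on $b$ only through the induced action, so any principal-optimal contract implementing $a^p$ must be a minimal-expected-payment implementation of $a^p$; this is worth stating explicitly).

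The gap is exactly where you flag it, and it is not closable by the means you propose. When two \emph{distinct} actions $a_x\neq a_y$ are both principal-optimal to implement at $s$, the tie-breaking convention does not help: it governs how the \emph{agent} breaks ties among its own optimal responses to a fixed contract, whereas the ambiguity here is the \emph{principal's} choice between two equally good contracts $b_x,b_y$, a choice the agent never gets to make. Per-action payment minimality likewise says nothing about how the agent's values compare across different implemented actions. The paper closes this case with an exchange argument: if the agent strictly preferred $b_x$ to $b_y$, i.e.\ $Q^*((s,b_x),a_x\mid\rho)>Q^*((s,b_y),a_y\mid\rho)$, the principal could lower the (expected) payment under $b_x$ by the difference while still implementing $a_x$, strictly increasing its own Q-value and contradicting the optimality of $b_x$ and $b_y$; hence every principal-optimal contract at $s$ yields the same agent value, which is your item (i) at layer $t$. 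Finally, your closing remark that the argument ``degrades gracefully'' because only principal-side uniqueness is invoked downstream is not correct: in your own induction, the best-response map at layer $t$ (item (iii)) is computed from the agent's \emph{equilibrium} continuation values at layer $t+1$ (item (i)), so leaving the tie case open forfeits principal-side uniqueness at all earlier layers as well. Item (i) is load-bearing and cannot be dropped.
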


\begin{proof}
Consider a principal's policy $\rho$ that forms SPE with any best-responding agent's policy $\pi^*(\rho)$. Any $\pi^*(\rho)$ solves the agent's MDP and thus all such policies define a unique agent's Q-function (which is the fixed point of the Bellman optimality operator). Furthermore, by the assumption that the agent breaks ties in favor of the principal, all best-responding policies also uniquely define the principal's Q-function (in other words, any policy that solves the agent's MDP but does not maximize the principal's Q-function when breaking ties in some state is not a best-responding policy by the assumed tie-breaking). Thus, for any pair of SPE with non-equal Q-functions, the principal's policies must also differ.

Consider two principal's policies, $\rho_x$ and $\rho_y$, that form SPE with any respective best-responding agents' policies, $\pi^*(\rho_x)$ and $\pi^*(\rho_y)$. By the above argument, the choice of $\pi^*(\rho_x)$ and $\pi^*(\rho_y)$ is inconsequential, so we can assume those to be unique (e.g. by adding lexicographic tie-breaking if the principal-favored tie-breaking does not break all ties).

For $\rho_x$ and $\rho_y$ to differ, there must be a state $s \in S_t$ such that 1) all subgames in states ``after'' $t$, i.e., $\{S_{t'}\}_{t'>t}$, are in unique SPE given by some $\rho$ and $\pi^*(\rho)$ (e.g., this holds in a terminal state) and 2) the subgame in $s$ has two contracts, $b_x = \rho_x(s)$ and $b_y = \rho_y(s)$, that both maximize the principal's utility, i.e., $Q^*(s, b_x \mid \pi^*(\rho_x)) = Q^*(s, b_y \mid \pi^*(\rho_y))$. Denote the agent's actions in $s$ as $a_x = \pi^*(s, b_x \mid \rho)$ and $a_y = \pi^*(s, b_y \mid \rho)$. We now show that the choice between $b_x$ and $b_y$ is inconsequential as both contracts also yield the same utility for the agent.

Assume the agent prefers $b_x$ to $b_y$, i.e., $Q^*((s, b_x), a_x \mid \rho) > Q^*((s, b_y), a_y \mid \rho)$. First, use the observed-action notation. Applying the Bellman optimality operator and using the definition of $\mathcal{R}$, we have $\mathbb{E} [r(s, a_x) + \gamma \max_{a'} Q^*((s', \rho(s')), a' \mid \rho)] + b_x(a_x) > \mathbb{E} [r(s, a_y) + \gamma \max_{a'} Q^*((s', \rho(s')), a' \mid \rho)] + b_y(a_y)$. Observe that the principal may simply decrease the payment $b_x(a_x)$ by the difference of the agent's Q-values (so that the inequality becomes equality), increasing the principal's utility. So, the assumption that the agent prefers $b_x$ to $b_y$ means that neither contract maximizes the principal's utility in the subgame, leading to a contradiction.

The same can be shown in the hidden-action model. The agent preferring $b_x$ to $b_y$ would mean $\mathbb{E}[r(s, a_x) + b_x(o) + \gamma \max_{a'} Q^*((s', \rho(s')), a' \mid \rho)] > \mathbb{E}[r(s, a_y) + b_y(o) + \gamma \max_{a'} Q^*((s', \rho(s')), a' \mid \rho)]$, and the principal would be able to adjust $b_x$ in order to decrease the expected payment $\mathbb{E} [b_x(o)]$ relatively to $\mathbb{E} [b_y(o)]$, e.g., by decreasing each non-zero payment $b_x(o)$ by a constant.
Again, this leads to a contradiction.

We thus have shown that the choice between $b_x$ and $b_y$ in $s$ is inconsequential for the value functions of both principal and agent in $s$: $V^*(s \mid \pi^*(\rho_x)) = Q^*(s, b_x \mid \pi^*(\rho_x)) = Q^*(s, b_y \mid \pi^*(\rho_y)) = V^*(s \mid \pi^*(\rho_y))$ and $V^*(s, b_x \mid \rho) = Q^*((s, b_x), a_x \mid \rho) = Q^*((s, b_y), a_y \mid \rho) = V^*(s, b_y \mid \rho)$. By Bellman optimality equations (\eqref{eq:value_SPE_agent} and \eqref{eq:q_value_SPE_agent} for the agent, \eqref{eq:value_SPE} and \eqref{eq:q_value_SPE} for the principal), it is also inconsequential for the players' Q-functions in all states ``before'' $t$, i.e., $\{S_{t'}\}_{t'<t}$. For states ``after'' $t$, the choice also has no effect by the MDP being finite-horizon (and our w.l.o.g. assumption about the uniqueness of states). This holds for any such $b_x$ and $b_y$ in any $s$. Thus, for any SPE $(\rho, \pi^*(\rho))$, each player's Q-function is identical in all SPE for any state-action.
\end{proof}


\begin{proof}[Proof of \cref{th:contraction}]
By \cref{obs:linear}, $\mathcal{H}^*$ is a linear operator. Moreover, as \cref{alg:both} converges to SPE by \cref{th:abstract_algorithm} and the payoffs in SPE are unique by \cref{lemma:SPE}, the iterative application of $\mathcal{H}^*$ converges to a unique fixed point. By using a converse of the Banach theorem (Theorem 1 in \citet{daskalakis2018converse}), this operator is a contraction under any norm that forms a complete and proper metric space. This includes the sup-norm.
\end{proof}


\subsection{Meta-Algorithm may diverge in Infinite-Horizon MDPs}\label{app:proof_infinite}

Here we present an example of a hidden-action infinite-horizon principal-agent MDP where the meta-algorithm diverges by getting stuck in a cycle. To solve the principal's and agent's optimization tasks, we specifically developed exact solvers for principal's and agent's MDPs.

The MDP consists of two states, $s_1$ and $s_2$. In each state, the agent has two actions, $a_1$ and $a_2$, which determine probabilities of sampling one of two outcomes, $o_1$ and $o_2$. When agent chooses $a_1$ in any state $s$, outcomes are sampled with respective probabilities $\mathcal{O}(o_1 \mid s, a_1) = 0.9$ and $\mathcal{O}(o_2 \mid s, a_1) = 0.1$. Vice versa, choosing $a_2$ in any state $s$ samples an outcome with probabilities $\mathcal{O}(o_1 \mid s, a_2) = 0.1$ and $\mathcal{O}(o_2 \mid s, a_2) = 0.9$. After sampling an outcome $o_i$, the MDP deterministically transitions to $s_i$ (e.g., if $o_1$ is sampled, the MDP transitions to $s_1$ regardless of the old state and the agent's action). Choosing an action that is more likely to change the state of the MDP (so, $a_2$ in $s_1$ and $a_1$ in $s_2$) requires effort from the agent, respectively costing $c(s_1, a_2) = 1$ and $c(s_2, a_1) = 2$. The other action is free for the agent: $c(s_1, a_1) = 0$ and $c(s_2, a_2) = 0$. Other things equal, the principal prefers the agent to invest an effort: it only enjoys a reward whenever the MDP transitions to a different state, equal to $r^p(s_1, o_2) = r^p(s_2, o_1) = 1.5$. The discount factor is set to $\gamma = 0.9$.

We now describe several iterations of the meta-algorithm, showing that it oscillates between two pairs of players' policies. We report the agent's truncated Q-function \eqref{eq:truncated} and the principal's contractual Q-function \eqref{eq:contractual_q_function} at each iteration of the algorithm (rounded to three decimals). We also verify that these Q-functions are indeed fixed points of respective operators and thus solve the respective MDPs, but only do so in $(s_1, a_2)$ as derivations in other state-action pairs are identical.

\textbf{Initialization}

Initialize the principal with a policy $\rho_0$ that does not offer any payments, i.e., $\rho_0(s_1) = \rho_0(s_2) = (0, 0)$, where we denote a contract by a tuple $b = (b(o_1), b(o_2))$.

\textbf{Iteration 1: agent}

The agent's best-responding policy $\pi_1$ simply minimizes costs by never investing an effort: $\pi_1(s_1, \rho_0(s_1)) = a_1$, $\pi_1(s_2, \rho_0(s_2)) = a_2$. This corresponds to the following truncated Q-function: $\overline{Q}^{\pi_1}(s_1, a_1) = \overline{Q}^{\pi_1}(s_2, a_2) = 0$, $\overline{Q}^{\pi_1}(s_1, a_2) = -c(s_1, a_2) = -1$ and $\overline{Q}^{\pi_1}(s_2, a_1) = -c(s_2, a_1) = -2$.

To verify that this Q-function is a fixed point of the truncated Bellman optimality operator \eqref{eq:def_truncated_operator}, observe that the optimality equations hold in all state-action pairs. For example, in $(s_1, a_2)$ we have:

\begin{equation*}
\begin{split}
    -1 =& \overline{Q}^{\pi_1}(s_1, a_2) = -c(s_1, a_2) + \gamma \mathbb{E}_{o, s', o'} [\rho(s')(o') + \max_{a'} \overline{Q}^{\pi_1}(s', a')] = \\
    & -1 + 0.9 [0.1 \cdot 0 + 0.9 \cdot 0] = -1.
\end{split}
\end{equation*}

In the absence of contracts, the agent's truncated Q-function is equal to its Q-function under the principal's policy: $Q^{\pi_1}((s, \rho_0(s)), a) = \overline{Q}^{\pi_1}(s, a)$.

\textbf{Iteration 1: principal}

The principal's subgame-perfect policy $\rho_1$ attempts to incentivize effort in $s_1$ and offers the following contracts: $\rho_1(s_1) = (0, 1.25)$, $\rho_1(s_2) = (0, 0)$. This corresponds to the following contractual Q-function: $q^{\rho_1}(s_1, a_1) = 1.991$, $q^{\rho_1}(s_1, a_2) = 2.048$, $q^{\rho_1}(s_2, a_1) = 1.391$, and $q^{\rho_1}(s_2, a_2) = 2.023$. 

To verify that this Q-function is a fixed point of the contractual Bellman optimality operator \eqref{eq:def_contract_operator}, observe that the optimality equations hold in all state-action pairs. For example, in $(s_1, a_2)$ we have:

\begin{equation*}
\begin{split}
    2.048 =& q^{\rho_1}(s_1, a_2) = \mathbb{E}_{o, s'} [ -\rho_1(s_1)(o) + r^p(s_1, o) + \gamma \max_{a'} q^{\rho_1}(s', a')] = \\
    & 0.1 (0 + 0 + 0.9 \cdot 2.048) + 0.9 (-1.25 + 1.5 + 0.9 \cdot 2.023) = 2.048.
\end{split}
\end{equation*}

Incentivizing $a_1$ in $s_2$ requires offering a contract $\rho(s_2) = (2.5, 0)$, which is not worth it for the principal, as evidenced by $q^{\rho_1}(s_2, a_1) < q^{\rho_1}(s_2, a_2)$.

\textbf{Iteration 2: agent}

The principal $\rho_1$ underestimated the payment required to incentivize effort, and the agent's best-responding policy $\pi_2$ still never invests an effort: $\pi_2(s_1, \rho_1(s_1)) = a_1$, $\pi_2(s_2, \rho_1(s_2)) = a_2$. This corresponds to the following truncated Q-function: $\overline{Q}^{\pi_2}(s_1, a_1) = 0.723$, $\overline{Q}^{\pi_2}(s_1, a_2) = -0.598$, $\overline{Q}^{\pi_2}(s_2, a_1) = -1.277$, and $\overline{Q}^{\pi_2}(s_2, a_2) = 0.402$.

To verify that this Q-function is a fixed point of the truncated Bellman optimality operator \eqref{eq:def_truncated_operator}, observe that the optimality equations hold in all state-action pairs. For example, in $(s_1, a_2)$ we have:

\begin{equation*}
\begin{split}
    -0.598 =& \overline{Q}^{\pi_2}(s_1, a_2) = -c(s_1, a_2) + \gamma \mathbb{E}_{o, s', o'} [\rho(s')(o') + \max_{a'} \overline{Q}^{\pi_1}(s', a')] = \\
    & -1 + 0.9 [0.1 (0.9 (0 + 0.723) + 0.1 (1.25 + 0.402)) + \\
    &\hspace{45pt} 0.9 (0.1 (0 + 0.723) + 0.9 (0 + 0.402))] = -0.598.
\end{split}
\end{equation*}

Given the contracts from the principal's policy $\rho_1$, we can compute the agent's Q-values using \eqref{eq:truncated}: $Q^{\pi_2}((s_1, \rho_1(s_1)), a_1) = 0.848$, $Q^{\pi_2}((s_1, \rho_1(s_1)), a_2) = 0.527$, $Q^{\pi_2}((s_2, \rho_1(s_2)), a_1) = -1.277$, and $Q^{\pi_2}((s_2, \rho_1(s_2)), a_2) = 0.402$. Observe that indeed, the agent still prefers not to invest an effort in $s_1$, as evidenced by $Q^{\pi_2}((s_1, \rho_1(s_1)), a_1) > Q^{\pi_2}((s_1, \rho_1(s_1)), a_2)$.

\textbf{Iteration 2: principal}

The principal gives up on incentivizing effort and once again offers no contracts: $\rho_2(s_1) = (0, 0)$, $\rho_2(s_2) = (0, 0)$. This corresponds to the following contractual Q-function: $q^{\rho_2}(s_1, a_1) = 1.661$, $q^{\rho_2}(s_1, a_2) = 1.503$, $q^{\rho_2}(s_2, a_1) = 1.422$, and $q^{\rho_2}(s_2, a_2) = 1.839$. 

To verify that this Q-function is a fixed point of the contractual Bellman optimality operator \eqref{eq:def_contract_operator}, observe that the optimality equations hold in all state-action pairs. For example, to incentivize $a_2$ in $s_1$, the principal has to offer a contract $\rho(s_1) = (0, 1.652)$, which gives us:
\begin{equation*}
\begin{split}
    1.503 =& q^{\rho_2}(s_1, a_2) = \mathbb{E}_{o, s'} [ -\rho_2(s_2)(o) + r^p(s_1, o) + \gamma \max_{a'} q^{\rho_2}(s', a')] = \\
    & 0.1 (0 + 0 + 0.9 \cdot 1.661) + 0.9 (-1.652 + 1.5 + 0.9 \cdot 1.839) = 1.503,
\end{split}
\end{equation*}

Incentivizing $a_1$ in $s_2$ requires offering a contract $\rho(s_2) = (2.098, 0)$, which is still not worth it for the principal, as evidenced by $q^{\rho_2}(s_2, a_1) < q^{\rho_2}(s_2, a_2)$.

\textbf{Subsequent iterations}

Because the principal's subgame-perfect policy $\rho_2$ repeats the policy $\rho_0$ from a previous iteration, the meta-algorithm is now stuck in a cycle where iterations 1 and 2 repeat indefinitely. In other words, the meta-algorithm diverges.

\subsection{Meta-Algorithm in the Observed-Action Model}\label{app:proof_one_iter}

In the special case where the agent's actions are observed (defined in Section~\ref{sub:PA-MDP} and summarized in \cref{tab:mdp}), the meta-algorithm can be shown to find SPE in a single (rather than $T+1$) iteration, as we show in \cref{th:one_iter_convergence}.
This property is based on the observation that the agent is indifferent between an MDP without a principal and an MDP augmented with a subgame-perfect principal; similar observation has been made in contemporaneous work~\citep{ben2023principal}. Consequently, evaluating minimal implementation only requires access to the agent's optimal Q-function in the absence of the principal.

Is is also easy to show that SPE coincides with Stackelberg equilibrium in the observed-action scenario (\cref{lemma:spe_equals_stackelberg}), and consequently the meta-algorithm finds Stackelberg equilibrium.

\begin{theorem}
\label{th:one_iter_convergence}
    Given a principal-agent stochastic game, $\mathcal{G}$, with observed actions and either finite or infinite horizon, if the principal's policy is initialized to offer zero-vectors as contracts in all states, the meta-algorithm finds SPE in one iteration.
\end{theorem}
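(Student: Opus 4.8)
The plan is to show that a single inner--outer pass of the meta-algorithm, started from $\rho_0 = \mathbf{0}$, already produces a fixed point, and that this fixed point is an SPE. Write $\pi_1 = \pi^*(\rho_0)$ and $\rho_1 = \rho^*(\pi_1)$ for the policies produced in the first iteration. By construction $\rho_1$ is subgame-perfect against $\pi_1$, so by \cref{def:SPE} it suffices to prove that the agent's best response does not move, i.e. $\pi^*(\rho_1) = \pi_1$. This simultaneously certifies $(\rho_1, \pi_1)$ as an SPE and shows that a subsequent pass recomputes $\pi^*(\rho_1) = \pi_1$ and then $\rho^*(\pi_1) = \rho_1$, so $\rho$ is unchanged and the algorithm has converged after one iteration.

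The key object is the agent's truncated Q-function. In the observed-action model contracts pay per action, so the best response to a principal policy $\rho$ reacts to an immediate contract $b$ in state $s$ by $\argmax_a[\,b(a) + \overline{Q}^*(s,a \mid \rho)\,]$. I would first compute $\overline{Q}^*(s,a \mid \rho_0) = r(s,a) + \gamma\,\mathbb{E}_{s'} V^{a}_0(s') =: \tilde{Q}(s,a)$, where $V^{a}_0(s) = \max_a \tilde{Q}(s,a)$ is the agent's optimal value in the payment-free MDP. The heart of the argument is an \emph{indifference} property special to observed actions: to implement any target action $a^p$ in state $s$ against $\pi_1$, the principal's cheapest contract pays exactly the utility gap $b(a^p) = V^{a}_0(s) - \tilde{Q}(s,a^p)$ and zero on all other actions, which leaves the agent's value at $V^{a}_0(s)$ regardless of which action is implemented. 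Consequently the subgame-perfect policy $\rho_1$ keeps the agent indifferent between the implemented action and its payment-free optimum in every state.

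With this in hand I would prove, by backward induction on the time index in the finite-horizon case, that $\overline{Q}^*(s,a \mid \rho_1) = \tilde{Q}(s,a)$ for all $(s,a)$. The base case is immediate in terminal states, where both truncated Q-functions equal $r(s,\cdot)$. For the inductive step, the continuation value the agent sees under $\rho_1$ in a successor state $s'$ is $\max_{a'}[\rho_1(s')(a') + \overline{Q}^*(s',a' \mid \rho_1)] = \max_{a'}[\rho_1(s')(a') + \tilde{Q}(s',a')] = V^{a}_0(s')$ by the induction hypothesis and the indifference property; feeding this into the truncated Bellman operator returns exactly $\tilde{Q}(s,a)$. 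Hence $\overline{Q}^*(\cdot \mid \rho_1) \equiv \overline{Q}^*(\cdot \mid \rho_0)$, so the two best responses are defined by the identical objective $b(a) + \tilde{Q}(s,a)$ and therefore coincide for every $(s,b)$; thus $\pi^*(\rho_1) = \pi_1$ and $(\rho_1, \pi_1)$ is an SPE. For the infinite-horizon case I would replace the induction by a fixed-point verification: since $\gamma < 1$ makes the truncated and principal Bellman operators contractions with unique fixed points (\cref{lemma:contraction_truncated} and \cref{lemma:contraction}), it suffices to check that $\tilde{Q}$ together with the minimal-implementation policy $\rho_1$ is mutually consistent, which is the same computation performed once rather than level by level.

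The main obstacle I anticipate is making the indifference property fully rigorous together with tie-breaking: I must argue that paying the exact utility gap makes the target action weakly optimal for the agent and that the principal-favorable tie-break then selects it, so that $\rho_1$ genuinely implements the principal's intended action at minimal cost and the agent's value is pinned to $V^{a}_0$. A secondary point to handle carefully is that $\pi^*(\rho_0)$ and $\pi^*(\rho_1)$ coincide not merely in value but as policies: this follows because their defining objectives $b(\cdot) + \overline{Q}^*(s,\cdot \mid \rho)$ are literally equal, so the argmax sets, and hence the tie-broken selections, agree; any residual ambiguity is absorbed by the essential uniqueness of SPE payoffs established in \cref{lemma:SPE}.
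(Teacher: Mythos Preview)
Your proposal is correct and follows essentially the same route as the paper: the core is the \emph{indifference} observation that in the observed-action model the minimal contract implementing any target action pays exactly the agent's utility gap, so the agent's value in every state stays at its payment-free optimum $V^a_0(s)$ and hence $\pi_1$ remains a best response after the principal's update. The paper phrases this directly via value functions and the Bellman optimality equations (``$V^*(s,b^*)=V^*(s,\mathbf{0})$, so the optimality equations still hold and $\pi_1$ still best-responds to $\rho_1$''), whereas you route the same computation through the truncated Q-function and make the finite/infinite-horizon split explicit via backward induction versus a contraction fixed-point check; this is a presentational difference, not a substantively different argument.
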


\begin{proof}
    Use the same notations of $\rho_i$ and $\pi_i$ as in the proof of Theorem \ref{th:abstract_algorithm} in Appendix \ref{app:proof_abstract}.
    
    After initializing $\rho_0$ (that always offers $\mathbf{0}$) and finding the best-responding agent $\pi_1$, consider the optimal payments found at the outer optimization level of Algorithm \ref{alg:both}. Given a state $s \in S$, denote the agent's action as 
    $$a^* = \argmax_a Q^*((s, \mathbf{0}), a \mid \rho_0).$$ For now, let contracts in states other than $s$ remain $\mathbf{0}$; we will omit the conditioning of $Q^*$ on $\rho_0$ for brevity. The optimal contract in $s$, denoted as $b^*$, reimburses the agent for taking a suboptimal action, paying exactly 
    $$b^*(s, a^p) = Q^*((s, \mathbf{0}), a^*) - Q^*((s, \mathbf{0}), a^p)$$ 
    if the agent selects $a^p$, and pays $0$ otherwise. Note that $b^* = \mathbf{0}$ if $a^p = a^*$. This contract makes the agent indifferent between $a^p$ and $a^*$ because 
    $$Q^*((s, b^*), a^p) = Q^*((s, \mathbf{0}), a^p) + b^*(s, a^p) = Q^*((s, \mathbf{0}), a^*),$$ 
    changing the agent's action in $s$ to $a^p$ (according to tie-breaking). However, the agent's value function remains unchanged:
    $$V^*(s, b^*) = Q^*((s, b^*), a^p) = Q^*((s, \mathbf{0}), a^*) = V^*(s, \mathbf{0}).$$ 
    \noindent Thus, the Bellman optimality equations \eqref{eq:value_SPE_agent} and \eqref{eq:q_value_SPE_agent} still hold in all states after replacing $\mathbf{0}$ with $b^*$ in $s$, and $\pi_1$ still best-responds to the updated principal. This replacement of zero-vectors for optimal contracts $b^*$ is performed in all states during the update of the principal at the outer optimization level, yielding a principal's policy $\rho_1$ subgame-perfect against $\pi_1$. At the same time, $\pi_1$ remains best-responding against $\rho_1$. Thus, $(\rho_1, \pi_1)$ is an SPE.
\end{proof}

\begin{remark}
Unlike our general \cref{th:abstract_algorithm}, the above proof relies on the specifics of our model such as the principal's action set and the players' reward functions.
\end{remark}

\begin{lemma}\label{lemma:spe_equals_stackelberg}
    Given a principal-agent stochastic game, $\mathcal{G}$, with observed actions, any SPE is a Stackelberg equilibrium.
\end{lemma}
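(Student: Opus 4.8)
The plan is to show that no commitment by the principal yields more utility than the SPE, so that the SPE policy maximizes the principal's utility among all commitments and is therefore Stackelberg. Recall that in a Stackelberg equilibrium the agent best-responds and the principal's policy maximizes $U^p(\rho) := V^{\rho}(s_0 \mid \pi^*(\rho))$ over all commitments; since in any SPE the agent already best-responds by definition, it suffices to prove $U^p(\rho) \le U^p(\rho_{\mathrm{SPE}})$ for every principal policy $\rho$.

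The first step is a utility-transfer identity. Because a contractual payment is paid by the principal and received by the agent on the same observed action (in the observed-action model the rewards are $r^p(s,a) - b(a)$ and $r(s,a) + b(a)$), the payments cancel upon summing the two players' utilities. Hence for any $\rho$ the principal's utility equals the social welfare of the induced play minus the agent's utility, $U^p(\rho) = W(\rho) - U^a(\rho)$, where $W(\rho) = \mathbb{E}[\sum_t \gamma^t (r^p(s_t,a_t) + r(s_t,a_t))]$ under $(\rho, \pi^*(\rho))$ and $U^a(\rho)$ is the agent's value in $s_0$. I would then bound each term: $W(\rho) \le W^*$, the optimal social welfare of the underlying MDP (same transitions $\mathcal{T}(s,a)$); and, by limited liability, $U^a(\rho) \ge V^{\mathrm{np}}(s_0)$, where $V^{\mathrm{np}}(s) := V^*(s, \mathbf{0} \mid \rho_0)$ is the agent's optimal value in the principal-free MDP, since the agent can always ignore the non-negative payments and play its principal-free optimal policy, collecting only extra non-negative payments. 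Together these give the uniform bound $U^p(\rho) \le W^* - V^{\mathrm{np}}(s_0)$ for every commitment.

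The second step is to show the observed-action SPE attains this bound. Using the structure from the proof of \cref{th:one_iter_convergence}, the subgame-perfect contract reimburses the agent exactly for the implemented action while leaving its value equal to $V^{\mathrm{np}}(s)$ in every subgame, so $U^a(\rho_{\mathrm{SPE}}) = V^{\mathrm{np}}(s_0)$. It remains to show $W(\rho_{\mathrm{SPE}}) = W^*$, i.e., that the principal's greedy choice of which action to implement coincides with the welfare-maximizing action. I would prove this by backward induction: assuming the principal's subgame value at each successor $s'$ equals $W^*(s') - V^{\mathrm{np}}(s')$, substitute the minimal reimbursement $b^*(s,a^p) = V^{\mathrm{np}}(s) - r(s,a^p) - \gamma\,\mathbb{E}_{s' \sim \mathcal{T}(s,a^p)}[V^{\mathrm{np}}(s')]$ into the principal's value of implementing $a^p$, namely $r^p(s,a^p) - b^*(s,a^p) + \gamma\,\mathbb{E}_{s'}[W^*(s') - V^{\mathrm{np}}(s')]$. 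The continuation $V^{\mathrm{np}}$ terms cancel, leaving $r^p(s,a^p) + r(s,a^p) + \gamma\,\mathbb{E}_{s'}[W^*(s')] - V^{\mathrm{np}}(s)$, whose maximum over $a^p$ equals $W^*(s) - V^{\mathrm{np}}(s)$ by the Bellman optimality equation for welfare and is attained at the welfare-maximizing action. Hence $U^p(\rho_{\mathrm{SPE}}) = W^* - V^{\mathrm{np}}(s_0)$, matching the upper bound, so $\rho_{\mathrm{SPE}}$ is a principal-utility maximizer and the SPE is Stackelberg.

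I expect the main obstacle to be this backward-induction identity: verifying that per-subgame greedy maximization of the principal's reimbursed value is globally welfare-optimal. The crucial enabling feature is that, in the observed-action model, the minimal reimbursement pins the agent's value at $V^{\mathrm{np}}$ in every subgame simultaneously, so the principal's objective reduces state-by-state to welfare maximization. This is exactly what fails under hidden actions, where per-outcome payments generate information rents, forcing $U^a > V^{\mathrm{np}}$ and leaving the bound slack, consistent with the example of \cref{app:proof_example_revisited} where Stackelberg strictly dominates SPE. For completeness I would also note that by the uniqueness of SPE payoffs (\cref{lemma:SPE}), attaining the bound for the specific SPE of \cref{th:one_iter_convergence} implies every SPE attains it.
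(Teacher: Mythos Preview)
Your proof is correct and takes a genuinely different route from the paper. The paper argues locally: it notes that in SPE the agent's value equals its principal-free value $V^{\mathrm{np}}$ in every state, then checks that the principal cannot gain by (i) lowering the agent's future value via smaller future payments (impossible by limited liability, since $V^{\mathrm{np}}$ is already the agent's minimum) or (ii) paying less now and compensating with larger future payments (the discounted transfer is a wash in the deterministic case and weakly hurts the principal under stochastic transitions). This is essentially a no-profitable-deviation check over a specific parametrized family of alternative commitments.

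Your approach is global and more structural: the transfer identity $U^p(\rho)=W(\rho)-U^a(\rho)$ plus the two bounds $W(\rho)\le W^*$ and $U^a(\rho)\ge V^{\mathrm{np}}(s_0)$ give a uniform ceiling on \emph{every} commitment, and the backward-induction computation shows SPE attains it exactly. This buys a cleaner argument that does not require case-splitting on deterministic versus stochastic transitions, makes explicit the economic content (the observed-action principal extracts the full welfare surplus $W^*-V^{\mathrm{np}}(s_0)$), and pinpoints precisely why the hidden-action case differs (information rents push $U^a$ strictly above $V^{\mathrm{np}}$, leaving the ceiling slack). The paper's argument, on the other hand, stays closer to the equilibrium definition and does not need the auxiliary welfare MDP. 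One small remark: your backward induction is phrased for finite horizon, whereas the lemma also covers infinite horizon; you can close this by noting that $s\mapsto W^*(s)-V^{\mathrm{np}}(s)$ satisfies the principal's Bellman optimality equation and invoking uniqueness of the fixed point for $\gamma<1$.
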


\begin{proof}
    Consider an SPE. As discussed in the above proof, the contracts in SPE exactly reimburse the agent for choosing suboptimal actions, and the agent's value function when offered such a contract in some state $s$ remains the same as in the absence of the contract. Additionally, notice that the principal may never decrease the agent's value in any state below the value it gets in the absence of contracts (since payments are non-negative). Thus, the principal may not deviate from SPE by decreasing the agent's value in any of the future states through suboptimal contracts in order to incentivize a suboptimal action $a^p$ while paying less in $s$.

    On the other hand, consider the principal trying to pay less in some state $s$ by increasing the agent's value in some future states. If transition function is determenistic, then in order to decrease the payment in $s$ by some $ v < b^*(a^p)$ while still incentivizing $a^p$, the principal must, for example, increase the payment in $s' = \mathcal{T}(s, \mathcal{O}(s, a^p))$ by $v / \gamma$ -- which is inconsequential for the principal's value in $s$ (in our model, the discount factor is the same for the principal and the agent). In case of a stochastic transition function, the increase of payments in future states required to balance the decrease of the payment in $s$ by $v$ may even decrease the principal's value in $s$ compared to SPE.

    Thus, the principal may not deviate from an SPE (commit to non-credible threats) to increase its value in the initial state, and therefore any SPE is a Stackelberg equilibrium.
\end{proof}

\subsection{Deriving the Linear Program (\cref{sec:learning})}\label{app:proof_contract}


In \cref{sec:learning}, we describe an RL approach to solving the principal's MDP that involves two interdependent tasks of 1) learning a policy that the principal wants to implement and 2) computing the contracts that do so optimally. The former is solved by learning the contractual Q-function \eqref{eq:contractual_q_function}, which we derive as a fixed point of a contraction operator $\mathcal{H}_\rho$ \eqref{eq:def_contract_operator} in \cref{app:operators}. The latter (which is required as a subroutine to apply $\mathcal{H}_\rho$) we solve using Linear Programming, akin to how the static principal-agent problems are typically solved (as mentioned in \cref{sub:contract_design}).

Given a state $s \in S$ and an action to recommend $a^p \in A$, rewrite the right-hand side of $H_\rho$ as the following constrained optimization problem:

\begin{equation}\label{eq:LP_precursor}
\begin{split}
    &\max_{b \in B} \mathbb{E}_{o \sim \mathcal{O}(s, a^p), s' \sim \mathcal{T}(s, o)} \Bigl[ r^p(s, o) - b(o) + \gamma \max_{a'} q(s', a') \Bigr] \hspace{10pt} \text{s.t.} \\
    &\hspace{40pt} \forall a \in A: Q^*((s, b), a^p \mid \rho) \geq Q^*((s, b), a \mid \rho),
\end{split}
\end{equation} 

\noindent where the constraints explicitly require the recommended action~$a^p$ to be at least as `good' for the agent as any other action~$a$. Note that $r^p(s, o)$ and $q(s', a')$ are constants with respect to $b$ and can be omitted from the objective. To see that the constraints are linear, apply the Bellman optimality operator to the agent's Q-function, and rewrite constraints through the truncated Q-function using \eqref{eq:truncated}:

\begin{equation}\label{eq:LP_app}
\begin{split}
    &\hspace{100pt} \max_{b \in B} \mathbb{E}_{o \sim \mathcal{O}(s, a^p)} [- b(o)] \hspace{10pt} \text{s.t.}\\
    &\forall a \in A: \mathbb{E}_{o \sim \mathcal{O}(s, a^p)} [b(o)] + \overline{Q}^*(s, a^p \mid \rho) \geq \mathbb{E}_{o \sim \mathcal{O}(s, a)} [b(o)] + \overline{Q}^*(s, a \mid \rho).
\end{split}
\end{equation}

Because both the objective and the conditions are linear, the problem (\ref{eq:LP_app}) is an LP.

As discussed in \cref{sec:learning}, our approach to solving the agent's optimization problem is to learn the truncated Q-function $\overline{Q}^*(\rho)$ and transform it into the Q-function by adding the expected payment. Note that this representation of the agent's Q-function is used in the constraints of the above LP. The requirement of the principal having access to the agent's (truncated) Q-function can be seen as a limitation of the outlined approach. A potential remedy is to instead parameterize the principal's Q-function $Q^*(\pi)$ with a discontinuous neural network able to efficiently approximate the Q-function and the solutions to LPs without requiring access to the agent's private information, thus extending the method of \citet{WangDITP23}. Of course, one could also directly learn $Q^*(\pi)$ with simpler approaches, e.g., by discretizing the contract space or employing deep RL methods for continuous action spaces (such as Deep Deterministic Policy Gradient or Soft Actor-Critic).

Solving the LP also requires access to the outcome function $\mathcal{O}$; we discuss in Appendix \ref{app:implement_single} how this can be circumvented 
by parameterizing the outcome function with an additional neural network, trained as a probabilistic outcome classifier.

In the special case of the observed-action model, the LP has a simple solution if the agent best-responds to a specific principal that always offers a zero-vector $\mathbf{0}$ as a contract. In this solution, the principal exactly reimburses the agent for choosing a (suboptimal) recommended action $a^p$, and pays $0$ if agent chooses any other action $a \neq a^p$. Similar observation has been made in contemporaneous work, see end of Section 3.1 in \citet{wu2024contractual}.


\subsection{Approximation error}\label{app:error}

In this section, we analyze the convergence of the meta-algorithm in presence of approximation errors. We first examine the training phase, showing how errors in principal's and agent's Q-functions propagate through iterations. During the validation phase, these errors may cause the agent to deviate from the (already suboptimal) strategy recommended by the principal, jeopardizing the principal's utility. As a countermeasure, we propose a nudging method that ensures agent's compliance through additional payments.

Recall that the meta-algorithm finds SPE by iteratively computing principal's and agent’s optimal policies for $T+1$ iterations, where $T$ is horizon. To connect our analysis to our learning-based implementation, we assume these policies are expressed through Q-functions. 



\subsubsection{Training phase: error propagation analysis}\label{app:error_training}

\paragraph{Notation} 

From \cref{app:proof_abstract}, recall our notation of $S_t$ for the subset of possible states of the MDP at timestep $t$, and that some time step is unique for any state ($s \in S_t$ for some $t$ and sets $S_t$ are disjoint). Recall also that $\rho_i$ and $\pi_i$ denote principal's and agent's policies found at iteration $i$ of the meta-algorithm. For convenience, let $t$ denote both the timestep and the iteration of the meta-algorithm, and iterate from $T$ to $0$ in a sequence $t \in (T, T-1, …, 0)$.

Each iteration $t$, due to early stopping, both Q-functions are approximated with some errors, resulting in approximately optimal policies. Denote the approximate agent's Q-function found at iteration $t$ as $\tilde{Q}_t$; it produces a policy $\tilde{\pi}_t$ defined by $\tilde{\pi}_t(s, b) = \argmax_a \tilde{Q}_t ((s, b), a)$. Denote the approximate principal's contractual Q-function found at iteration $t$ as $\tilde{q}_t$; it produces a policy $\tilde{\rho}_t$ defined by $\tilde{\rho}_t (s) = \argmax_a \tilde{q}_t(s, a)$.

Denote the (maximal) approximation errors of principal's and agent's Q-functions at iteration $t$ as $\delta_t$ and $\epsilon_t$ -- these are defined in the next subsection. Define the accumulated approximation errors at iteration $t$ as $D_t = \max_{s \in S_t, a \in A} |\tilde{q}_t(s, a) - q^*(s, a \mid \tilde{\pi}_t)|$ and $E_t = \max_{s \in S_t, a \in A, b \in B} |\tilde{Q}_t((s, b), a) - Q^*((s, b), a \mid \tilde{\rho}_t)|$. Here, $q^*(s, a \mid \tilde{\pi}_t)$ and $Q^*((s, b), a \mid \tilde{\rho}_t)$ denote the players' Q-functions without the approximation errors (as defined in Sections \ref{sub:PA-MDP} and \ref{sec:learning}): $q^*(\tilde{\pi}_t)$ is the principal's contractual Q-function when adopting the subgame-perfect policy $\rho^*(\tilde{\pi}_t)$, and $Q^*(\tilde{\rho}_t)$ is the agent's Q-function when adopting the best-responding policy $\pi^*(\tilde{\rho}_t)$. Note that these are not Q-functions in SPE.

\paragraph{Backward induction in standard MDP} 

To estimate $D_t$ and $E_t$, we need to analyze how approximation errors propagate through iterations.
Because the meta-algorithm implicitly performs backward induction (\cref{remark:backward_induction}), we can invoke results for backward induction in standard MDPs. Below we derive these in detail, and later derivations for principal's and agent's MDPs can be done by analogy.

Consider a standard MDP (as defined in \cref{app:proof_mdps}). An iteration $t$ of backward induction applies Bellman backup to the Q-function in all states in $S_t$:
$$\forall s \in S_t, a \in A: Q(s, a) = \mathbb{E}_{s' \sim \mathcal{T}(s, a)} [ r(s, a, s') + \gamma \max_{a'} Q(s', a') ],$$
\noindent under the condition that in terminal states $Q(s', a') \equiv 0$. Iterating until reaching the root state $s_0$ solves the MDP: the resulting policy $\pi(s) = \argmax_a Q(s, a)$ achieves the optimal utility $u^* = V(s_0) = \max_a Q(s_0, a)$. We denote $Q^\pi \equiv Q$ to indicate that this Q-function is with respect to the policy $\pi$ -- this will be useful below.

Now, consider backward induction with approximation error: the Bellman backup is applied with some error term $\varepsilon_{s,a}$ such that $|\varepsilon_{s,a}| \leq \delta$, where $\delta$ is its upper bound:
$$\forall s \in S_t, a \in A: \tilde{Q}(s, a) = \mathbb{E}_{s' \sim \mathcal{T}(s, a)} [ r(s, a, s') + \gamma \max_{a'} \tilde{Q}(s', a') ] + \varepsilon_{s,a},$$
\noindent where $\tilde{Q}$ denotes the resulting approximate `Q-function' -- formally, it is not a Q-function, but a numerical approximation. The policy $\tilde{\pi}(s) = \argmax_a \tilde{Q}(s, a)$ achieves some utility $u = V^{\tilde{\pi}}(s_0) = \max_a Q^{\tilde{\pi}}(s_0, a)$ (note the superscript).

\begin{lemma}\label{lemma:backward_error}
    In a standard MDP, if the approximation error at each iteration $t$ of backward induction does not exceed $\delta$, then the total error $\max_a |\tilde{Q}(s_0, a) - Q(s_0, a)|$ in $s_0$ is at most $D_0 = \frac{1 - \gamma^{T+1}}{1 - \gamma} \delta$.
\end{lemma}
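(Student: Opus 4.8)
The plan is to track the error layer by layer, exploiting the assumed disjointness of the sets $S_t$ and the acyclic (finite-horizon) structure of the MDP. Define the per-layer error
$$\Delta_t = \max_{s \in S_t,\, a \in A} \left| \tilde{Q}(s, a) - Q(s, a) \right|,$$
so that the quantity of interest, $\max_a |\tilde{Q}(s_0, a) - Q(s_0, a)|$, is exactly $\Delta_0$ (recall $S_0 = \{s_0\}$). I would first establish a one-step recurrence relating $\Delta_t$ to $\Delta_{t+1}$, and then unroll it from the terminal layer $S_T$ down to the root layer $S_0$.

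For the base case, at the terminal layer $t = T$ the continuation value vanishes for both the exact and the approximate backups (by the convention $Q(s',a') \equiv 0$ in terminal states), so $\tilde{Q}(s,a) - Q(s,a) = \varepsilon_{s,a}$ for every $s \in S_T$, giving $\Delta_T \le \delta$. For the inductive step, I subtract the exact and approximate Bellman backups for a fixed $s \in S_t$ and $a \in A$; the reward terms cancel, leaving
$$\tilde{Q}(s,a) - Q(s,a) = \gamma\, \mathbb{E}_{s' \sim \mathcal{T}(s,a)}\!\left[ \max_{a'} \tilde{Q}(s', a') - \max_{a'} Q(s', a') \right] + \varepsilon_{s,a}.$$
Taking absolute values and applying the triangle inequality, the bound $|\varepsilon_{s,a}| \le \delta$, the non-expansiveness of the max operator ($|\max_{a'} f(a') - \max_{a'} g(a')| \le \max_{a'} |f(a') - g(a')|$), and the fact that every reachable successor $s'$ lies in the next layer $S_{t+1}$, I obtain the recurrence $\Delta_t \le \gamma\, \Delta_{t+1} + \delta$.

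Finally, I would unroll this recurrence from $t=T$ to $t=0$: starting from $\Delta_T \le \delta$ and applying $\Delta_t \le \gamma\, \Delta_{t+1} + \delta$ exactly $T$ times yields
$$\Delta_0 \le \sum_{j=0}^{T} \gamma^{j}\, \delta = \frac{1 - \gamma^{T+1}}{1 - \gamma}\, \delta = D_0,$$
as claimed. I do not expect a genuine obstacle here, since this is the standard error-propagation analysis for value iteration / backward induction. The only points requiring care are (i) correctly pinning down the base case at the terminal layer so that exactly $T+1$ error terms accumulate, matching the $T+1$ iterations of the meta-algorithm, and (ii) justifying that the absolute value may be passed inside the expectation over $s'$ and that the difference of maxima is controlled layerwise. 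These are routine but should be stated cleanly, since the subsequent analyses of the principal's and agent's MDPs will proceed by the same template.
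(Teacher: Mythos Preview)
Your proposal is correct and follows essentially the same approach as the paper: define the per-layer error, establish the recurrence $\Delta_t \le \delta + \gamma\,\Delta_{t+1}$ with base case $\Delta_T \le \delta$, and unroll to the geometric sum. If anything, your treatment is slightly more careful in explicitly invoking the non-expansiveness of the max operator and passing the absolute value inside the expectation, which the paper glosses over.
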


\begin{proof}
Each iteration, the error from previous iterations decreases by a factor of $\gamma$ due to the contraction property of the Bellman operator. Concretely, if the accumulated error in any $s' \in S_{t+1}$ is at most $D_{t+1}$, then in any $s \in S_t$ it is at most $D_t = \max_{a \in A, |\varepsilon_{s,a}| \leq \delta}|\tilde{Q}(s, a) - Q(s, a)| = \delta + \gamma \max_{a} |\mathbb{E}_{s'} \max_{a'} [\tilde{Q}(s', a') - Q(s', a')]| = \delta + \gamma D_{t+1}$. Solving the recursive equation with $D_T = \delta$, the accumulated error at iteration $t$ is at most $D_t = \sum_{k \in [t, T]} \gamma^{k-t} \delta$, and in particular, the total error is at most $D_0 = \frac{1 - \gamma^{T+1}}{1 - \gamma} \delta$ (using the sum of geometric series formula). This result shows that $\tilde{Q}(s, a)$ is within $D_t$ from $Q^\pi(s, a)$ in any state and for any action.
\end{proof}

\begin{lemma}\label{lemma:backward_utility}
    In a standard MDP, if the approximation error at each iteration $t$ of backward induction does not exceed $\delta$, then the agent's utility is within $2D_0$ from the optimal, $u \geq u^* - 2D_0$.
\end{lemma}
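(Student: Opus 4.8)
The plan is to combine two ``closeness'' facts about the approximate value table $\tilde{Q}$ produced by noisy backward induction: first, that it is uniformly close to the \emph{optimal} $Q$, and second, that it is also uniformly close to the true Q-function $Q^{\tilde{\pi}}$ of the greedy policy $\tilde{\pi}$ it induces. \cref{lemma:backward_error} already supplies the first fact, namely $|\tilde{Q}(s,a) - Q(s,a)| \le D_t$ for every $s \in S_t$ and every $a$, and in particular $|\tilde{Q}(s_0,a) - Q(s_0,a)| \le D_0$. The crux is that a nearly identical recursion controls the gap to $Q^{\tilde{\pi}}$, and a short ``sandwich'' argument at the root then yields the factor of $2$.

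\textbf{Step 1 (closeness to the greedy policy's own value).} I would first show, reusing the backward induction of \cref{lemma:backward_error}, that $|\tilde{Q}(s,a) - Q^{\tilde{\pi}}(s,a)| \le D_t$ for all $s \in S_t,\, a \in A$. The key observation is that the noisy Bellman backup uses the continuation value $\max_{a'} \tilde{Q}(s',a') = \tilde{Q}(s', \tilde{\pi}(s'))$, i.e.\ exactly the action the greedy policy $\tilde{\pi}$ executes. Subtracting the policy-evaluation identity $Q^{\tilde{\pi}}(s,a) = \mathbb{E}_{s'}[r(s,a,s') + \gamma Q^{\tilde{\pi}}(s',\tilde{\pi}(s'))]$ from the noisy backup for $\tilde{Q}(s,a)$ thus leaves only $\gamma\,\mathbb{E}_{s'}[\tilde{Q}(s',\tilde{\pi}(s')) - Q^{\tilde{\pi}}(s',\tilde{\pi}(s'))]$ plus the local error bounded by $\delta$. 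Setting $\Delta_t := \max_{s \in S_t,\, a}|\tilde{Q}(s,a)-Q^{\tilde{\pi}}(s,a)|$, this gives the recursion $\Delta_t \le \delta + \gamma \Delta_{t+1}$, identical to the one solved in \cref{lemma:backward_error}, so $\Delta_t \le D_t$ and in particular $\Delta_0 \le D_0$.

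\textbf{Step 2 (sandwich at the root).} Let $a^* = \argmax_a Q(s_0,a)$, so $u^* = Q(s_0,a^*)$, and let $\hat{a} = \tilde{\pi}(s_0) = \argmax_a \tilde{Q}(s_0,a)$, so that $u = V^{\tilde{\pi}}(s_0) = Q^{\tilde{\pi}}(s_0,\hat{a})$. I then chain
\begin{equation*}
u = Q^{\tilde{\pi}}(s_0,\hat{a}) \ge \tilde{Q}(s_0,\hat{a}) - D_0 \ge \tilde{Q}(s_0,a^*) - D_0 \ge Q(s_0,a^*) - 2D_0 = u^* - 2D_0,
\end{equation*}
where the first inequality is Step 1 at $s_0$, the second is greedy optimality of $\hat{a}$ for $\tilde{Q}$, and the third is \cref{lemma:backward_error} at $s_0$.

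\textbf{Main obstacle.} The subtle point, and the reason a naive argument fails, is that greedy optimality of $\hat{a}$ only controls $Q(s_0,\hat{a})$, the value of the greedy action \emph{followed by optimal play}, whereas the realized utility $u$ equals $Q^{\tilde{\pi}}(s_0,\hat{a})$, where the agent \emph{continues greedily} and could in principle accumulate suboptimality at every subsequent level (a level-by-level policy-loss recursion would give roughly $2\sum_t \gamma^t D_t$ rather than $2D_0$). Step 1 is exactly what averts this: it certifies that $\tilde{Q}$ tracks the greedy policy's realized value $Q^{\tilde{\pi}}$ just as tightly ($D_0$) as it tracks the optimal value, so no additional error compounds across the horizon and the clean $2D_0$ bound survives.
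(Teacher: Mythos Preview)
Your proposal is correct and follows essentially the same approach as the paper. The paper likewise splits the gap $u^*-u$ via $\tilde{Q}(s_0,\tilde{\pi}(s_0))$ into two pieces, bounds the first by \cref{lemma:backward_error} (using $|\max_a Q - \max_a \tilde{Q}|\le D_0$), and bounds the second, $|Q^{\tilde{\pi}}(s,\tilde{\pi}(s))-\tilde{Q}(s,\tilde{\pi}(s))|$, by exactly the recursion you run in Step~1; your ``sandwich'' chain is just the one-sided version of the paper's triangle-inequality phrasing.
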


\begin{proof}
In any state $s \in S$, we have that $|Q^\pi(s, \pi(s)) - Q^{\tilde{\pi}}(s, \tilde{\pi}(s))| = |Q^\pi(s, \pi(s)) - Q^{\tilde{\pi}}(s, \tilde{\pi}(s)) + \tilde{Q}(s, \tilde{\pi}(s)) - \tilde{Q}(s, \tilde{\pi}(s))| \leq |Q^\pi(s, \pi(s)) - \tilde{Q}(s, \tilde{\pi}(s))| + |Q^{\tilde{\pi}}(s, \tilde{\pi}(s)) - \tilde{Q}(s, \tilde{\pi}(s))|$. 
For the first term, from \cref{lemma:backward_error} it follows that the maximums over actions are also within $D_t$: $|Q^\pi(s, \pi(s)) - \tilde{Q}(s, \tilde{\pi}(s))| = |\max_a Q^\pi(s, a) - \max_a \tilde{Q}(s, a)| \leq D_t$.
For the second term, we can recursively show (like in \cref{lemma:backward_error}) that: $|Q^{\tilde{\pi}}(s, \tilde{\pi}(s)) - \tilde{Q}(s, \tilde{\pi}(s))| \leq \delta + \gamma \mathbb{E} |Q^{\tilde{\pi}}(s', \tilde{\pi}(s')) - \tilde{Q}(s', \tilde{\pi}(s'))| = D_t$. 
Combining and substituting for $s_0$, we have that $u^* - u = Q^\pi(s_0, \pi(s_0)) - Q^{\tilde{\pi}}(s_0, \tilde{\pi}(s_0)) \leq 2 D_0$.
\end{proof}

\paragraph{Backward induction in principal-agent MDP}

Analogous reasoning to \cref{lemma:backward_error} can be applied to Principal-Agent MDPs. To simplify, we will assume that at iteration $t$, the meta-algorithm only changes Q-values in $S_t$ -- akin to backward induction. The derived bounds will be loose as we ignore that by the contraction property (\cref{th:contraction}), each iteration improves Q-values in all states and not just in $S_t$.

At iteration $t$, the meta-algorithm first implicitly applies the Bellman backup \eqref{eq:q_value_SPE_agent} to the agent's Q-function with some noise $\varepsilon_{s,a}$ such that $|\varepsilon_{s,a}| \leq \epsilon_t$, where $\epsilon_t$ is its upper bound:
\begin{equation}\label{eq:error_agent}
    \tilde{Q}_t((s, b), a) = \mathbb{E} \Bigl[ r(s, a) + b(o) + \gamma \max_{a'} \tilde{Q}_{t+1}((s', \tilde{\rho}_{t+1}(s')), a') \Bigr] + \varepsilon_{s,a}.
\end{equation}
\noindent Analogously with \cref{lemma:backward_error}, the accumulated error at iteration $t$ is at most $E_t = \sum_{k \in [t, T]} \gamma^{k-t} \epsilon_k$, and the total error is at most $E_0 = \sum_{k \in [0, T]} \gamma^{k} \epsilon_k \leq \frac{1 - \gamma^{T+1}}{1 - \gamma} \epsilon$, where $\epsilon = \max_t \epsilon_t$.

Then, the meta-algorithm implicitly applies the Bellman backup \eqref{eq:def_contract_operator} to the principal's Q-function with some noise $\varepsilon_{s,a}^p$ such that $|\varepsilon_{s,a}^p| \leq \delta_t$, where $\delta_t$ is its upper bound:
\begin{equation}\label{eq:error_principal}
    \tilde{q}_t(s, a) = \max_{\{b \mid \tilde{\pi}_t(s, b) = a\}} \mathbb{E} \Bigl[ r^p(s, o) - b(o) + \gamma \max_{a'} \tilde{q}_{t+1}(s', a') \Bigr] + \varepsilon_{s,a}^p.
\end{equation}
Analogously with \cref{lemma:backward_error}, the accumulated error is $D_t = \sum_{k \in [t, T]} \gamma^{k-t} \delta_k$, and the total error is $D_0 = \sum_{k \in [0, T]} \gamma^{k} \delta_k \leq \frac{1 - \gamma^{T+1}}{1 - \gamma} \delta$, where $\delta = \max_t \delta_t$.

The utility analysis is deferred until after the validation phase analysis. Note that the errors $\varepsilon_{s,a}$ and $\varepsilon_{s,a}^p$ are not dependent on the contract $b$. With this we indicate that the agent precisely estimates the expected payments of different actions when observing a contract, and the principal precisely solves the conditional maximization in $\tilde{q}_t(s, a)$.

The remaining question is what are the bounds $\epsilon_t$ and $\delta_t$. If inner and outer levels are solved with Q-learning, we can invoke known Q-learning convergence rates to tie $\epsilon_t$ and $\delta_t$ to the budget of Q-learning iterations. For example, \citet{szepesvari1997asymptotic} proves a convergence rate of $\sqrt{\log\log (I) / I}$ (where $I$ is the number of iterations of Q-learning), so the approximation errors $\epsilon_t$ and $\delta_t$ can be chosen as proportional to this rate.

The analysis so far does not capture the interdependence of the agent's and principal's approximation errors. At the same time, this interdependence makes it difficult to analyze potential losses of principal's utility due to the errors. We address these issues below.

\subsubsection{Validation phase: utility losses and nudging}\label{app:error_validation}

\paragraph{Utility loss at validation}

Denote principal's optimal utility (given in SPE) as $u^* = V^*(s_0) = \max_a q^*(s_0, a)$. Without loss of generality, assume that if 1) the principal offers no contracts and 2) the agent minimizes the principal's utility in the MDP (the agent is adversarial), then the principal's utility equals $u=0$. Also without loss of generality, assume that the MDP terminates at $T$ and not earlier.

The output of the training phase is an approximate principal's policy $\tilde{\rho}$. If the principal had oracle access to the agent's Q-function in SPE, then by \cref{lemma:backward_utility}, it would be guaranteed a utility within the total approximation error, equal to $u \geq u^* - 2 D_0$. However, at the training phase, the principal instead uses an estimate of the agent's Q-function.

Consider a validation phase where an oracle agent precisely (without approximation) best-responds: rather than $\tilde{\pi}$ that the principal wishes to incentivize, the oracle agent employs the best-responding policy $\pi^*(\tilde{\rho})$ by maximizing the true Q-function: $\pi^*(s, b \mid \tilde{\rho}) = \argmax_a Q^*((s, b), a \mid \tilde{\rho})$.
If these policies differ, no guarantees can be given for the principal's utility due to its discontinuity. This issue is not unique to our model but is central to principal-agent theory (see e.g. \citet{WangDITP23}). 
In the worst case when the errors are such that the oracle agent is adversarial, the principal misses on the entirety of $u^*$ and may still pay, resulting in a negative utility, $u < 0$. Below we derive a method for improving this worst-case utility guarantee.

\paragraph{Nudging}

The only way to mitigate this risk is through additional payments, which we call `nudging' the agent. Nudging is intended to compensate the potential errors in the offered contracts that would cause the oracle agent to deviate. Concretely, without nudging, the incentive-compatibility constraints in the LP \eqref{eq:LP} only require that the agent weakly prefers some action, whereas with nudging, the agent must strictly prefer it by some margin of $\xi_t$ (this margin is added to the right-hand side of LP constraints). Below we describe an implementation of nudging that improves the utility lower bound.

The nudges must be added when finding optimal contracts during training, as they change the players' Q-functions.
Clearly, this changes the solution that the meta-algorithm finds. Given values of $\xi_t$, we term this modified solution concept as $\xi$-SPE.
Analogously to how $\tilde{Q}$ and $\tilde{q}$ denote approximations of agent's and principal's Q-functions in SPE, denote these approximations for $\xi$-SPE as $\tilde{Q}^\xi$ and $\tilde{q}^\xi$. Applying meta-algorithm with error to approximate $\xi$-SPE, the Q-function approximations are updated similarly to \eqref{eq:error_agent} and \eqref{eq:error_principal}:

\begin{equation}\label{eq:error_agent_nudge}
    \tilde{Q}_t^\xi((s, b), a) = \mathbb{E} \Bigl[ r(s, a) + b(o) + \gamma \max_{a'} \tilde{Q}_{t+1}^\xi((s', \tilde{\rho}^\xi_{t+1}(s')), a') \Bigr] + \varepsilon_{s,a},
\end{equation}

\begin{equation}\label{eq:error_principal_nudge}
    \tilde{q}_t^\xi(s, a) = \max_{\{b \mid \tilde{\pi}_t(s, b) = a\}} \mathbb{E} \Bigl[ r^p(s, o) - b(o) + \gamma \max_{a'} \tilde{q}_{t+1}^\xi(s', a') \Bigr] + \varepsilon_{s,a}^p,
\end{equation}


At the training phase, computing optimal contracts involves solving LPs with approximate Q-functions (explicitly or otherwise). Consider such LPs in the problem with nudging. For a state $s \in S_t$ and a desirable action $a^p \in A$, the contract is computed as a solution to:

\begin{equation}\label{eq:LP_approx_nudge}
    \min_{b \in B} \mathbb{E} [b(o)] \hspace{10pt} \text{s.t.} \hspace{10pt} \forall a \in A: \tilde{Q}^\xi_t((s, b), a^p) \geq \tilde{Q}^\xi_t((s, b), a) + \xi_t.
\end{equation}

By design, $\tilde{Q}^\xi_t((s, b), a)$ is within $E_t$ from $Q^*((s, b), a \mid \tilde{\rho}_t^\xi)$ (and the value of $E_t$ is derived in \cref{app:error_training}). So, to guarantee the oracle agent's choice of $a^p$ in $s$, i.e. that $\forall a: Q^*((s, b), a^p \mid \tilde{\rho}_t^\xi) \geq Q^*((s, b), a \mid \tilde{\rho}_t^\xi)$, it is sufficient to choose $\xi_t = 2 E_t$ in \eqref{eq:LP_approx_nudge}. 
This choice of nudges ensures the oracle agent's compliance across all states.

Denote the contract that solves the LP \eqref{eq:LP_approx_nudge} for $\xi_t = 0$ (so, without nudging in $s$) as $b^*$, and the contract that solves it for $\xi_t = 2 E_t$ as $b^\xi$. Denote the increase in the expected payment resulting from a nudge of $\xi_t$ as $\sigma_{s,a^p} = \mathbb{E}[b^\xi(o) - b^*(o)]$.
Denote the optimal principal's utility in $\xi$-SPE as $u^\xi$. By design, the difference between optimal utilities is at most $u^* - u^\xi \geq \sum_t \gamma^t \mathbb{E}_{s \in S_t} \sigma_{s, a^p}$, where the expectation over states is with respect to the probability of the MDP being in state $s$ at iteration $t$ under the principal's and agent's policies (and $a^p$ maximizes principal's Q-function in $s$).

Notice that if the constraint is tight without the nudge, then the increase of expected payment is at least $\sigma_{s,a^p} \geq \xi_t$.
The equality holds, for example, in the observed-action model: the optimal contract $b^*$ directly offers a payment for performing a certain action, and a nudge of $\xi_t = 2 E_t$ simply increases this payment by as much. In this case, a guaranteed utility is $u \geq u^\xi - 2 D_0 \geq u^* - 2 D_0 - 2 \sum_{t \in [0, T]} \gamma^t E_t$.

For the general case, $\sigma_{s,a^p}$ needs to be bounded from above. This bound will depend on how much the outcome distributions between $a^p$ and other actions overlap. Specifically, assuming there are at least as much outcomes as actions, $|A| \leq |O|$, then for any action $a^p$, there is always an outcome $o^p$ with higher mass for this action than for other actions: $\forall s \in S, a^p \in A, \exists o^p \in O: \forall a \in A, \mathcal{O}(s, a^p)(o^p) > \mathcal{O}(s, a)(o^p)$. Therefore, given $s$ and $a^p$, increasing the payment along this outcome by $\frac{\xi_t}{\mathcal{O}(s, a^p)(o^p) \cdot \min_a [\mathcal{O}(s, a^p)(o^p) - \mathcal{O}(s, a)(o^p)]}$ leads to an increase of the expected payment by $\sigma_{s,a^p} = \frac{\xi_t}{\min_a [\mathcal{O}(s, a^p)(o^p) - \mathcal{O}(s, a)(o^p)]}$ and guarantees that the nudged constraints are satisfied. Denoting the minimal value of the denominator across all states and actions as 
\begin{equation}\label{eq:d_min}
    d_{\min} = \min_{s \in S, a \in A, a^p \in A} [\mathcal{O}(s, a^p)(o^p) - \mathcal{O}(s, a)(o^p)],
\end{equation}
the guaranteed utility is $u \geq u^* - 2 D_0 - 2 \sum_{t \in [0, T]} \gamma^t E_t / d_{\min}$.

\section{Principal-Multi-Agent Example: Prisoner's Dilemma}\label{app:pd}

Below we illustrate the multi-agent model from \cref{sec:multi_agent_model} on the example of a simple matrix game.

Consider a standard one-step Prisoner's Dilemma with the payoff matrix as in Table \ref{table:pd}a. Here, the only equilibrium is mutual defection (DD), despite cooperation (CC) being mutually beneficial. How should a benevolent principal change the matrix through payments to incentivize cooperation?

One answer is that CC should become an equilibrium through minimal payments in CC. In one of the payment schemes that achieve this, the principal pays a unit of utility to both players for the CC outcome, resulting in the payoff matrix as in Table \ref{table:pd}b.\footnote{Unlike the single-agent model, the optimal payment scheme here is ambiguous. Any payment scheme that gives a unit of utility to both agents in CC and no utility to a defecting agent in DC and CD forms an SPE when coupled with best-responding agents: CC becomes an equilibrium, the payments in which are minimized.} However, note that DD remains an equilibrium for the agents. In fact, the new payoff matrix is also a social dilemma known as the Stag Hunt. In the context of (decentralized) learning, there is no reason to expect the agents to converge to CC instead of DD, and convergence to suboptimal equilibria has been observed empirically in generalized Stag Hunt games \citep{peysakhovich2018prosocial}.

As a more robust approach, the principal can make action C dominant. This is stronger than just making CC an equilibrium because each agent will prefer action C regardless of the actions of others. To achieve this, in addition to paying a unit of utility to both agents in CC, the principal pays two units of utility to the cooperating agent in CD and DC, resulting in the payoff matrix as in Table \ref{table:pd}c. This is the kind of solution we implement in our application to SSDs, where we formulate the principal's objective as learning a social welfare maximizing strategy profile and its minimal implementation.

Importantly, in the context of our principal-agent model, the minimal implementation is consistent with SPE in that the principal minimizes payments when agents best-respond by following recommendations (CC in our example). So the IC property is an additional constraint, which specifies an otherwise ambiguous objective of finding the principal's policy in SPE, and which only requires additional payments in equilibrium.

Note that in our example, the payment to each agent for the same action depends on the other agent's action (e.g., the row agent receives +2 in CD and +1 in CC). This dependence on other agents is even more pronounced in stochastic games, where payments of a minimal implementation condition on the \emph{policies} of others -- not only on their immediate actions in the current state but also on their actions in all possible future states.\footnote{Here we refer to the dependence of value functions on $\pi_{-i}$ in the IC constraints \eqref{eq:multi_ds}.} For this reason, learning the precise minimal implementation is generally impractical: consider a neural network used for contract estimation for some agent explicitly conditioning on the parameters of neural networks of all other agents, or on their actions in all states of the MDP.

As a tractable alternative, we use a simple approximation that performs well in our experiments. Specifically, we train the principal assuming that each agent sticks to one of two policies: either always follow the principal's recommendations and get paid, or ignore them and act independently as if there is no principal. In equilibrium, both policies maximize the agent's welfare, which justifies the assumption. For implementation details of this approximation, see Appendix \ref{app:implement_multi}.

\begin{table*}[t]
\centering
\caption{Prisoner's Dilemma as a principal-multi-agent game. `Def' denotes `Defect' and `Coop' denotes `Cooperate'. Blue highlights principal's changes of agents' payoffs.}
\label{table:pd}
\begin{tabular} { ccc }
(a) no payments & (b) arbitrary payments in SPE & (c) IC payments in SPE \\
\begin{tabular}{@{}lcc@{}}
    \toprule
              & Def & Coop \\
    Def    & 2, 2   & 4, 0     \\
    Coop & 0, 4  & 3, 3      \\ \bottomrule
\end{tabular} &
\begin{tabular}{@{}lccc@{}}
    \toprule
              & Def & Coop \\
    Def    & 2, 2   & 4, 0     \\
    Coop & 0, 4  & \kibitz{blue}{4}, \kibitz{blue}{4}      \\ \bottomrule
\end{tabular} &
\begin{tabular}{@{}lccc@{}}
    \toprule
              & Def & Coop \\
    Def    & 2, 2   & 4, \kibitz{blue}{2}     \\
    Coop & \kibitz{blue}{2}, 4  & \kibitz{blue}{4}, \kibitz{blue}{4}      \\ \bottomrule
\end{tabular}
\end{tabular}
\end{table*}

\section{Experiments}\label{app:implement}

\subsection{Experiments in Tree MDPs}\label{app:implement_single}

In this section, we empirically test the convergence of 
 Algorithm~\ref{alg:both} towards SPE in hidden-action MDPs.
 We experiment with a small variation on the algorithm, which expedites convergence: 
the agent and principal policies are updated simultaneously instead of
iteratively. This can be seen as terminating inner and outer tasks early, after one update, leading to approximately optimal policies.

\paragraph{Environment.}

In these experiments, we simulate a {\em multi-stage project} 
in which the principal offers contracts at intermediate stages, 
influencing the agent's choice of effort. 
Specifically, we model an MDP 
that is a complete binary tree, where the agent's decisions at each state are binary, representing high or low effort, 
 and correspond to good or bad outcomes with 
 different probabilities.
\diadd{This is an intentionally simple environment, allowing us to compare with precise ground truth.}

The MDP is represented by a complete binary decision tree with depth $10$, resulting in $1023$ states. In each state, the agent may take two actions, $a_0$ and $a_1$, which may result in two outcomes, $o_0$ and $o_1$. The action $a_0$ results in the outcome $o_0$ with probability $0.9$ in all states; likewise, $a_1$ results in $o_1$ with probability $0.9$. The tree transitions to the left subtree after outcome $o_0$ and to the right subtree after outcome $o_1$.

The action $a_0$ yields no cost for the agent, i.e., $\forall s: r(s, a_0) = 0$; and the outcome $o_0$ yields no reward for the principal, i.e. $\forall s: r^p(s, o_0)=0$. Conversely, the action $a_1$ is always costly and the outcome $o_1$ is always rewarding, with values randomly sampled. Specifically, let $\mathbb{U}$ denote one-dimensional uniform distribution, $\mathbb{U}_1 = \mathbb{U}[0, 1]$ and $\mathbb{U}_2 = \mathbb{U}[0, 2]$. Then, the agent's reward is generated as  $r(s, a_1) = -(u \sim \mathbb{U}[0, 1 - (v \sim \mathbb{U}_1)])$, and the principal's reward is generated as $r^p(s, o_1) = (u \sim \mathbb{U}[0, 2 - (v \sim \mathbb{U}_2)])$ Note that the principal's reward is on average higher than the agent's cost. Using this reward function sampling method, the principal's policy $\rho^*$ in SPE offers non-trivial contracts that incentivize $a_1$ in about $60\%$ of states.

\paragraph{Implementation details.}

\begin{algorithm}[t]
\caption{
A deep Q-learning implementation of \cref{alg:both} in the single-agent setting}\label{alg:single_agent}
\begin{algorithmic}[1]
\Require principal's Q-network $\theta$, agent's Q-network $\phi$, target networks $\theta'$ and $\phi'$, replay buffer $RB$
\State Initialize buffer $RB$ with random transitions, networks $\theta$ and $\phi$ with random parameters
\For{\texttt{number of updates}} 
\Comment Training loop
    \For{\texttt{number of interactions per update}} 
    \Comment Interact with the MDP
        \State Select an action to recommend, $a^p$, with $q^\theta(s, a)$ via $\epsilon$-greedy
        \State Sample $o \sim \mathcal{O}(s, a)$, $r^a = r(s, a)$, $r^p = r^p(s, o)$, transition the MDP to $s' \sim \mathcal{T}(s, o)$
        \State Add the transition $(s, a^p, r^a, r^p, o, d, s')$ to the buffer $RB$
        \State If $d=1$, reset the MDP 
        \Comment $d$ is a binary `done' variable indicating termination
    \EndFor
    \State Sample a mini-batch of transitions $mb \sim RB$
    \For{$(s, a^p, r^a, r^p, o, d, s') \in mb$}
    \Comment Estimate target variables to update $\theta$ and $\phi$
        \State $a^{p \prime} = \argmax_{a'} q^\theta(s', a')$ 
        \Comment Select the next action for Q-learning updates
        \State Find optimal contracts $b^*(s, a^p)$ and $b^*(s', a^{p \prime})$ by solving LP (\ref{eq:LP_app}) 
        \State $y^p(s, a^p) = r^p - b^*(s, a^p, o) + \gamma (1 - d) q^{\theta'}(s', a^{p \prime})$ 
        \State $y^a(s, a^p) = r^a + \gamma (1 - d) (\mathbb{E}_{o' \sim \mathcal{O}(s', a^{p \prime})} b^*(s', a^{p \prime}, o') + \overline{Q}^{\phi'}(s', a^{p \prime}))$ 
    \EndFor
    \State Minimize $L(\theta) = \sum_{mb} \left(q^\theta(s, a^p) - y^p(s, a^p)\right)^2$
    \Comment Update $\theta$ as DQN
    \State Minimize $L(\phi) = \sum_{mb} \left(\overline{Q}^\phi(s, a^p) - y^a(s, a^p)\right)^2$
    \Comment Update $\phi$ as DQN
\EndFor
\end{algorithmic}
\end{algorithm}

We parameterize the principal's and the agent's Q-functions as \emph{Deep Q-Networks} \citep{mnih2015human} respectively denoted by $\theta$ and $\phi$. The input to both networks is a state $s$, and both networks approximate Q-values for all actions $a \in A$. Specifically, the principal's network estimates the contractual optimal Q-values $q^\theta(s, a)$, representing its payoffs when optimally incentivizing the agent to take action $a$ in state $s$; and the agent's network estimates the truncated optimal Q-values $\overline{Q}^\phi(s, a)$, representing its payoffs minus the expected immediate payment.

Algorithm \ref{alg:single_agent} describes our Deep Q-learning-based implementation of Algorithm \ref{alg:both} for the single-agent MDPs. The overall pipeline is standard, with a few notable details.

First, unlike the iterative convergence of the principal and the agent in Algorithm \ref{alg:both}, the two policies are trained simultaneously.

Second, the outcome function $\mathcal{O}$ is assumed to be known. If not, it can be parameterized as an additional neural network $\xi$ and trained as a probabilistic classifier with sampled outcomes used as ground truth; the resulting loss function would be $L(\xi) = \sum_{mb} CE[\mathcal{O}_\xi(s, a^p), o]$, where $CE$ denotes cross-entropy, $mb$ denotes a mini-batch, and $\mathcal{O}_\xi(s, a^p)$ denotes the predicted probabilities of outcomes as a function of state-action. We implemented this in our single-agent experiments with $\mathcal{O}$ constant across states and found no difference from having access to $\mathcal{O}$, although this might change in more complex scenarios.

Third, when updating the agent's network $\phi$, the target variable $y^{\phi'}(s, a^p)$ is estimated based on the contract in the next state $s'$ rather than the current state $s$. Since payment is a part of reward, the target variable is effectively estimated as a mixture of parts of rewards in $s$ and $s'$. This is required so that the truncated rather than the standard Q-function is learned.

\paragraph{Hyperparameters.}
The neural networks have $2$ hidden fully connected layers, each consisting of $256$ neurons and followed by a ReLU activation. The networks are trained for $20000$ iterations, each iteration including $8$ environment interactions and a single gradient update on a mini-batch with $128$ transitions, sampled from the replay buffer. Every $100$ iterations, the target networks are updated by copying the parameters of the online networks. The learning rate is initialized at $0.001$ and is exponentially annealed to $0.0001$ throughout the training. Similarly, the exploration rate $\epsilon$ is initialized at $1$ and is linearly annealed to $0$ throughout the training. Rewards are not discounted, i.e., $\gamma=1$.

\paragraph{Results.}

\begin{figure}[t]
\begin{center}
\begin{subfigure}{0.9\textwidth}
    \centering
    \includegraphics[width=\linewidth]{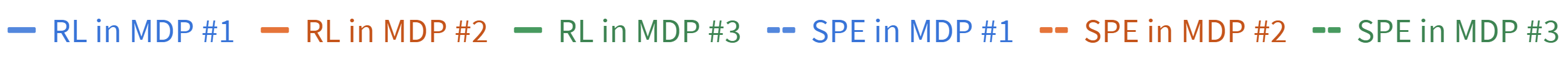}
\end{subfigure}

\begin{subfigure}{0.32\textwidth}
    \centering
    \includegraphics[width=\linewidth]{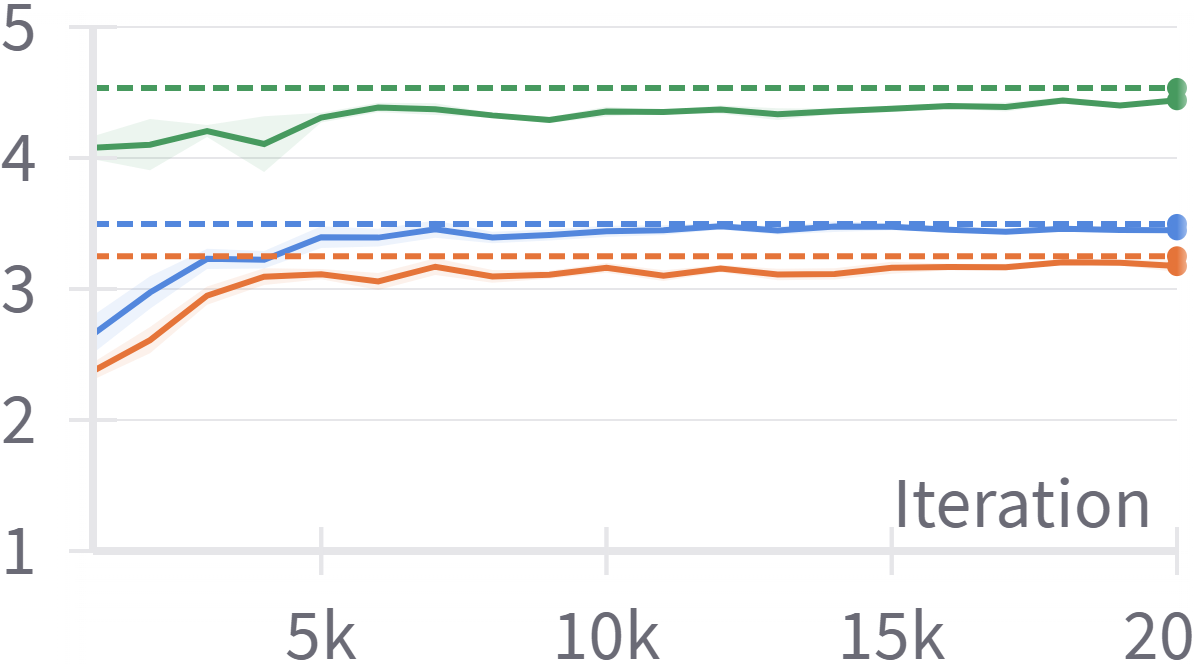}
    \caption{Principal's utility
    \label{fig:single_agent_principal}}
\end{subfigure}
\hfill
\begin{subfigure}{0.32\textwidth}
    \centering
    \includegraphics[width=\linewidth]{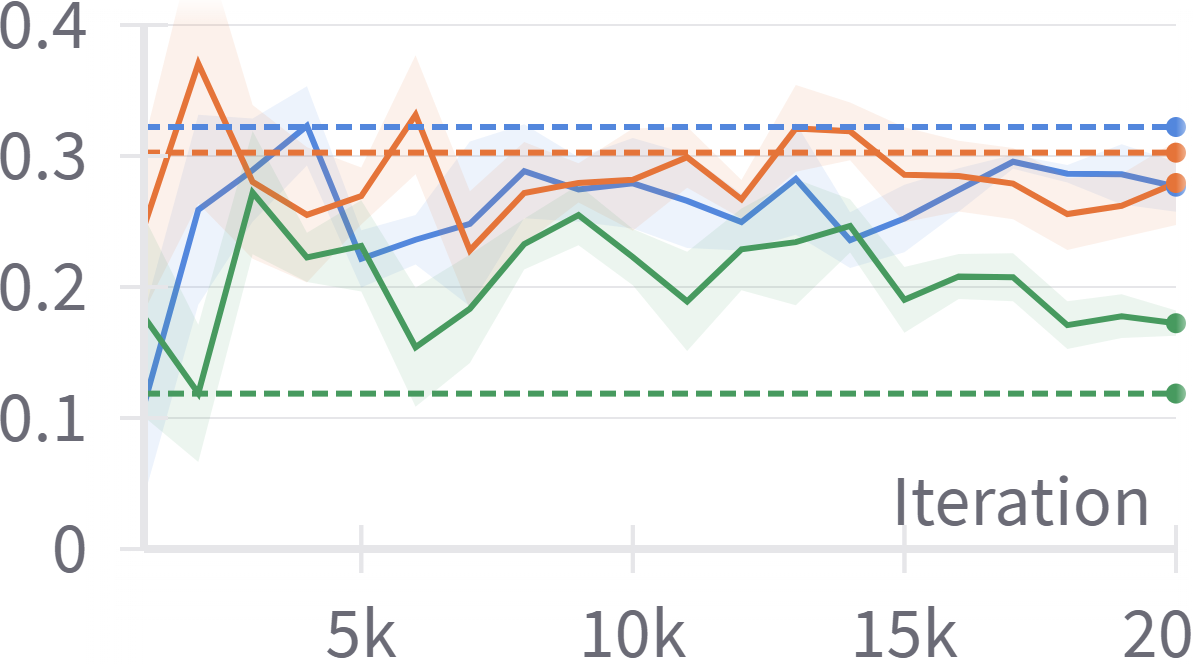}
    \caption{Agent's utility
    \label{fig:single_agent_agent}}
\end{subfigure}%
\hfill
\begin{subfigure}{0.32\textwidth}
    \centering
    \includegraphics[width=\linewidth]{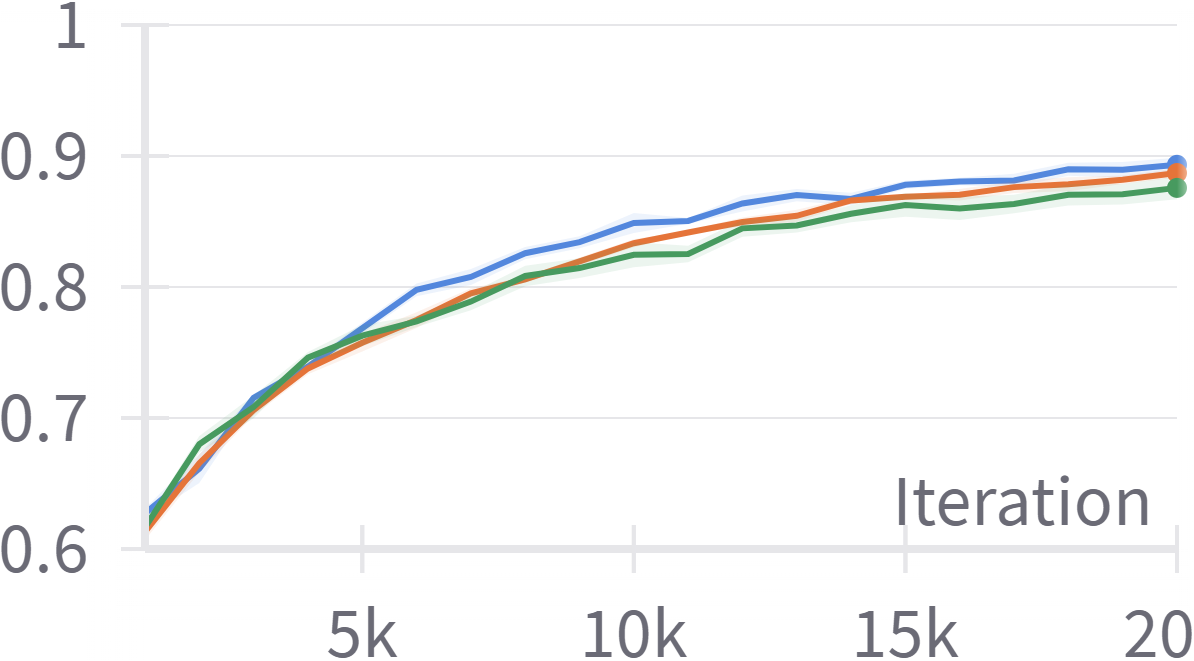}
    \caption{Accuracy of Principal Action
    \label{fig:single_agent_acc}}
\end{subfigure}
\caption{Results in Tree MDPs. Solid lines are learning curves of DQNs trained with Algorithm~\ref{alg:single_agent}; the utilities are computed by coupling the learned principal's policy with a best-responding oracle agent throughout the training. Dashed lines represent optimal utilities in SPE obtained with dynamic programming. Different colors represent three distinct instances of the tree environment. For each, we use five trials of the algorithm  (shaded regions represent standard errors).
\label{fig:single_agent}}
\end{center}
\end{figure}

We generate three instances of the Tree MDP, each with randomly sampled reward functions, and run five trials on each Tree MDP. For each trial, we employ our two-phase setup from \cref{sec:learning}. The training phase is performed with a deep Q-learning implementation of the meta-algorithm, \cref{alg:single_agent}. Rather than after the training phase, we perform validation with the oracle agent (that precisely best-responds) throughout the training, and report the players' utilities in Figure~\ref{fig:single_agent_principal} and~\ref{fig:single_agent_agent}. We also compare the policy that the principal incentivizes with the optimal in \cref{fig:single_agent_acc}.

From Figure~\ref{fig:single_agent_principal}, we observe that our algorithm attains
a principal's utility of just $2\%$ below the optimal utility in SPE.
 On the other hand, Figure~\ref{fig:single_agent_agent} shows that the agent's utility can 
 exhibit similar (in absolute terms) deviations from SPE in either direction. The 'accuracy' metric in Figure~\ref{fig:single_agent_acc} indicates that the principal recommends the action prescribed by the optimal policy in approximately $90\%$ of the states.

The small underperformance in the principal's utility and the biases in the agent's utility can be partially attributed to the $10\%$ of the non-optimal principal's actions.
That around $90\%$ of correct actions amount to around $98\%$ of the principal's utility suggests errors likely occur in rarely visited states or states where actions result in similar payoffs.
The minor discrepancies in utility, coupled with the learning dynamics, underscore the complex interplay between the principal and the agent as they adapt to each other throughout
training.

\subsection{Experiments in the Coin Game}\label{app:implement_multi}

\paragraph{Implementation details.}
Algorithm \ref{alg:multi_agent} describes our Deep Q-learning-based implementation of Algorithm \ref{alg:both} for the multi-agent MDPs with self-interested agents, often referred to as ``sequential social dilemmas''. The experimental procedure consists of two phases: training and validation.

In the training phase, for each agent $i$, the principal's objective is to find a policy to recommend by learning the contractual Q-function, $q_i^\theta$, as well as its minimal implementation by learning the agent's Q-function in the absence of the principal, $Q_i^\phi$. In Section \ref{sec:multi_agent_model}, we additionally require the minimal implementation to be in dominant strategies. Learning such an implementation requires conditioning $Q_i^\phi$ on the inter-agent policies $\boldsymbol{\pi}_{-i}$, which is intractable.
As a tractable approximation, given that rational agents will only deviate from recommendations if their expected payoffs are not compromised, we simplify each agent's strategy space to two primary strategies: cooperation (following the principal's recommendations) or rational deviation (default optimal policy disregarding contracts, which gives the same utility). We encapsulate this behavior in a binary variable $f_i \in \{0, 1\}$, indicating whether agent $i$ follows the recommendation $a^p$. Then, $Q_i^\phi$ is conditioned on the joint $\mathbf{f}_{-i}$ variable of the other agents, indicating both the immediate outcome and the agents' future behavior. The efficacy of this simplification is validated through our experimental results.

During the training phase, we randomly sample $f_i$ for each agent at the beginning of an episode. For the episode's remainder, each agent either follows the recommendations given by $q_i^\theta$ (if $f_i=1$) or acts selfishly according to $Q_i^\phi$ (if $f_i=0$). The experience generated this way is then used to train both $\theta$ and $\phi$, enhancing exploration and covering the whole space of $\mathbf{f}$.

Given the principal obtained in the training phase, the subsequent validation phase independently trains selfish agents parameterized by $\psi_i$ from scratch in the modified environment. Specifically, each agent observes an action recommendation when computing Q-values, $Q_i^{\psi_i}((s, a_i^p), a_i)$, and is paid by the principal if the recommendation is followed. When estimating payments, $f_i$ simply indicates whether agent $i$ followed the recommendation. Other than these changes, we use the standard DQN training procedure. We also do not assume the principal's access to the agents' private information like Q-values or parameters.

\begin{figure}[t]
\begin{center}
\begin{subfigure}{0.95\textwidth}
    \centering
    \includegraphics[width=\linewidth]{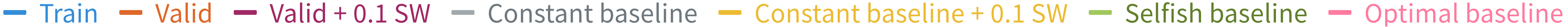}
\end{subfigure}

\begin{subfigure}{0.32\textwidth}
    \centering
    \includegraphics[width=\linewidth]{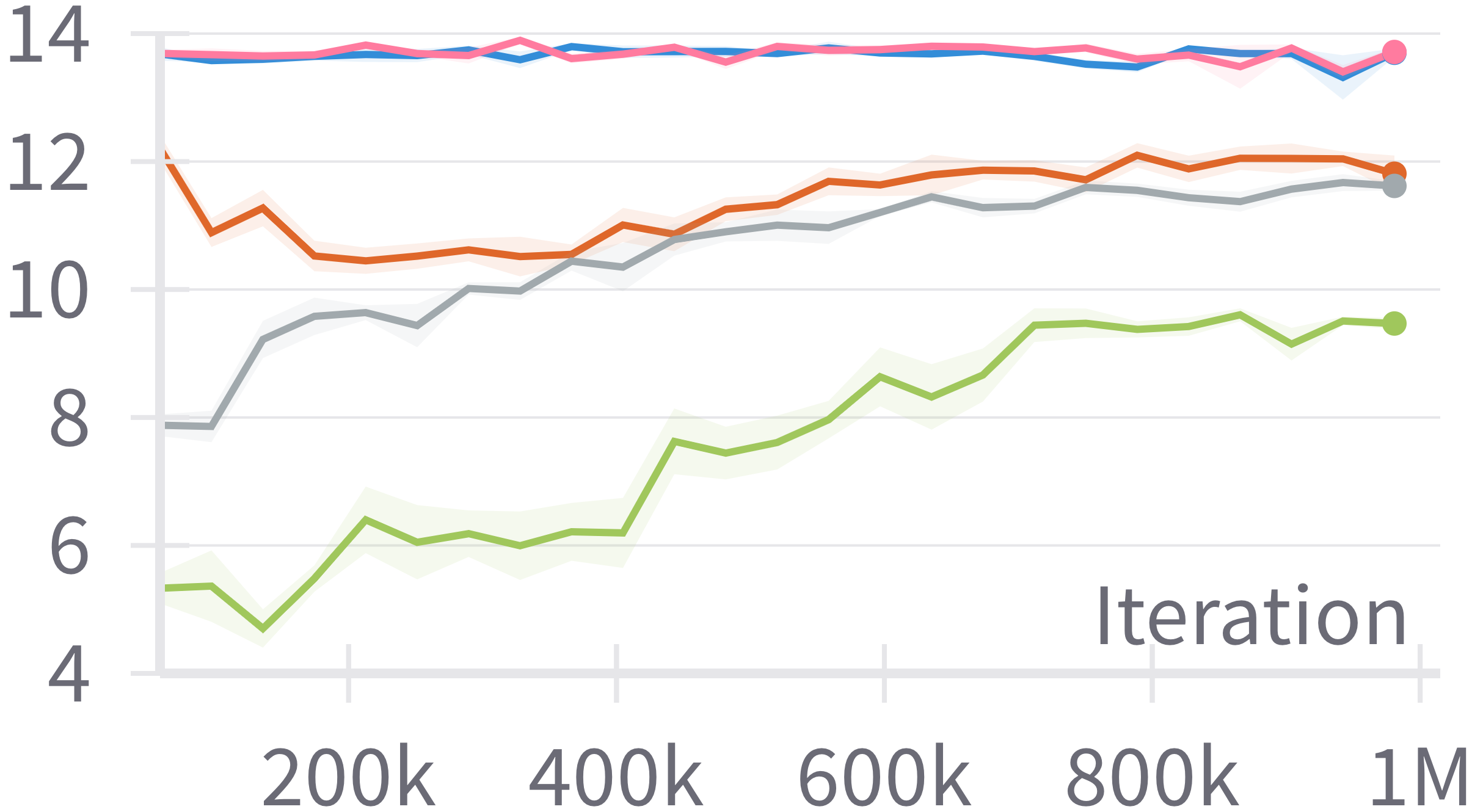}
    \caption{Social welfare}
    \label{fig:app_multi_agent_SW}
\end{subfigure}%
\hfill
\begin{subfigure}{0.32\textwidth}
    \centering
    \includegraphics[width=\linewidth]{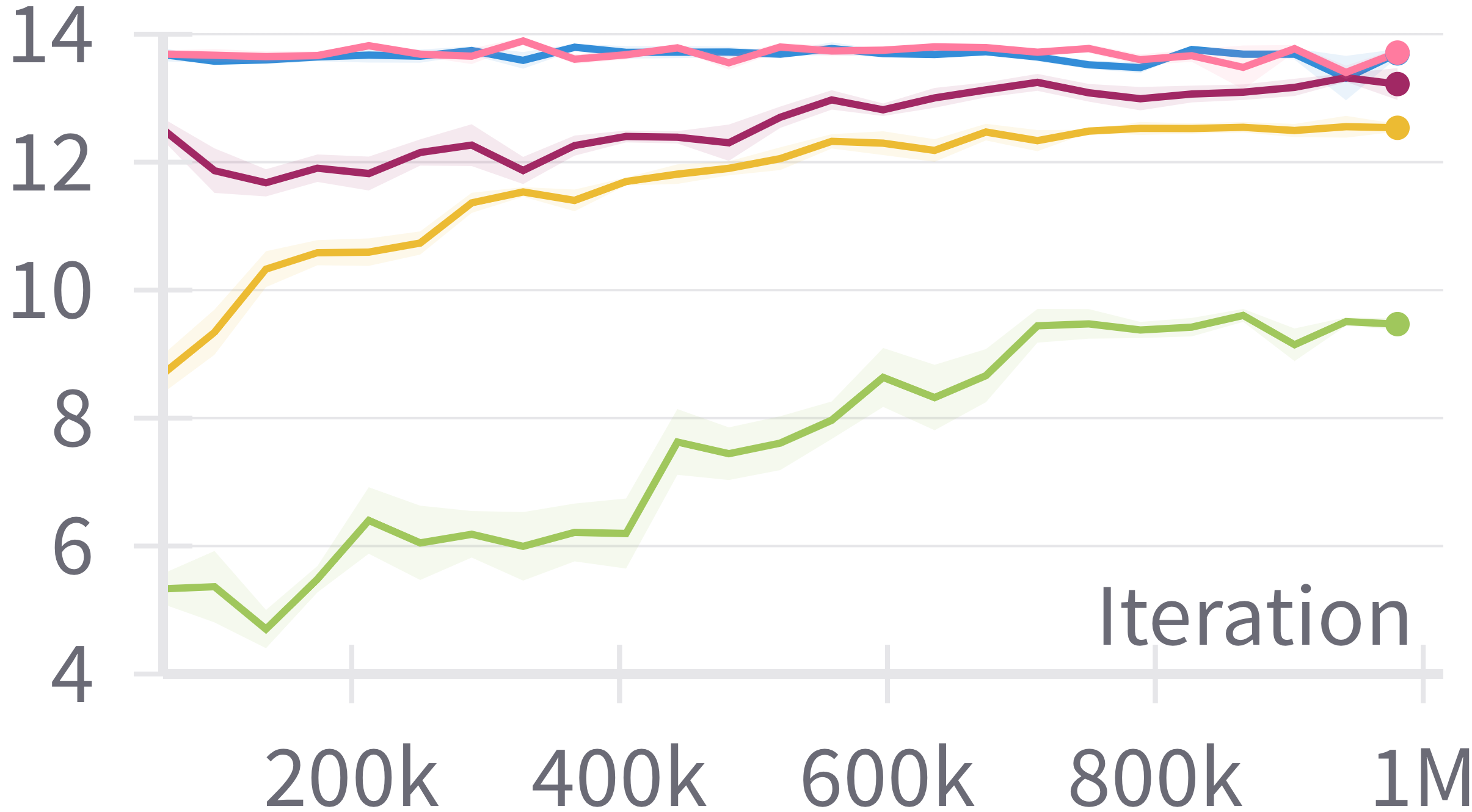}
    \caption{Social welfare, nudging}
    \label{fig:app_multi_agent_SW_extra}
\end{subfigure}%
\hfill
\begin{subfigure}{0.32\textwidth}
    \centering
    \includegraphics[width=\linewidth]{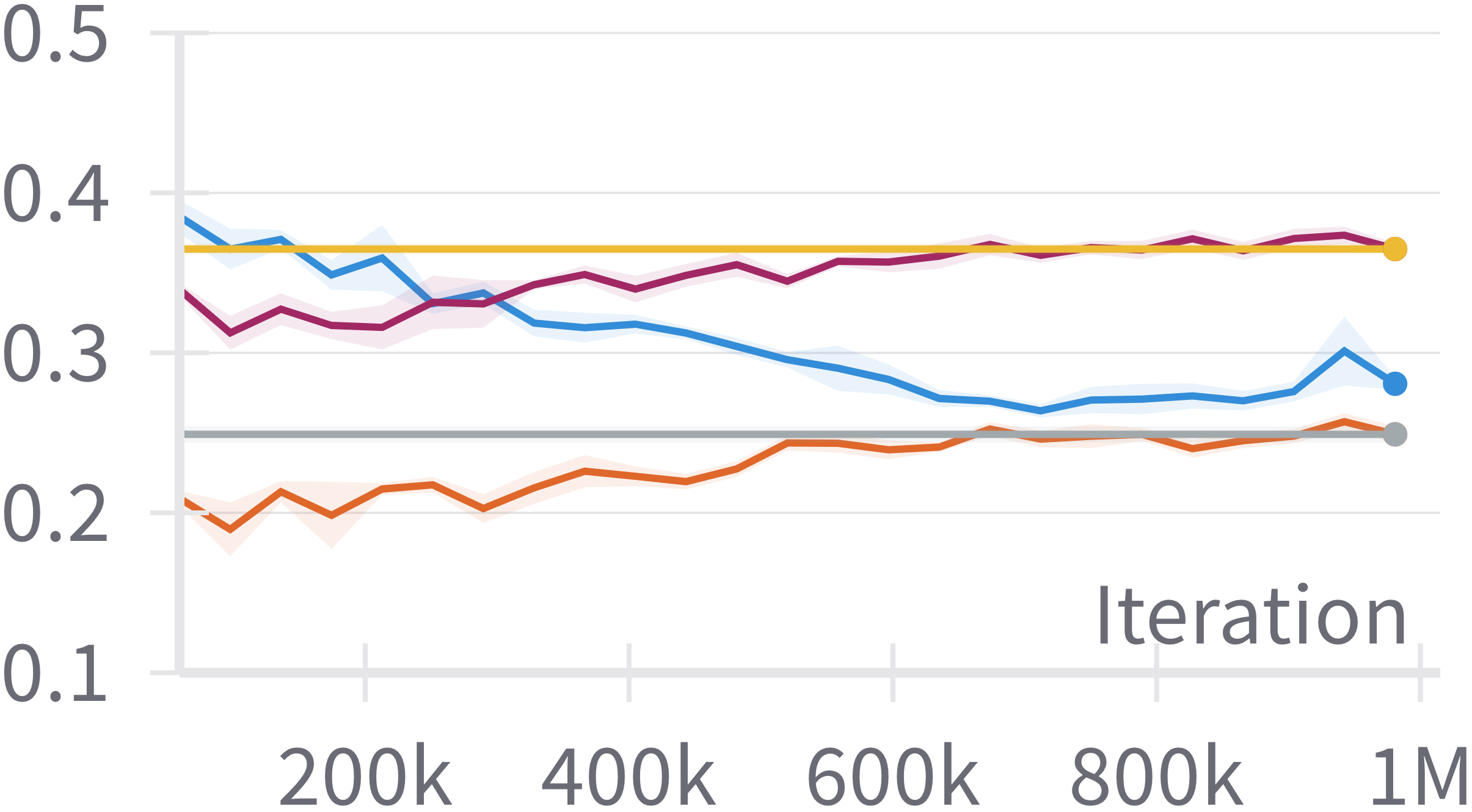}
    \caption{Proportion of social welfare paid}
    \label{fig:app_multi_agent_SW_prop}
\end{subfigure}

\hspace{10pt}

\begin{subfigure}{0.32\textwidth}
    \centering
    \includegraphics[width=\linewidth]{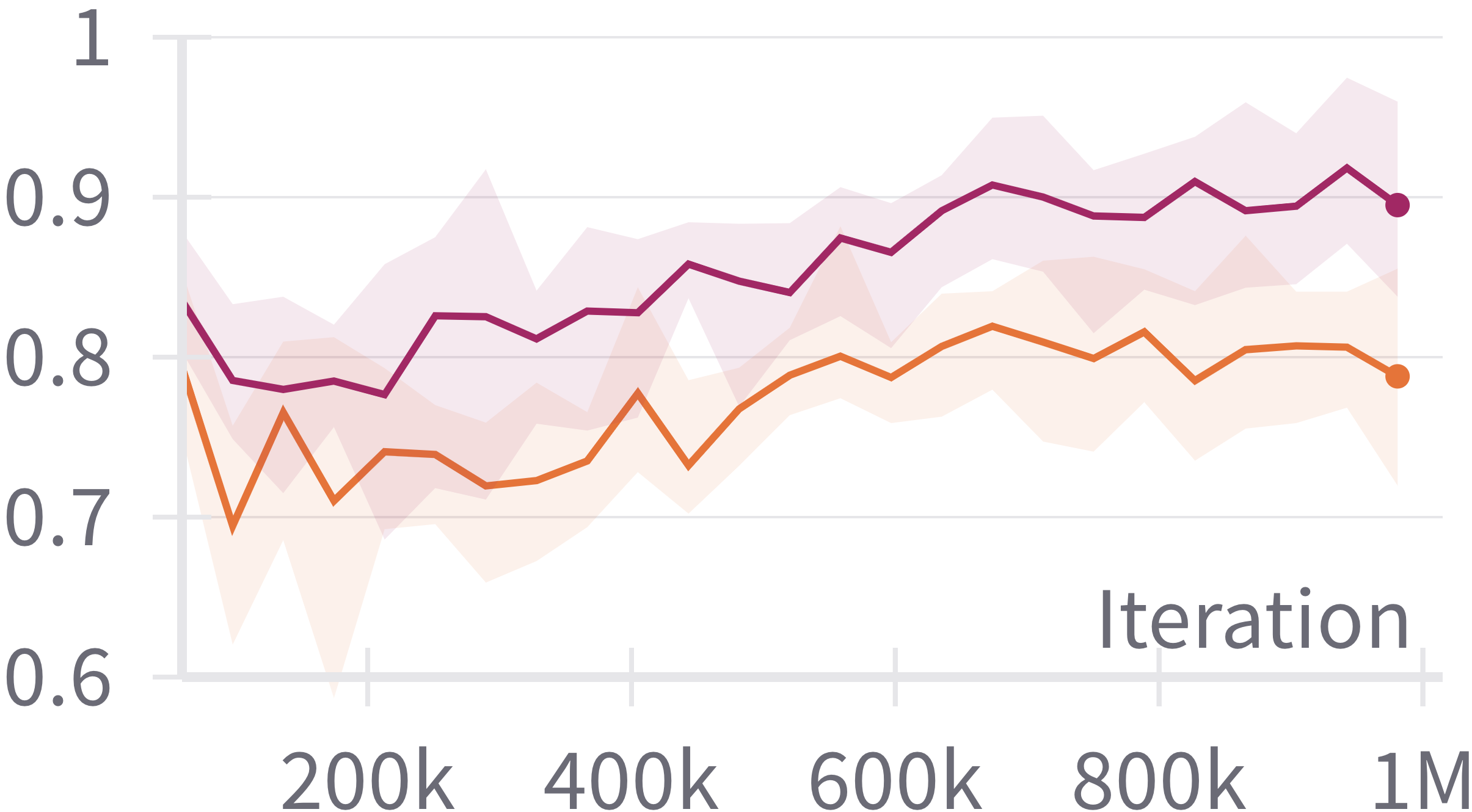}
    \caption{Accuracy}
    \label{fig:app_multi_agent_accuracy}
\end{subfigure}%
\hfill
\begin{subfigure}{0.32\textwidth}
    \centering
    \includegraphics[width=\linewidth]{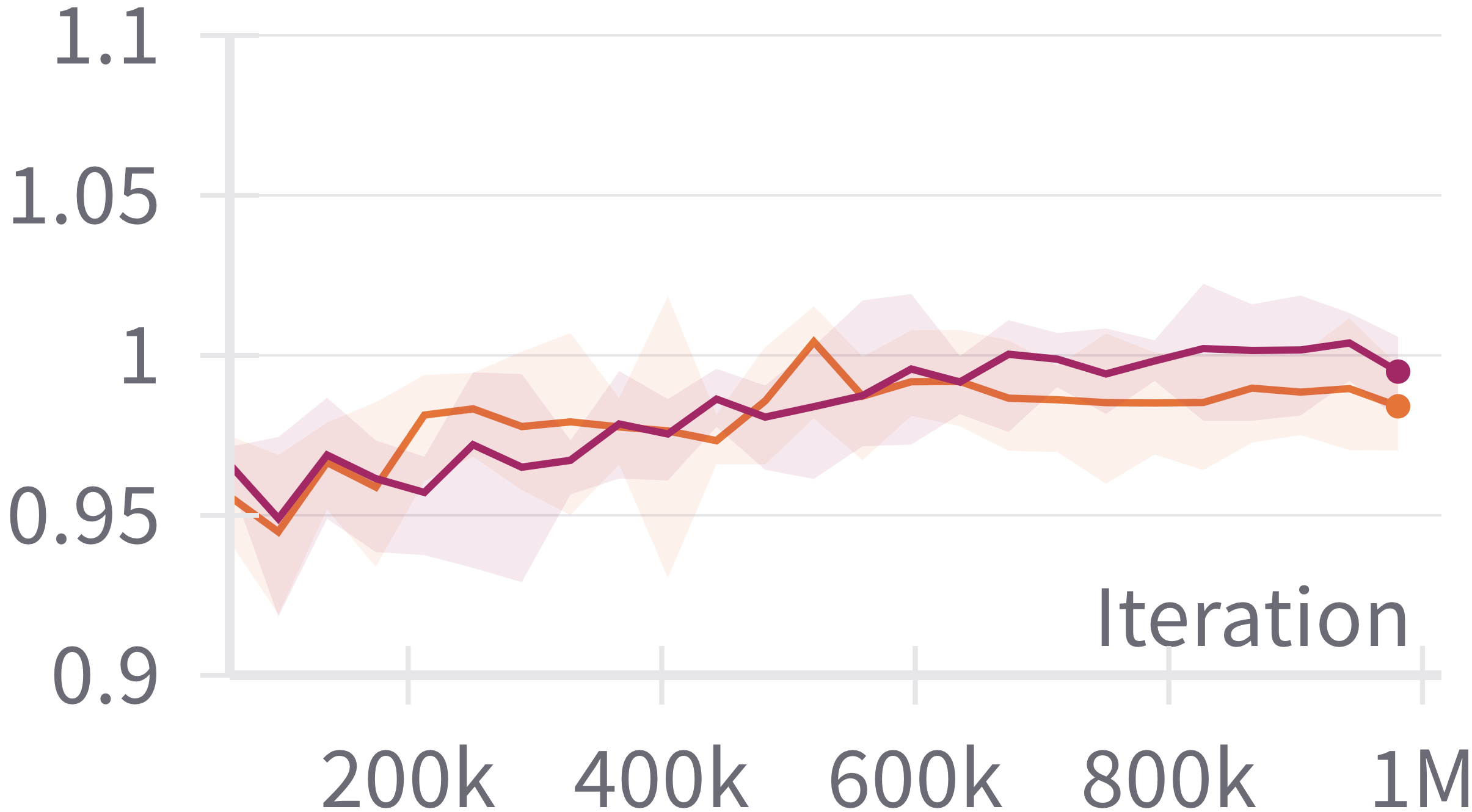}
    \caption{Convergence to SPE?}
    \label{fig:app_multi_agent_eq_prop}
\end{subfigure}%
\hfill
\begin{subfigure}{0.32\textwidth}
    \centering
    \includegraphics[width=\linewidth]{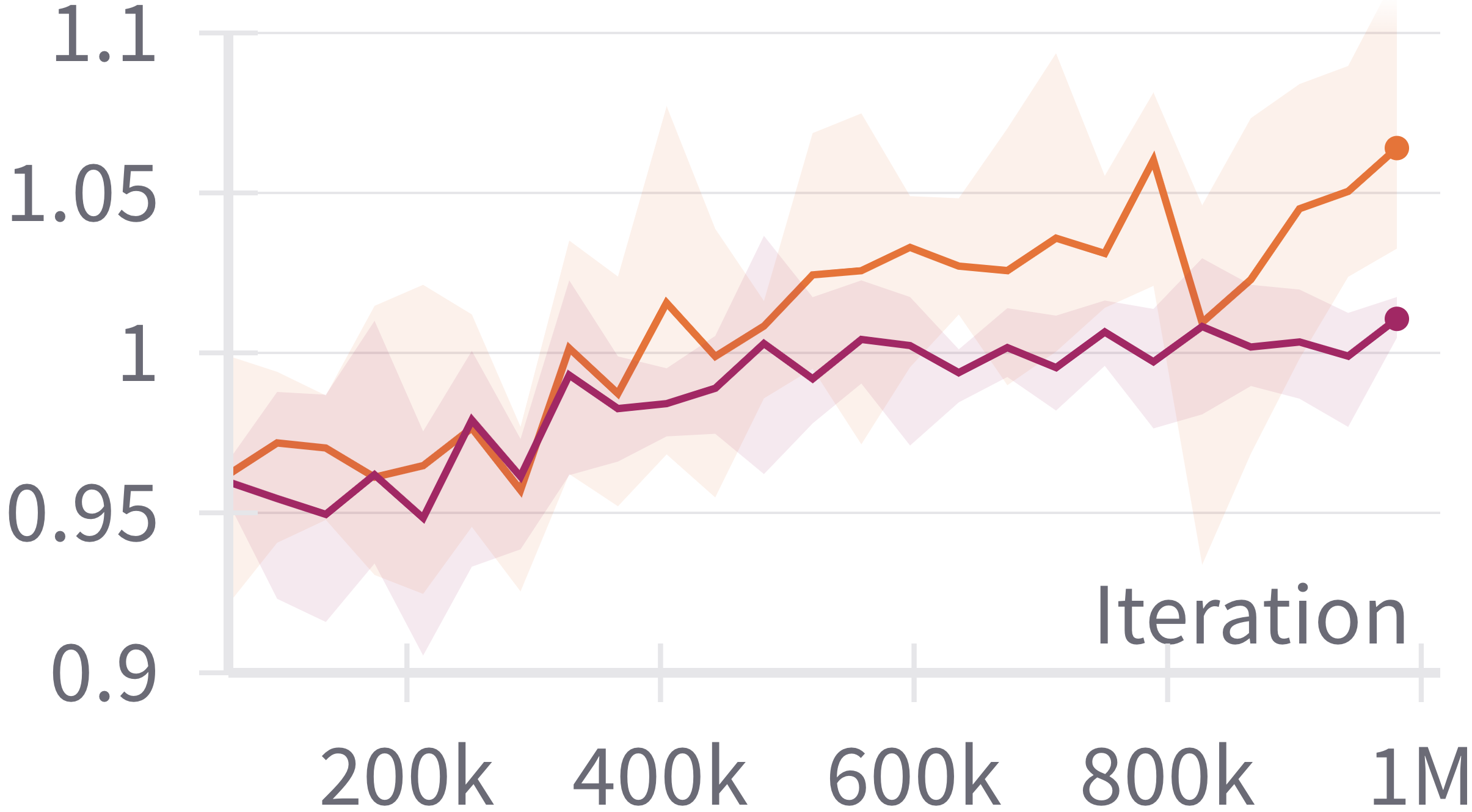}
    \caption{Incentive-Compatible contracts?}
    \label{fig:app_multi_agent_dominant_prop}
\end{subfigure}
\caption{Learning curves in the Coin Game with a $3 \times 3$ grid and each episode lasting for $20$ time steps. The structure is the same as in \cref{fig:multi_agent}, with the addition of the top middle plot.
The definition of the plots is as follows: a) total return of the two agents (without payments); b) same, but our algorithm and constant baseline additionally pay $10\%$ of social welfare to the agents; c) a ratio of the total payment by the principal to what the agents would effectively be paid if directly maximizing social welfare; d) the proportion of the principal's recommendations followed by the validation agents; r) a ratio between the utilities of an agent's policy at a given iteration and the recommended policy, with the opponent using the recommended policy; f) same ratio, but with the opponent's policy at a given iteration. Each experiment is repeated 5 times, and each measurement is averaged over 80 episodes. 
}
\label{fig:app_multi_agent}
\end{center}
\end{figure}

\paragraph{Hyperparameters.}
The neural networks consist of $1$ hidden convolutional layer followed by $2$ hidden fully connected layers and an output layer. The convolutions have $4 \times 4$ kernels and transform the initial $4$-channel states into $32$ channels. The fully connected layers consist of $64$ neurons. All hidden layers are followed by ReLU activations. The networks are trained for $1000000$ iterations, each iteration including a single environment interaction and a single gradient update on a mini-batch with $128$ transitions, sampled from the replay buffer. Every $100$ iterations, the target networks are updated by copying the parameters of the online networks. The learning rate is initialized at $0.0005$ and is exponentially annealed to $0.0001$ throughout the training. Similarly, the exploration rate $\epsilon$ is initialized at $0.4$ and is linearly annealed to $0$ throughout the training. The discounting factor is set at $\gamma=0.99$. For more efficient training, we use prioritized replay buffers \citep{schaul2015prioritized} with a maximum size of $100000$, $\alpha=0.4$, $\beta=0$, and $\epsilon=10^{-7}$.

\paragraph{Additional results.}

In \cref{fig:app_multi_agent}, we report results in the Coin Game with a $3 \times 3$ grid and each episode lasting for $20$ time steps. This is a simpler environment than the one in the main text, as the state space and the horizon are smaller.

During training, our algorithm finds an approximately optimal joint policy (Fig. \ref{fig:app_multi_agent_SW}) and estimates that about $30\%$ of social welfare is sufficient for an IC implementation (Fig. \ref{fig:app_multi_agent_SW_prop}). Interestingly and contrary to our previous results, the validation phase does not confirm this: we observe that the validation agents fall short of the optimal performance, as well as that our algorithm does not outperform the constant proportion baseline (Fig. \ref{fig:app_multi_agent_SW}). On the one hand, we do not find evidence that it does not converge to SPE, as in \cref{fig:app_multi_agent_eq_prop}, the utility ratio hovers around $1$. On the other hand, a test on incentive compatibility (Fig. \ref{fig:app_multi_agent_dominant_prop}) reveals that the validation agents consistently find policies that perform $5\%$ to $10\%$ better against the opponent DQNs, meaning that the principal fails to learn IC contracts. For more information on these tests, see the discussion in \cref{sec:multi_agent_experiments} We conjecture that this negative result is due to using the approximation of IC through $f_i$ variables during training, as described in the implementation details. 

As a remedy, we implement a form of nudging intended to counteract the approximation errors, as discussed in \cref{app:error}. Specifically, we increase the offered payments in each state by $10\%$ of the social welfare; we increase the proportion of the constant baseline accordingly. The effect of this modification is illustrated in Figure \ref{fig:app_multi_agent_SW_extra}: the performance of our algorithm reaches that of the optimal baseline, whereas the constant baseline still falls short. Furthermore, the IC property appears fixed as the utility ratio decreases to around $1$ (Fig. \ref{fig:app_multi_agent_dominant_prop}). The accuracy metric also improves, raising the frequency of agents following the principal's recommendation from around $80\%$ to around $90\%$.

Overall, we believe these results to be positive: our algorithm falls short somewhat predictably given the practical approximation of IC (and the tie-breaking assumption), and still manages to outperform the constant baseline while paying the same.
As a side note, these results also showcase the usefulness of our two-step experimental procedure, as opposed to the usual performance comparison during the training phase, which does not reveal the hidden issues.

\subsection{Compute}\label{app:implement_compute}

All experiments were run on a desktop PC with 16 GB RAM, 11th Gen Intel(R) Core i5-11600KF @3.90GHz processor, and NVIDIA GeForce RTX 2060 GPU. The single-agent experiments were run using only CPU, and the multi-agent experiments were run using GPU to store and train neural networks and CPU for everything else. Each run (one algorithm, one random trial) takes thirty minutes to an hour. The open-source code packages we use are PyTorch (BSD-style license) \citep{paszke2019pytorch}, RLlib (Apache license) \citep{liang2018rllib}, Stable-Baselines3 (MIT license) \citep{stable-baselines3}, Gymnasium (MIT license) \citep{towers_gymnasium_2023}, and W\&B  (MIT license) \citep{wandb}.

\newpage
\begin{algorithm}[H]
\caption{
A deep Q-learning implementation of \cref{alg:both} in the multi-agent setting}\label{alg:multi_agent}
\begin{algorithmic}[1]

\Statex \textbf{TRAINING PHASE}
\Require principal's Q-network $\theta$, agents' Q-network $\phi$, target networks $\theta'$, $\phi'$, replay buffer $RB$
\State Initialize buffer $RB$ with random transitions, networks $\theta$ and $\phi$ with random parameters
\State Sample a binary vector $\mathbf{f} = (f_i)_{i \in N}$ 
\Comment $f_i$ indicates if $i$ follows recommendations this episode
\For{\texttt{number of updates}} \Comment Training loop
    \For{\texttt{number of interactions per update}} 
    \Comment Interact with the MDP
        \For{$i \in N: f_i = 1$}:
        \Comment{The principal acts for these agents}
            \State Select a recommended action $a_i^p$ with $q_i^\theta(s, a_i^p)$ via $\epsilon$-greedy, set $a_i = a_i^p$
        \EndFor
        \For{$i \in N: f_i = 0$}:
        \Comment{These agents act selfishly ignoring payments}
            \State Select an agent's action $a_i$ with $Q_i^\phi((s, \mathbf{f}_{-i}), a_i)$ via $\epsilon$-greedy
        \EndFor
        \For{$i \in N$}:
        \Comment Get rewards from the environment
            \State $r_i = r_i(s, \textbf{a})$
        \EndFor
        \State Transition the MDP to $s' \sim \mathcal{T}(s, \mathbf{a})$
        \State Add the transition $(s, \mathbf{a}, \mathbf{f}, \mathbf{r}, s')$ to the buffer $RB$
        \State If $d=1$, reset the MDP and resample $\mathbf{f}$
    \EndFor
    \State Sample a mini-batch of transitions $mb \sim RB$
    \For{$(s, \mathbf{a}, \mathbf{f}, \mathbf{r}, s') \in mb$}
    \Comment Estimate target variables to update $\theta$ and $\phi$
        \For{$i \in N$}
            \State $b_i^*(s, a_i) = \max_{a} Q_i^\phi((s, \mathbf{1}), a) - Q_i^\phi((s, \mathbf{1}), a_i)$ 
            \Comment As if $\textbf{a}$ were recommended
            \State $y_i^p(s, a_i) =  r_i - \alpha b_i^*(s, a_i) + \gamma \max_{a_i'} q_i^{\theta'}(s', a_i')$ \Comment we set $\alpha = 0.1$
            \State $y_i^a(s, a_i) = r_i + \gamma \max_{a_i'} Q^{\phi'}((s', \mathbf{f}_{-i}), a_i')$ 
        \EndFor
    \EndFor
    \State Minimize $L(\theta) = \sum_{mb} \left(\sum_i q_i^\theta(s,  a_i) - \sum_i y_i^p(s, a_i)\right)^2$  
    \Comment Update $\theta$ as VDN
    \State Minimize $L(\phi) = \sum_{mb} \sum_i \left(Q_i^\phi((s, \mathbf{f}_{-i}), a_i) - y_i^a(s, a_i)\right)^2$  
    \Comment Update $\phi$ as DQN
\EndFor

\Statex \textbf{VALIDATION PHASE}
\Require validation agents' Q-networks $(\psi_i)_{i \in N}$, target networks $(\psi_i')$, replay buffer $RB$
\State Initialize buffer $RB$ with random transitions, networks $\psi_i$ with random parameters
\For{\texttt{number of updates}} \Comment Training loop
    \For{\texttt{number of interactions per update}} 
    \Comment Interact with the MDP
        \For{$i \in N$}
            \Comment{Agents act selfishly}
            \State $a_i^p = \argmax_a q_i^\theta(s, a)$
            \Comment Recommended action by $\theta$
            \State Select an agent's action $a_i$ with $Q_i^{\psi_i}((s, a_i^p), a_i)$ via epsilon-greedy
            \State Set $f_i=1$ if $a_i = a_i^p$ else $f_i = 0$
            \Comment $f_i$ indicates if $i$ followed recommendation in $s$
        \EndFor
        \For{$i \in N$}:
        \Comment Get total rewards
            \State $b_i^*(s, a_i^p) = \max_{a} Q_i^\phi((s, \mathbf{f}_{-i}), a) - Q_i^\phi((s, \mathbf{f}_{-i}), a_i^p)$ 
            \State $R_i = r_i(s, \textbf{a}) + f_i b_i^*(s, a_i^p)$
            \Comment Agent $i$ is only paid if $f_i = 1$
        \EndFor
        \State Transition the MDP to $s' \sim \mathcal{T}(s, \mathbf{a})$
        \State Add the transition $(s, \mathbf{a}, \mathbf{R}, s')$ to the buffer $RB$
        \State If $d=1$, reset the MDP
    \EndFor
    \State Sample a mini-batch of transitions $mb \sim RB$
    \For{$i \in N$}
    \Comment{Independently update each Q-network}
        \For{$(s, a_i, R_i, s') \in mb$}
        \Comment Estimate target variables to update $\psi_i$
            \State $a_i^{p \prime} = \argmax_{a'} q_i^\theta(s', a')$ 
            \State $y_i(s, a_i) =  R_i + \gamma \max_{a_i'} Q_i^{\psi_i'}((s', a_i^{p \prime}), a_i')$ 
        \EndFor
        \State Minimize $L(\psi_i) = \sum_{mb} \left( Q_i^{\psi_i}((s, a_i^p),  a_i) - y_i(s, a_i)\right)^2$
        \Comment Update $\psi_i$ as DQN
    \EndFor
\EndFor

\end{algorithmic}
\end{algorithm}


\end{document}